\renewcommand{\algorithmcfname}{ALGORITHM}
\numberwithin{equation}{section}
\newenvironment{OneLiners}[1][\ensuremath{\bullet}]
    {\begin{list}
        {#1}
        {\setlength{\itemsep}{0pt}
        \setlength{\parsep}{0pt}
          \setlength{\topsep }{0pt}
        }}
    {\end{list}}
\newcommand{\separator}{
  \begin{center}
    \rule{\columnwidth}{0.3mm}
  \end{center}
}
\newcommand{\beq}{\begin{eqnarray*}}
\newcommand{\eeq}{\end{eqnarray*}}
\newcommand{\beqn}{\begin{eqnarray}}
\newcommand{\eeqn}{\end{eqnarray}}
\newcommand{\bemn}{\begin{multiline}}
\newcommand{\eemn}{\end{multiline}}
\def\cM{{\mathcal M}}
\def\cA{{\mathcal A}}
\def\cH{{\mathcal H}}
\def\cE{{\mathcal E}}
\def\N{\mathbb{N}}
\def\R{\mathbb{R}}
\def\cP{{\mathcal P}}
\def\EE{{\mathbb E}}
\DeclarePairedDelimiter\ceil{\lceil}{\rceil}
\newcommand\defeq{\mathrel{:=}}
\newcommand{\muhat}{{\hat{\mu}}}
\newcommand{\term}[1]{\ensuremath{\mathtt{#1}}\xspace}
\newcommand{\BSL}{\term{BSL}}
\newcommand{\hist}{\term{hist}} 
\newcommand{\pfail}{p_{\term{fail}}} 
\newcommand{\qfail}{q_{\term{fail}}} 
\newcommand{\sampfail}{\term{Fail}}  
\newcommand{\clean}{\term{Clean}}
\newcommand{\indx}{\term{Ind}}
\newcommand{\ucb}[1][\eta]{\term{UCB}^{#1}}
\newcommand{\lcb}[1][\eta]{\term{LCB}^{#1}}
\newcommand{\regret}{\textnormal{Regret}}
\newcommand{\etamax}{\eta_{\max}} 
\newcommand{\tape}{\term{Tape}}
\newcommand{\ucbTape}[1][\eta]{\term{UCB}^{\term{tape},\,#1}}
\newcommand{\lcbTape}[1][\eta]{\term{LCB}^{\term{tape},\,#1}}
\newcommand{\muTape}{\widehat{\mu}^{\term{tape}}}
\newcommand{\setnot}[1]{\left\{ #1 \right\}}
\renewcommand{\paragraph}[1]{\smallskip \noindent{\bf #1.} }
\newcommand{\EXP}{\mathbb{E}}
\newcommand{\PROB}{\mathrm{Pr}}
\DeclareMathOperator*{\argmax}{argmax}
\renewcommand{\Pr}[1]{{\PROB \sbr{#1}}}
\newcommand{\Ex}[1]{{\EXP \sbr{#1}}}
\newcommand{\eps}{\varepsilon}
\newcommand{\F}{\mathcal{F}}
\newcommand{\bernoul}{{\text{Bernoulli}}}
    \newcommand{\ie}{{\em i.e.,~\xspace}}
    \newcommand{\eg}{{\em e.g.,~\xspace}}
    \newcommand{\Eg}{{\em E.g.,~\xspace}}
    \newcommand{\rbr}[1]{\left(\,#1\,\right)}
    \newcommand{\abs}[1]{\left|\,#1\,\right|}
    \newcommand{\sbr}[1]{\left[\,#1\,\right]}
    \newcommand{\cbr}[1]{\left\{\,#1\,\right\}}
    \newcommand{\xhdr}[1]{\vspace{2mm} \noindent{\bf #1}}
    \newcommand{\EqComment}[1]{\text{\emph{(#1)}}}
    \newcommand{\LDOTS}{\, ,\ \ldots\ ,}     
    \newcommand{\geqfosd}{\succeq_{\term{fosd}}}
    \newcommand{\dd}{\;\mathtt{d}}
    \renewcommand{\eqref}[1]{(\ref{#1})}
    \renewcommand{\refeq}[1]{Eq.~(\ref{#1})}
    \newcommand{\pgap}{\Delta_{\cP}} 
    \newcommand{\GoodSmall}{\cE_{1}^{\mathtt{UB}}} 
    \newcommand{\BadLarge}[1][2]{\cE_{#1}^{\mathtt{LB}}}  
    \newcommand{\minPDF}{\cP_{\min}} 
\title{Bandit Social Learning: Exploration under Myopic Behavior%
\footnote{
A preliminary version of this paper has been published in NeurIPS 2023, titled ``Bandit Social Learning under Myopic Behavior". Since Nov'23, this paper features several new results compared to the NeurIPS version. Specifically, we added Section~\ref{sec:K} (on $K\geq 2$ arms) and Section~\ref{sec:expts} (simulations), generalized Corollary~\ref{cor:BG} from independent to correlated priors, and strengthened the main ``negative" guarantees (in Section~\ref{sec:fail}).

\vspace{2mm}\indent
Early versions of our results on the greedy algorithm (Corollary~\ref{cor:LB-unbiased} and Theorem~\ref{thm:BG}) have been available in a book chapter by A. Slivkins \citep[Ch. 11]{slivkins-MABbook}. The authors acknowledge Mark Sellke for proving Theorem~\ref{thm:BG} and suggesting a proof plan for a version of Corollary~\ref{cor:LB-unbiased}. The authors are grateful to Mark Sellke and Chara Podimata for brief collaborations (with A. Slivkins) in the initial stages of this project. \vspace{2mm}
}}
  \newcommand{\country}[1]{#1.}
  \newcommand{\city}[1]{#1,}
  \newcommand{\institution}[1]{#1,}
  \newcommand{\email}[1]{Email: \texttt{#1}.}
  \newcommand{\affiliation}{\thanks}
\author{
 {Kiarash Banihashem
 \affiliation{
   \institution{University of Maryland at College Park}
   \city{College Park, MD}
   \country{USA}
	 {Supported by DARPA QuICC NSF and AF:Small \#2218678, \#2114269.}
 \email{\{kiarash,hajiagha,suhoshin\}@umd.edu}
 }}
 \and
 MohammadTaghi Hajiaghayi\footnotemark[2]
 \and
 Suho Shin\footnotemark[2]
 \and
 Aleksandrs Slivkins
 \affiliation{
   \institution{Microsoft Research NYC}
   \city{New York, NY}
   \country{USA}
 \email{slivkins@microsoft.com}
 }
}
\date{First version: February 2023\\This version: April 2025}
\begin{document}

\maketitle

\begin{abstract}
We study social learning dynamics motivated by reviews on online platforms. The agents collectively follow a simple multi-armed bandit protocol, but each agent acts myopically, without regards to exploration. We allow the greedy (exploitation-only) algorithm, as well as a wide range of behavioral biases. Specifically, we allow myopic behaviors that are consistent with (parameterized) confidence intervals for the arms' expected rewards. We derive stark learning failures for any such behavior, and provide matching positive results. The learning-failure results extend to Bayesian agents and Bayesian bandit environments.

In particular, we obtain general, quantitatively strong results on failure of the greedy bandit algorithm, both for ``frequentist" and ``Bayesian" versions. Failure results known previously are quantitatively weak, and either trivial or very specialized. Thus, we provide a theoretical foundation for designing non-trivial bandit algorithms, \ie algorithms that intentionally explore, which has been missing from the literature.

Our general behavioral model can be interpreted as agents' optimism or pessimism. The matching positive results entail a maximal allowed amount of optimism. Moreover, we find that no amount of pessimism helps against the learning failures, whereas even a small-but-constant fraction of extreme optimists avoids the failures and leads to near-optimal regret rates.


\end{abstract}

\newpage

\addtocontents{toc}{\protect\setcounter{tocdepth}{2}}
\tableofcontents

\newpage

\section{Introduction}
\label{sec:intro}


Reviews and ratings are pervasive in many online platforms. A customer consults reviews/ratings, then chooses a product and then (often) leaves feedback, which is aggregated by the platform and served to future customers. Collectively, customers face a tradeoff between \emph{exploration} and \emph{exploitation}, \ie between acquiring new information while making potentially suboptimal decisions and making optimal decisions using available information. However, individual customers tend to act myopically and favor exploitation, without regards to exploration for the sake of the others. Thus, we have a variant of social learning under exploration-exploitation tradeoff. On a high level, we ask \textbf{whether/how the myopic behavior interferes with efficient exploration}.
We are particularly interested in \emph{learning failures} when only a few agents choose an optimal action.%
\footnote{A weaker form of this phenomenon, when with positive probability an optimal action is chosen only finitely often under an infinite time horizon, is known as \emph{incomplete learning}.}

Taking a step back, exploration-exploitation tradeoff is fundamental in the study of sequential decision-making, and the central issue in the popular framework of \emph{multi-armed bandits}~\citep{slivkins-MABbook,LS19bandit-book}.
%
However, intentional exploration can be problematic when an algorithm interacts with human users, as it imposes an arguably unfair burden on the current user for the sake of the future users.
Also, exploration adds complexity to algorithm/system design and necessitates substantial buy-in and engineering support for adoption in practice  \citep{MWT-WhitePaper-2016,DS-arxiv}.
The \emph{greedy algorithm}, which exploits known information at every step without any intentional exploration, sidesteps these issues and aligns well with customer incentives in the social-learning scenario described above.

The greedy algorithm is widely believed to perform poorly. Accordingly, the huge literature on multi-armed bandits overwhelmingly focuses on intentional exploration. A key motivation for this comes from the following simple argument. Consider Bernoulli $K$-armed bandits, where the reward of each of the $K$ arms follows an independent Bernoulli distribution with a fixed mean (and nothing else is known). Then the greedy algorithm, initialized with some samples of each arm, never tries the ``good" arm if all initial samples of this arm return $0$, and at least one sample of some other arm returns $1$. So, we have a \emph{learning failure} -- convergence on a wrong arm -- which happens with a positive-constant probability over the randomness in the initial samples.

However, we argue that our understanding of the greedy algorithm is very incomplete. Indeed, the failure probability in the above example is exponential in $N_0$, the number of initial samples. This is very weak for $N_0\gg 1$, and vacuous for $N_0\gg\log T$, where $T$ is the time horizon of interest. Other failure examples in the literature concern very specific one-dimensional linear structures, and do not characterize the failure probability.%
\footnote{See Related Work. In fact, the greedy algorithm is also known to perform well under some strong assumptions.}
Thus, while the greedy algorithm is believed to be inefficient in some strong and general sense, \emph{we do not know} whether this is the case, even for Bernoulli 2-armed bandits, and what should be the ``shape" of such results.%

Circling back to our social-learning scenario, we are interested in the greedy algorithm as well as a range of approximate-greedy behaviors. The greedy algorithm corresponds to rational behavior of the myopic customers, while the approximations thereof represent various \emph{behavioral biases} that the customers might have. The failure examples from prior work (however limited), do not apply at all to these approximate-greedy behaviors.




\xhdr{Our model: Bandit Social Learning (\BSL).}
We distill the social-learning scenario
down to its purest form, where the self-interested customers (henceforth, \emph{agents}) follow a simple multi-armed bandit protocol. The agents arrive sequentially and make one decision each. They
have two alternative products/experiences to choose from, termed \emph{arms}.%
\footnote{We focus on two arms unless specified otherwise; we consider $K>2$ arms in Section~\ref{sec:K}.}
Upon choosing an arm, the agent receives a reward: a Bernoulli random draw whose mean is specific to this arm and not known. The platform provides each agent with full history of the previous agents: the arms chosen and the rewards received.%
\footnote{In practice, online platforms provide summaries such as the average score and the number of samples.}
The agents do not observe any payoff-relevant signals prior to their decision, whether public or private.

When all agents are governed by a centralized algorithm, this setting is known as \emph{stochastic bandits}, a standard and well-understood variant of multi-armed bandits. The greedy algorithm chooses an arm with a largest empirical reward in each round.

Initial knowledge is available to all agents: a dataset with $N_0$ samples of each arm. This knowledge represents reports created outside of our model, \eg by ghost shoppers, influencers, paid reviewers, journalists, etc., and available before (or soon after) the products enter the market. While the actual reports may have a different format, they shape agents' initial beliefs. So, one could interpret our initial data-points as a simple ``frequentist" representation for these initial beliefs. Accordingly, parameter $N_0$ determines the ``strength" of the beliefs.

We allow a wide range of myopic behaviors that are consistent with available observations. Consider standard upper/lower confidence bounds for the reward of each arm: the sample average plus/minus the ``confidence term" that scales as a square root of the number of samples.
Each agent evaluates each arm to an \emph{index}: some number that is consistent with these confidence bounds (but could be arbitrary otherwise), and chooses an arm with a largest index.%
\footnote{Whether the agents explicitly compute the confidence bounds is irrelevant to our model.}
The confidence term is parameterized by some factor $\sqrt{\eta}\geq 0$ to ensure that the true mean reward lies between the confidence bounds with probability at least $1-e^{-2\eta}$. We call such agents \emph{$\eta$-confident}. We emphasize that $\eta$ is a parameter of the model, rather than something that can be adjusted.

This model subsumes the ``unbiased" behavior, when the index equals the sample average (corresponding to the greedy algorithm), as well as ``optimism" and ``pessimism", when the index is, resp., larger or smaller than the sample average.%
\footnote{To set the notation, \emph{$\eta$-optimistic} (resp., \emph{$\eta$-pessimistic}) agents set their index to the respective upper (resp., lower) confidence bound parameterized by $\eta$. They are $\eta'$-confident for any $\eta'\geq\eta$.}
Such optimism/pessimism can also be interpreted as risk preferences. The index can be randomized, so the less preferred arm might still be chosen with some probability. 
Further, an agent may be more optimistic about one arm than the other, and the amount of optimism / pessimism may depend on the previously observed rewards of either arm,
and different agents may exhibit different behaviours within the permitted range. A more detailed discussion of the permitted behaviors can be found in Sections~\ref{sec:related} and~\ref{sec:prelims-cases}. 

We target the regime when parameter $\eta$  is a constant relative to $T$, the number of agents, \ie the agents' population is characterized by a constant $\eta$. 
An extreme version of our model, with $\eta\sim\log(T)$, is only considered for intuition and sanity checks. Interestingly, this extreme version subsumes two well-known bandit algorithms:
UCB1 \citep{bandits-ucb1} and Thompson Sampling
\citep{Thompson-1933,TS-survey-FTML18}, which achieve optimal regret bounds.%
\footnote{In particular, UCB1 is simply $\eta$-optimism with $\eta\sim\log T$.}
 These algorithms exemplify two standard design paradigms in bandits and reinforcement learning: resp., optimism under uncertainty and posterior sampling.
They can also be seen as behaviors: resp., extreme optimism and \emph{probability matching} \citep{myers1976,vulkan2000}, a well-known randomized behavior. More ``moderate'' versions of these behaviors are consistent with $\eta$-confident agents.


\begin{table}[t]
\centering
\begin{tabular}{l|l|l}
\multicolumn{1}{c|}{Direction}
    & \multicolumn{1}{c|}{Behavior}
    & \multicolumn{1}{c}{Results}
\\[.5mm]\hline &&\\[-1.5ex]
negative& $\eta$-confident      & Thm.~\ref{thm:interval} (main),\\
        &                       & Thm.~\ref{thm:LB-ext} (small $N_0$). \\
        & unbiased/Greedy       & Cor.~\ref{cor:LB-unbiased} \\
        & $\eta_t$-pessimistic  & Thm.~\ref{thm:pessimis}
\\[.5mm]\hline &&\\[-1.5ex]
positive& $\eta$-optimistic     & Thm.~\ref{thm:upper} \\
        & $\eta_t$-optimistic, $\eta_t\in [\eta,\etamax]$
                                & Thm.~\ref{thm:upper-interval} \\
        & small fraction of optimists & Thm.~\ref{thm:upper-distr}.
\end{tabular}
\caption{Our results for frequentist agents.}
\label{tab:results-freq}
\end{table}


\xhdr{Our results.}
We are interested in \emph{learning failures} when all but a few agents choose the bad arm, how the failure probability scales with the relevant parameters, and how the resulting regret rates scale as a function of $T$, the number of agents.

Our first result concerns unbiased agents ($\eta=0$), \ie the greedy algorithm. We obtain failure probability $\pfail = \Omega(1/\sqrt{N_0})$, where $N_0$ is the number of initial samples. This is an exponential improvement over the trivial argument presented above.%
\footnote{This result holds as long as the \emph{gap} $\Delta$ (the absolute difference in arms' mean rewards) is smaller than $1/\sqrt{N_0}$. This is the only non-trivial regime: indeed, when $\Delta \gg 1/\sqrt{N_0}$, one could infer the best arm with high confidence based on the initial samples alone.}
Regret is at least $\Omega(\pfail\cdot T)$ for any given problem instance, in contrast with the $O(\log T)$ regret rate obtained by optimal bandit algorithms.

Our main results concern $\eta$-confident agents and investigate the scaling in $\eta$. We obtain failure probability $\pfail = e^{-O(\eta)}$ (with a similar scaling in $N_0$), specializing to the greedy algorithm when $\eta=0$ (see Section~\ref{sec:fail}). Further, the
$e^{-O(\eta)}$ scaling is the strongest possible: indeed, regret for $\eta$-optimistic agents is at most
    $O\rbr{T\cdot e^{-\Omega(\eta)} + \eta}$
for a given problem instance (Theorem~\ref{thm:upper}). Note that the negative result deteriorates as $\eta$ increases, and  becomes vacuous when $\eta\sim\log T$. Then $\eta$-optimistic agents correspond to the UCB1 algorithm \citep{bandits-ucb1}, and our upper bound essentially matches its optimal $O(\log T)$ regret rate.

We refine these results in several directions.
First, \emph{pessimism does not help}: if all agents are pessimistic, then any level of pessimism, whether small or large or different across agents, leads to at least the same failure probability as in the unbiased case (Theorem~\ref{thm:pessimis}).
Second, our positive result for $\eta$-confidence agents is robust, in the sense that some agents can be \emph{more} optimistic (Theorem~\ref{thm:upper-interval}).
Third, \emph{a small fraction of optimists goes a long way}! Namely, if all agents are $\eta$-confident and even a $q$-fraction of them are $\eta$-optimistic, this yields regret
        $O\rbr{T\cdot e^{-\Omega(\eta)} + \eta/q}$,
almost as if \emph{all} agents were $\eta$-optimistic.%
\footnote{A similar result holds even the agents hold different levels of optimism, \eg if each agent $t$ in the $q$-fraction is $\eta$-optimistic for some $\eta_t\geq \eta$. See Theorem~\ref{thm:upper-distr} for the most general formulation.}
All results are summarized in Table~\ref{tab:results-freq}.

We provide numerical simulations to illustrate our key findings (Section~\ref{sec:expts}). In these simulations, we investigate the probability of a learning failure for a particular bandit instance and a particular ``behavior type" (expressed by the $\eta$ parameter). Specifically, we plot the probability of never choosing the good arm after some round $t$, as a function of $t$. We find substantial learning failures, as predicted by the theory, which subside for $\eta$-optimistic agents as $\eta$ increases.

\xhdr{Bayesian agents.}
We also consider agents who have Bayesian beliefs and act according to their posteriors given the observed data (henceforth, \emph{Bayesian agents}). This is in contrast with purely data-driven agents in our main model, as described previously; to make a distinction, we will refer to the latter as \emph{frequentist} agents. We posit that the Bayesian beliefs are same for all agents, representing the common initial knowledge, along with the initial data points. The beliefs are independent across arms, unless specified otherwise; then, like for our frequentist agents, observations from one arm do not yield information about the other arm.

A rational behavior in this Bayesian setup is to choose an arm with a largest posterior mean reward. This behavior, called  \emph{Bayesian-unbiased}, can be seen as a Bayesian version of the greedy algorithm. It is believed to perform poorly, like its frequentist counterpart. While a trivial argument yields a learning failure for deterministic rewards,%
\footnote{Letting $\mu_1>\mu_2$ be the arms' rewards, suppose
    $\Ex{\mu_1}<\min\rbr{\mu_2,\Ex{\mu_2}}$,
where the expectation is over the beliefs (conditioned on the initial data). Then the Bayesian greedy algorithm always chooses arm $2$.}
we are not aware of \emph{any} negative results when the rewards are  randomized.

We consider Bayesian agents on a fixed bandit instance (Section~\ref{sec:bayes}). We observe that Bayesian-unbiased agents
are consistent with frequentist $\eta$-confident agents, for some $\eta$ determined by $N_0$ and the beliefs, and therefore are subject to the same negative results. Moreover, we define a Bayesian version of $\eta$-confident agents, with confidence intervals determined by the posterior, and show that such agents are consistent with frequentist $\eta'$-confident agents for an appropriate $\eta'$.

We also consider a ``fully Bayesian" model in which the arms' mean rewards $(\mu_1,\mu_2)$ are drawn from a common Bayesian prior $\cP$ (Section~\ref{sec:priors}). Put differently, Bayesian agents operate in the environment of Bayesian bandits, and both are driven by the same prior $\cP$. For this model, we focus on the paradigmatic case of Bayesian-unbiased agents and no initial data. We derive a negative result for an arbitrary prior: if arm $1$ is preferred according to the prior, then the probability of never choosing arm $2$ is at least $\EXP_\cP[\mu_1-\mu_2]$. This yields a learning failure when arm $2$ is in fact the best arm; we characterize the probability of this happening in terms of the prior. In fact, this result extends to priors that are correlated across arms, albeit with a substantial caveat: the failure probability is driven by the minimal probability density across all pairs $(\mu_1,\mu_2)\in[0,1]^2$.
\footnote{Put differently, we need a full-support assumption on the prior: every pair $(\mu_1,\mu_2)\in[0,1]^2$ occurs with probability density at least $\minPDF>0$, and the probability of a learning failure is driven by $\minPDF$.}

\xhdr{Extensions to $K>2$ arms.}
We extend our negative results to $K\geq 2$ arms, for both frequentist and Bayesian agents (Section~\ref{sec:K}). We establish learning failures for any given problem instance, in a similar sense as in the respective $K=2$ cases. These extensions use essentially the same proof techniques, but require somewhat more complex formulations. \Eg the frequentist result considers the probability of never choosing any of the top $m$ arms, and characterizes it in terms of the gap between the best and the $n$-th best arm, for any given $m<n$.

\xhdr{Discussion: significance.}
Our goal is to analyze the intrinsic learning behavior of a system of self-interested agents, rather than design a new algorithm/mechanism for such system. As in much of algorithmic game theory, we discuss the influence of self-interested behavior on the overall welfare of the system. We consider ``learning failures'' caused by self-interested behavior, which is a typical framing in the literature on social learning.

While our positive results are restricted to ``optimistic" agents, we do not assert that such agents are necessarily typical. Instead, we establish that our results on learning failures are essentially tight. That said, ``optimism" is a well-documented behavioral bias \citep[\eg see][]{Puri-2007}. So, a small fraction of optimists, leveraged in Theorem~\ref{thm:upper-distr}, is not unrealistic.

From the algorithmic perspective, we showcase the failures of the greedy algorithm, and more generally any algorithm that operates on narrow confidence intervals. We do not attempt to design new algorithms for $K$-armed bandits, as several optimal algorithms are already known. As a by-product, our results on $\eta$-confident agents elucidate some important aspects of \emph{exploration}: why bandit algorithms require (some) extreme optimism --- to be inconsistent with $\eta$-confident agents for a constant $\eta$ --- and why ``pessimism under uncertainty" is not a productive approach.

\xhdr{Technical novelty.}
Bandit Social Learning was not well-understood previously even with unbiased agents, as discussed above, let alone for more permissive behavioral models. It was very unclear a priori how to analyze learning failures and how strong would be the guarantees, in terms of the generality of agents' behaviors, the failure events/probabilities, and the technical assumptions.

On a technical level, our ``negative" proofs have very little (if anything) to do with standard lower-bound analyses in bandits stemming from \citet{Lai-Robbins-85} and \citet{bandits-exp3}. These analyses apply to any algorithm and prove ``sublinear" lower bounds on regret, such as $\Omega(\log T)$ for a given problem instance and  $\Omega(\sqrt{T})$ in the worst case. On a technical level, they present a KL-divergence argument showing that no algorithm can distinguish between a given tuple of "similar" problem instances. In contrast, we prove \emph{linear} lower bounds on regret, our results apply to a particular family of behaviors/algorithms, and we never consider a tuple of similar problem instances. Instead, we use  anti-concentration and martingale tools to argue that the best arm is never played (or played only a few times), with some probability.
While our tools themselves are not very standard, the novelty is primarily in how we use these tools.

Our ``positive" proofs are more involved compared to the standard analysis of the UCB1 algorithm. The latter uses $\eta\sim\log T$
to ensure that the complements of certain ``clean events" can be ignored. Instead, we need to define and analyze these ``clean events" in a more careful way. These difficulties are compounded in Theorem~\ref{thm:upper-distr}, our most general result. As far as the statements are concerned, the basic result in Theorem~\ref{thm:upper} is perhaps what one would expect to hold, whereas the extensions in Theorem~\ref{thm:upper-interval} and~\ref{thm:upper-distr} are more surprising.


%
%

\xhdr{Map of the paper.}
Section~\ref{sec:prelims} introduces our model in detail and discusses various allowed behaviors. Sections~\ref{sec:fail} and~\ref{sec:UB} discuss, resp., the learning failures and the positive results for our main (frequentist) model.
Sections~\ref{sec:bayes} and~\ref{sec:priors} handle Bayesian agents: resp., for a fixed (frequentist) bandit instance and for Bayesian bandits. Negative results for $K\geq 2$ arms are Section~\ref{sec:K}. Numerical simulations are in Section~\ref{sec:expts}. Some unessential proofs are moved to appendices.

\section{Related Work}
\label{sec:related}

\noindent\textbf{Social learning.}
A vast literature on social learning studies agents that learn over time in a shared environment. Learning failures such as ours (or absence thereof) is a prominent topic. Models vary across several dimensions, such as: which information is acquired or transmitted, what is the communication network, whether agents are long-lived or only act once, how they choose their actions, etc. All models from prior work are very different from ours. Below we separate our model from several lines of work that are most relevant.

In ``sequential social learning", starting from \citep{Banerjee-qje92,Welch92,Bikhchandani-jpe92,SmithSorensen-econometrica00},
agents observe private signals, but only the chosen actions are observable in the future; see \citet{Golub-survey16} for a survey. The social planner (who chooses agents' actions given access to the knowledge of all previous agents) only needs to \emph{exploit}, \ie choose the best action given the previous agents' signals, whereas in our model it also needs to \emph{explore}. Learning failures are (also) of primary interest, but they occur for an entirely different reason: restricted information flow, since the private signals are not observable in the future.

``Strategic experimentation'', starting from \citet{Bolton-econometrica99} and \citet{Keller-econometrica05}, studies long-lived learning agents that observe both actions and rewards of one another; see \citet{Horner-survey16} for a survey. Here, the social planner also solves a version of multi-armed bandits, albeit a very different one (with time-discounting, ``safe" arm that is completely known, and ``risky" arm that follows a stochastic process). The main difference is that the agents engage in a complex repeated game where they explore but prefer to free-ride on exploration by others.

\citet{Bala-Goyal-98} and \citet{Lazer-ASQ07} consider a network of myopic learners, all  faced with the same bandit problem and observing each other's actions and rewards. The interaction protocol is very different from ours: agents are long-lived, act all at once, and only observe their neighbors on the network. Other specifics are different, too. \citet{Bala-Goyal-98} makes strong assumptions on learners' beliefs, which would essentially cause the greedy algorithm to work well in \BSL. In \citet{Lazer-ASQ07}, each learner only retains the best observed action, rather than the full history. The focus is on comparing the impact of different network topologies, theoretically  \citep{Bala-Goyal-98} and via simulations \citep{Lazer-ASQ07}.

Prominent recent work, \eg \citep{Strak-ecta18,Bohren-ecta21,Lanzani-ecta21,Lanzani-jmp}, targets agents with \emph{misspecified beliefs}, \ie beliefs whose support does not include the correct model. The framing is similar to \BSL with Bayesian-unbiased agents: agents arrive one by one and face the same decision problem, whereby each agent makes a rational decision after observing the outcomes of the previous agents.%
\footnote{This work usually posits a single learner that makes (possibly) myopic decisions over time and observes their outcomes. An alternative interpretation is that each decision is made by a new myopic agent who observes the history.}
Rational decisions under misspecified beliefs make a big difference compared to \BSL, and structural assumptions about rewards/observations and the state space tend to be very different from ours. The technical questions being asked tend to be different, too. \Eg convergence of beliefs is of primary interest, whereas the chosen arms and agents' beliefs/estimates trivially converge in our setting.
\footnote{Essentially, if an arm is chosen infinitely often then the agents beliefs/estimates converge on its true mean reward; else, the agents eventually stop receiving any new information about this arm.}


\xhdr{The greedy algorithm.}
Positive results for the greedy bandit algorithm focus on \emph{contextual bandits}, an extension of stochastic bandits where a payoff-relevant signal (\emph{context}) is available before each round. Equivalently, this is a version of \BSL with Bayesian-unbiased agents where each agent observes an idiosyncratic signal along with the history, which is visible to the future agents. The greedy algorithm has been proved to work well under very strong assumptions on the environment: linearity of rewards and diversity of contexts
\citep{kannan2018smoothed,bastani2017exploiting,Greedy-Manish-18}. Similarly, \citet{AcemogluMMO19} analyze \BSL with Bayesian-unbiased agents who receive private idiosyncratic signals. They  make (different) strong assumptions on agent diversity and reward structure, and focus on \emph{one-armed bandits} (when the alternative is ``do nothing", concretely: buy the product or not).%
\footnote{\citet{AcemogluMMO19} also obtain complementary results on the existence of learning failures in their setting; quantitatively, these negative results similar to the exponentially-weak failure discussed in Section~\ref{sec:intro}. They further  zoom in on the effects of biased reporting and summarized history, which goes beyond our scope here.}
In all these results, context/agent diversity substitutes for exploration, and reward structure allows aggregation across agents.

The greedy algorithm is also known to attain $o(T)$ regret in various scenarios with a very large number of near-optimal arms \citep{Bayati-nips20,Jedor-Greedy21},
\eg for Bayesian bandits with $\gg\sqrt{T}$ arms, where the arms' mean rewards are sampled independently and uniformly.

Learning failures for the greedy algorithm are derived for bandit problems with 1-dimensional action spaces under (strong) structural assumptions: \eg dynamic pricing with linear demands \citep{Zeevi-ManSci12,denBoer-ManSci14} and dynamic control in a (generalized) linear model \citep{Lai-Robbins-control82,Keskin-OpRe18}. In all these results, the failure probability is only proved positive, but not otherwise characterized. The greedy algorithm is restricted to one or two initial samples (which is a trivial case in our setting, as discussed in Section~\ref{sec:intro}).

\xhdr{\BSL and mechanism design.}
\emph{Incentivized exploration} takes a mechanism design perspective on \BSL, whereby the platform strives to incentivize individual agents to explore for the sake of the common good. In most of this work, starting from \citep{Kremer-JPE14,Che-13},
the platform controls the information flow, \eg can withhold history and instead issue recommendations, and uses this information asymmetry to create incentives; surveys can be found in \citep{IncentivizedExploration-chapter} and \citep[Ch. 11]{slivkins-MABbook}. In particular, \citep{ICexploration-ec15,Jieming-unbiased18,Selke-PoIE-ec21} target stochastic bandits as the underlying learning problem, same as we do. Most related is \citet{Jieming-unbiased18}, where the platform constructs a (very) particular communication network for the agents, and then the agents engage in \BSL on this network.

Alternatively, the agents are allowed to observe full history, but the platform uses monetary payments to create incentives \citep{Frazier-ec14,Kempe-wine15,Kempe-colt18}. The platform's goal is to optimize the welfare vs. payments tradeoff under time-discounting.

\xhdr{Behaviorial models.}
Non-Bayesian models of behavior are prominent in social learning literature, starting from \citet{DeGroot74}. In these models, agents use variants of statistical inference and/or naive rules-of-thumb to infer the state of the world from observations. Our model of $\eta$-confident agents is essentially a special case of ``case-based decision theory" of \citet{CaseBased-qje95}.

Our model accommodates versions of several behaviorial biases:
\begin{OneLiners}
\item optimism (\eg see \citep{Puri-2007} and references therein),
\item pessimism (\eg see \citep{Chang00-book,Bateson-2016} and references therein),
\item risk attitudes (\eg see \citep{KahnemanTversky-book82,Barberis-survey03}),
\item recency bias (\eg see \citep{Fudenberg2014} and references therein),
\item randomized decisions (with theory tracing back to \citet{Luce59}), and
\item \emph{probability matching} more specifically (\eg see surveys \citep{myers1976,vulkan2000}).
\end{OneLiners}
All these biases are well-documented and well-studied in the literature on economics and psychology. A technical discussion of how these and other behaviors fit into our model is in Section~\ref{sec:prelims-cases}.


\xhdr{Multi-armed bandits.}
Our perspective on bandits is very standard in machine learning theory: we consider asymptotic regret rates without time-discounting (rather than Bayesian-optimal time-discounted rewards, a more standard economic perspective). The vast literature on regret-minimizing bandits is summarized in recent books
\citep{slivkins-MABbook,LS19bandit-book}.

Stochastic bandits is a standard, basic version with i.i.d. rewards and no auxiliary structure. Most relevant are the UCB1 algorithm~\citep{bandits-ucb1}, Thompson Sampling and the ``frequentist" analyses thereof  \citep{Thompson-1933,TS-survey-FTML18,Shipra-colt12,Shipra-aistats13-JACM,Kaufmann-alt12}, and the lower bounds \cite[\eg][]{Lai-Robbins-85,bandits-exp3}.
The general design paradigms associated with UCB1 and Thompson Sampling are surveyed in \citep{slivkins-MABbook,LS19bandit-book,TS-survey-FTML18}.

Markovian, time-discounted bandit formulations \citep{Gittins-book11} and various other connections between bandits and self-interested behavior (surveyed, \eg in \citet[Chapter 11.7]{slivkins-MABbook}) are less relevant to this paper.

\section{Our model and preliminaries}
\label{sec:prelims}

Our model, called \textbf{Bandit Social Learning}, is defined as follows. There are $T$ rounds, where $T\in\N$ is the time horizon, and two \emph{arms} (\ie alternative actions). We use $[T]$ and $[2]$ to denote the set of rounds and arms, respectively.%
\footnote{Throughout, we denote $[n] = \cbr{1,2 \LDOTS n}$, for any $n\in\N$.}
In each round $t\in [T]$, a new agent arrives, observes history $\hist_t$ (defined below), chooses an arm $a_t\in [2]$, receives reward $r_t\in [0,1]$ for this arm, and leaves forever. When a given arm $a\in[2]$ is chosen, its reward is drawn independently from Bernoulli distribution with mean $\mu_a\in [0,1]$.
\footnote{Our results on upper bounds (Section~\ref{sec:UB}) and Bayesian learning failures (Section~\ref{sec:priors}) allow each arm to have an arbitrary reward distribution on $[0,1]$. We omit further mention of this to simplify presentation.}
The mean reward is fixed over time, but not known to the agents. Some initial data is available to all agents, namely $N_0\geq 1$ samples of each arm $a\in[2]$. We denote them $r^0_{a,i}\in[0,1]$, $i\in[N_0]$. The history in round $t$ consists of both the initial data and the data generated by the previous agents. Formally, it is a tuple of arm-reward pairs,
\[ \hist_t
    :=  \rbr{ (a,r^0_{a,i}):\; a\in[2], i\in [N_0] ;\;
        (a_s,r_s):\; s\in [t-1] }.\]
We summarize the protocol for Bandit Social Learning as Protocol~\ref{alg:BSL}.

\renewcommand{\algorithmcfname}{Protocol}
\begin{algorithm}
  \SetAlgoLined
  Problem instance: two arms $a\in[2]$ with (fixed, but unknown) mean rewards $\mu_1,\mu_2\in[0,1]$ \;
  Initialization:
  $\hist \leftarrow \cbr{\text{$N_0$ samples of each arm}} $\;
  \For{each round $t=1,2,\ldots, T$}{
    agent $t$ arrives, observes $\hist$ and chooses an arm $a_t\in[2]$ \;
    reward $r_t\in\sbr{0,1}$ is drawn from Bernoulli distribution
        with mean $\mu_{a_t}$\;
   new datapoint $(a_t,r_t)$ is added to $\hist$
  }
 \caption{Bandit Social Learning}\label{alg:BSL}
\end{algorithm}

\begin{remark}
The initial data points represent agents' initial beliefs; parameter $N_0$ determines the ``strength" of the beliefs. We posit $N_0\geq 1$ to ensure well-defined average rewards.
\end{remark}

If the agents were controlled by an algorithm, this protocol would correspond to  \emph{stochastic bandits} with two arms, the most basic version of multi-armed bandits. A standard performance measure in multi-armed bandits (and online machine learning more generally) is \emph{regret}, defined as
\begin{align}
  \regret(T) \defeq \textstyle \mu^*\cdot T - \Ex{\sum_{t\in[T]}\; \mu_{a_t}},
  \label{eq:regret}
\end{align}
where $\mu^* = \max(\mu_1,\mu_2)$ is the maximal expected reward of an arm.

Each agent $t$ chooses its arm $a_t$ myopically, without regard to future agents. Each agent is endowed with some (possibly randomized) mapping from histories to arms, and chooses an arm accordingly. This mapping, called \emph{behavioral type}, encapsulates how the agent resolves uncertainty on the rewards.  More concretely, each agent maps the observed history $\hist_t$ to an \emph{index}
    $\indx_{a,t}\in \R$
for each arm $a\in[2]$, and chooses an arm with a largest index. The ties are broken independently and uniformly at random.

We allow for a range of myopic behaviors, whereby each index can take an arbitrary value in the (parameterized) confidence interval for the corresponding arm. Formally, fix arm $a\in[2]$ and round $t\in [T]$. Let $n_{a, t}$ denote the number of times this arm has been chosen in the history $\hist_t$ (including the initial data), and let $\muhat_{a, t}$ denote the corresponding average reward. Given these samples, standard (frequentist, truncated) upper and lower confidence bounds for the arm's mean reward $\mu_a$ (UCB and LCB, for short) are defined as follows:
\begin{align}  \label{eq:def_ucb}
\ucb_{a, t} := \min\cbr{1, \muhat_{a, t} + \sqrt{\eta/n_{a, t}}}
\quad\text{and}\quad
\lcb_{a, t} := \max\cbr{0, \muhat_{a, t} - \sqrt{\eta/n_{a, t}}},
\end{align}
where $\eta\geq 0$ is a parameter. The interval
    $\sbr{\lcb_{a, t}, \ucb_{a, t}}$
will be referred to as \emph{$\eta$-confidence interval}. Standard concentration inequalities imply that $\mu_a$ is contained in this interval with probability at least $1-2\,e^{-2\eta}$ (where the probability is over the random rewards, for any fixed value of $\mu_a$). We allow the index to take an arbitrary value in this interval:
\begin{align}\label{eq:ind-conf}
 \indx_{a,t} \in \sbr{\lcb_{a, t}, \ucb_{a, t}},
    \quad \text{for each arm $a\in[2]$}.
\end{align}
We refer to such agents as \emph{$\eta$-confident}; $\eta>0$ will be a crucial parameter throughout.

We posit that the agents come from some population characterized by some fixed $\eta$, while the number of agents ($T$) can grow arbitrarily large. Thus, we are mainly interested in the regime when $\eta$ is a \emph{constant} with respect to $T$.

\subsection{Special cases of our model}
\label{sec:prelims-cases}

We emphasize the following special cases of $\eta$-confident agents:
\begin{itemize}

\item \emph{unbiased agents} set each index to the respective sample average:
$\indx_{a,t} = \muhat_{a, t}$. This is a natural myopic behavior for a ``frequentist" agent in the absence of behavioral biases.

\item \emph{$\eta$-optimistic agents} evaluate the uncertainty on each arm in the optimistic way, setting the index to the corresponding UCB: $\indx_{a,t} = \ucb_{a, t}$.

\item \emph{$\eta$-pessimistic agents} exhibit pessimism, in the same sense:
$\indx_{a,t} = \lcb_{a, t}$.

\end{itemize}

Unbiased agents correspond precisely to the \emph{greedy algorithm} in multi-armed bandits which is entirely driven by exploitation, and chooses arms as
    $a_t \in \argmax_{a\in[2]} \muhat_{a,t}$.
In contrast, $\eta$-optimistic agents with $\eta\sim\log T$ correspond to UCB1 \citep{bandits-ucb1}, a standard algorithm for stochastic bandits which achieves optimal regret rates. We interpret such agents as exhibiting \emph{extreme} optimism, in that $\indx_{a,t} \geq \mu_a$ with very high probability. Meanwhile, our model focuses on (more) moderate amounts of optimism, whereby $\eta$ is a constant with respect to $T$.

\xhdr{Other behavioral biases.}
One possible interpretation for $\indx_{a,t}$ is that it can be seen as \emph{certainty equivalent}, \ie the smallest reward that agent $t$ is willing to take for sure instead of choosing arm $a$. Then $\eta$-optimism and $\eta$-pessimism corresponds to (moderate) \emph{risk-seeking} and \emph{risk-aversion}, respectively. In particular, $\eta$-pessimistic agents may be quite common.

Our model also accommodates a version of \emph{recency bias}, whereby recent observations are given more weight. For example, an $\eta$-confident agent may be $\eta$-optimistic for a given arm if more recent rewards from this arm are better than the earlier ones.

An $\eta$-confident agent could have a preference towards a given arm $a$, and therefore, \eg be $\eta$-optimistic for this arm and $\eta$-pessimistic for the other arm. The agent's ``attitude" towards arm $a$ could also be influenced by the rewards of the other arm, \eg (s)he could be $\eta$-optimistic for arm $a$ if the rewards from the other arms are high.

\xhdr{Randomized agents.}
Our model also accommodates \emph{randomized} $\eta$-confident agents, \ie ones that draw their indices from some distribution conditional on the history $\hist_t$. Such randomization is consistent with a well-known type of behaviors when human agents choose a seemingly inferior alternative with smaller but non-zero probability.

A notable special case is related to \emph{probability matching}, when the probability of choosing an arm equals to the (perceived) probability of this arm being the best. We formalize this case in a Bayesian framework, whereby all agents have a Bayesian prior such that the mean reward $\mu_a$ for each arm $a$ is drawn independently from the uniform distribution over $[0,1]$.
\footnote{This Bayesian prior is just a formality to define probability matching, not (necessarily) what the agents believe.}
Each agent $t$ computes the Bayesian posterior $\cP_{a,t}$ on $\mu_a$ given the history $\hist_t$, then samples a number $\nu_{a,t}$ independently from this posterior. Finally, we define each index $\indx_{a,t}$, $a\in[2]$ as the ``projection" of $\nu_{a,t}$ into the corresponding $\eta$-confidence interval
    $\sbr{\lcb_{a, t}, \ucb_{a, t}}$.
Here, the projection of a number $x$ into an interval $[a,b]$ is defined as $a$ if $x<a$, $b$ if $x>b$, and $x$ otherwise.

Here's why this construction is interesting. Without truncation, \ie when $\indx_{a,t} = \nu_{a,t}$,  each arm is chosen precisely with probability of this arm being the best according to the posterior
    $\rbr{\cP_{1,t},\,\cP_{2,t}}$.
In fact, this behavior precisely corresponds to \emph{Thompson Sampling} \citep{Thompson-1933},
another standard multi-armed bandit algorithm that attains optimal regret. For $\eta\sim\log T$, the system of agents behaves like Thompson Sampling with very high probability;%
\footnote{More formally:
    $\Pr{ \nu_{a,t}\in \sbr{\lcb_{a, t}, \ucb_{a, t}}: a\in[2], t\in[T]}>1-O(\nicefrac{1}{T})$,
if $\eta$ is large enough.}
we interpret such behavior as an \emph{extreme} version of probability matching. Meanwhile, we focus on moderate regimes such that $\eta$ is a constant with respect to $T$. We refer to such agents as \emph{$\eta$-Thompson agents}.

Let us flag two other randomized behaviors allowed by our model. First, a naive form of probability matching chooses an index of each arm independently and uniformly at random from the respective $\eta$-confidence interval. This is one way to express complete uncertainty on which values within each confidence interval are more likely. Second, an even more naive decision rule chooses an arm uniformly at random if the two $\eta$-confidence intervals overlap.%
\footnote{And if they don't, the arm with the higher interval must be chosen. Formally, the u.a.r. choice can be modeled via a correlated choice of the two indices, randomizing between (high,low) and (low, high).}
Both behaviors provide stylized reference points for how ``naive" human agents may behave in practice.


\subsection{Preliminaries}
\label{prelims-prelims}

\vspace{-2mm}\xhdr{Reward-tape.} It is convenient for our analyses to interpret the realized rewards of each arm as if they are written out in advance on a ``tape". We posit a matrix
    $\rbr{\tape_{a,i}\in [0,1]:\; a\in[2],\;i\in[T]}$,
called \emph{reward-tape}, such that each entry $\tape_{a,i}$ is an independent Bernoulli draw with mean $\mu_a$. This entry is returned as reward when and if arm $a$ is chosen for the $i$-th time. (We start counting from the initial samples, which comprise entries $i\in[N_0]$.) This is an equivalent (and well-known) representation of rewards in stochastic bandits.

We will use the notation for the UCBs/LCBs defined by the reward-tape. Fix arm $a\in[2]$ and $n\in [T]$. Let
    $\muTape_{a,n} = \tfrac{1}{n}\sum_{i\in[n]} \tape_{a,i}$
be the average over the first $n$ entries for arm $a$. Now, given $\eta\geq 0$, define the appropriate confidence bounds:
\begin{align}  \label{eq:def_ucb_tape}
\ucbTape_{a, n} := \min\cbr{1, \muTape_{a,n} + \sqrt{\eta/n}}
\quad\text{and}\quad
\lcbTape_{a, n} := \max\cbr{0, \muTape_{a,n} - \sqrt{\eta/n}}.
\end{align}

\xhdr{Good/bad arm.} When $\mu_1,\mu_2$ are fixed (rather than drawn from a prior), we posit that $\mu_1>\mu_2$. That is, arm $1$ is the \emph{good arm}, and arm $2$ is the \emph{bad arm}. (It is for the ease of notation, and not known to the algorithm.) Our guarantees depend on $\Delta := \mu_1-\mu_2$, called the \emph{gap} (between the two arms), a very standard quantity in multi-armed bandits.

\xhdr{The big-O notation.} We use the big-O notation to hide constant factors. Specifically, $O(X)$ and $\Omega(X)$ mean, resp., ``at most $c_0\cdot X$" and ``at least $c_0\cdot X$" for some absolute constant $c_0>0$ that is not specified in the paper. When and if $c_0$ depends on some other absolute constant $c$ that we specify explicitly, we point this out in words and/or by writing, resp., $O_c(X)$ and $\Omega_c(X)$.

\xhdr{Bandit algorithms.} Algorithms UCB1 and Thompson Sampling achieve regret
\begin{align}
\regret(T) \leq O\rbr{\min\rbr{1/\Delta,\sqrt{T}}\cdot\log T}.
\end{align}
This regret rate is essentially optimal among all bandit algorithms: it is optimal up to constant factors for fixed $\Delta>0$, and up to $O(\log T)$ factors for fixed $T$ (see Section~\ref{sec:related} for citations).

A key property of a reasonable bandit algorithm is that $\regret(T)/T \to 0$; this property is also called \emph{no-regret}. Conversely, algorithms with $\regret(T) \geq \Omega(T)$ are considered very inefficient.

A bandit algorithm implemented by a collective of $\eta$-confident agents will be called an \emph{$\eta$-confident algorithm}. Likewise, \emph{$\eta$-optimistic algorithm} and \emph{$\eta$-pessimistic algorithm.}

\section{Learning failures}\label{sec:freq_homo}
\label{sec:fail}

In this section, we prove that the agents' myopic behavior causes learning failures, \ie all but a few agents choose the bad arm. More precisely:

\begin{definition}
The \emph{$n$-sampling failure} is an event that all but $\leq n$ agents choose the bad arm.
\end{definition}

Our main result allows arbitrary $\eta$-confident agents. Essentially, it asserts that $0$-sampling failures happen with probability at least $\pfail \sim e^{-O(\eta)}$. This is a stark learning failure when $\eta$ is a constant relative to the time horizon $T$.

We make two technical assumptions:
\begin{align}
&\text{arms' mean rewards lie in $(c, 1-c)$, for some absolute constant $c \in (0, \nicefrac12)$}, \label{eq:assn-1}\\
&\text{the number of initial samples satisfies
    $N_0 \geq 64\,\eta/c^2 + 1/c$}. \label{eq:assn-2}
\end{align}

\noindent The meaning of \eqref{eq:assn-1} is that it rules out degenerate behaviors when mean rewards are close to the known upper/lower bounds. The big-O notation hides the dependence on the absolute constant $c$, when and if explicitly stated so. Assumption \eqref{eq:assn-2} ensures that the $\eta$-confidence interval is a proper subset of $[0,1]$ for all agents; we sidestep this assumption later in Theorem~\ref{thm:LB-ext}.

Thus, the result is stated as follows:

\begin{theorem}[$\eta$-confident agents]\label{thm:interval}
Suppose all agents are $\eta$-confident, for some fixed $\eta\geq 0$. Make assumptions \eqref{eq:assn-1} and \eqref{eq:assn-2}. Then the $0$-sampling failure occurs with probability at least
\begin{align}\label{eq:thm:interval-1}
 \pfail = \Omega_c\rbr{\sqrt{(1+\eta)/N_0}}
    \cdot e^{-O_c\rbr{\eta \;+\; N_0\Delta^2}},
\quad\text{where}\quad \Delta = \mu_1-\mu_2.
\end{align}
Consequently, $ \regret(T) \geq \Delta\cdot \pfail\cdot T$.
 \end{theorem}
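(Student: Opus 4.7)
The plan is to exhibit a favorable event $B$ on the reward tapes under which every $\eta$-confident agent is forced to choose the bad arm in every round (yielding the $0$-sampling failure), and then to lower-bound $\Pr{B}$ to match the claimed $\pfail$. Because $B$ will be expressible as $B=B_1\cap B_2$ where $B_1$ depends only on arm $1$'s initial $N_0$ samples and $B_2$ only on arm $2$'s reward tape, its probability factorizes. The regret bound $\regret(T)\geq \Delta\cdot T\cdot\Pr{B}$ is then immediate, since on $B$ every agent incurs per-round gap $\Delta$.

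Concretely, let $\kappa\defeq\sqrt{\eta/N_0}$ and pick a slack $\alpha=C\kappa$ for a large enough constant $C=C(c)$; assumption \eqref{eq:assn-2} ensures $\alpha\le c/4$, so $v\defeq \mu_2-\alpha$ lies comfortably in $[c/2,1-c/2]$. Set
\begin{align*}
B_1 &\defeq \{\ucbTape_{1,N_0}\le v\} \;=\; \{\muTape_{1,N_0}\le \mu_1-\Delta-\alpha-\kappa\},\\
B_2 &\defeq \{\lcbTape_{2,n}\ge v \ \text{for every}\ n\ge N_0\}.
\end{align*}
On $B$, induction on $t$ shows arm $1$ is never chosen: if $n_{1,t}=N_0$, then $\ucb_{1,t}=\ucbTape_{1,N_0}\le v$, while $n_{2,t}\ge N_0$ combined with $B_2$ gives $\lcb_{2,t}\ge v$; hence $\indx_{1,t}\le v\le \indx_{2,t}$, and strict inequality (obtained by arbitrarily small perturbation of $\alpha$) breaks ties in favor of arm $2$.

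Since $B_1,B_2$ depend on disjoint pieces of randomness, $\Pr{B}=\Pr{B_1}\cdot\Pr{B_2}$. For $\Pr{B_1}$, the Chernoff bound goes the wrong way, so I use an anti-concentration / local-CLT argument. Assumption \eqref{eq:assn-1} keeps the target mean $\mu_1-\Delta-\alpha-\kappa$ bounded away from $\{0,1\}$, so Stirling gives that each PMF atom $\Pr{\muTape_{1,N_0}=k/N_0}$ in the relevant tail region has probability at least $\Omega_c(N_0^{-1/2})\cdot e^{-O_c(N_0\Delta^2+\eta)}$. Summing these atoms over a window of width $\Theta(\Delta+\kappa)$ (containing $\Theta((\Delta+\kappa)N_0)$ atoms, all still in the tail so their KL-divergences remain bounded by $O_c(\Delta^2+\eta/N_0)$) yields
\[
\Pr{B_1}\;\ge\;\Omega_c\!\big((\Delta+\kappa)\sqrt{N_0}\big)\cdot e^{-O_c(N_0\Delta^2+\eta)}\;\ge\;\Omega_c(\Delta+\kappa)\cdot e^{-O_c(N_0\Delta^2+\eta)}.
\]

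For $\Pr{B_2}$, rewrite the condition as $\muTape_{2,n}-\mu_2\ge -\alpha+\sqrt{\eta/n}$; for $\alpha\ge C\kappa$ and $n\ge N_0$, the right-hand side is non-positive with magnitude $\Omega(\alpha)$. Hoeffding's inequality bounds the per-step violation probability by $\exp(-\Omega(n\alpha^2))$, and summing the resulting geometric series (using \eqref{eq:assn-2} to ensure $N_0\alpha^2\ge C^2\eta$ is large enough) gives a total of at most $1/2$, hence $\Pr{B_2}\ge 1/2$. Combining with the previous bound yields $\Pr{B}\ge\pfail$ as claimed. The main technical delicacy is the anti-concentration step for $\Pr{B_1}$: summing carefully across a window of Binomial atoms (a single-atom bound would be weaker by a factor of $\sqrt{N_0}$) is what produces the $\Omega_c(\Delta+\sqrt{\eta/N_0})$ prefactor, and controlling the KL-divergence uniformly across the window is where assumption \eqref{eq:assn-1} does real work.
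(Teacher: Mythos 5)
Your overall architecture matches the paper's: two independent events on the reward tapes (arm 1's first $N_0$ samples come out low; arm 2's running average never drops below a threshold), a first-deviation argument showing the joint event forces every agent onto arm 2, and a factorized probability bound. However, there is a genuine gap in your bound $\Pr{B_2}\ge\nicefrac12$. The event that the running empirical mean of arm 2 stays above $\mu_2-\alpha$ for \emph{all} $n\ge N_0$ cannot be handled by Hoeffding plus a union bound: the per-$n$ tail bound is $e^{-\Omega(n\alpha^2)}$, and summing over $n\ge N_0$ gives roughly $\tfrac{N_0}{\eta}\,e^{-\Omega(\eta)}$ (since $N_0\alpha^2=\Theta(C^2\eta)$ and there are $\sim N_0/\eta$ effectively distinct terms), which exceeds $1$ whenever $N_0\gg \eta\,e^{\Omega(\eta)}$ --- and assumption \eqref{eq:assn-2} only bounds $N_0$ from \emph{below}. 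More fundamentally, this event has probability $\Theta_c(\alpha)$, not $\Omega(1)$: a mean-zero random walk recrosses any level at distance $\alpha$ with probability $1-\Theta(\alpha)$. Your construction breaks completely at $\eta=0$ (the unbiased case, Corollary~\ref{cor:LB-unbiased}), where $\alpha=C\sqrt{\eta/N_0}=0$ and $\Pr{B_2}\to 0$.

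The paper resolves exactly this point with a martingale argument (Lemma~\ref{lm:bad_arm_happy}, proved via an exponential supermartingale and Ville's inequality), which gives $\Pr{\forall n:\ \muTape_{2,n}\ge q_2}\ge\Omega_c(\mu_2-q_2)$. Crucially, the paper then chooses $q_2=\mu_2-\sqrt{\eta/N_0}-c\Delta/4$, so that this arm-2 event --- not the arm-1 anti-concentration --- supplies the $\Omega_c(\Delta+\sqrt{\eta/N_0})$ prefactor in $\pfail$, while the arm-1 event only needs to contribute $\Omega_c(e^{-O_c(N_0\Delta^2+\eta)})$ (the paper gets this without your $1/\sqrt{N_0}$-atom-summing refinement, via a sharp binomial lower tail bound plus a reverse Pinsker inequality). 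Your window-of-atoms argument for $\Pr{B_1}$ is a reasonable alternative route to a prefactor, but it cannot compensate for the missing factor on the arm-2 side: with the correct $\Pr{B_2}=\Theta_c(\alpha)$ your product falls short of the claimed $\pfail$ by a factor of $\sqrt{\eta/N_0}$ and vanishes at $\eta=0$. To repair the proof you need the martingale/Ville-type bound for the arm-2 event and a threshold $q_2$ that sits $\Theta_c(\Delta)+\sqrt{\eta/N_0}$ below $\mu_2$. (The tie-breaking issue you wave at is real but minor; the paper handles it by leaving a strict gap $q_2-q_1>2\sqrt{\eta/N_0}$ between the two thresholds.)
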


\begin{discussion}
The agents in Theorem~\ref{thm:interval} can exhibit any behaviors, possibly different for different agents and different arms, as long as these behaviors are consistent with the $\eta$-confidence property. In particular, this result applies to deterministic behaviours such as optimism/pessimism, and also to randomized behaviors such as $\eta$-Thompson agents defined in Section~\ref{sec:prelims-cases}.

From the perspective of multi-armed bandits, Theorem~\ref{thm:interval} implies that $\eta$-confident bandit algorithms with constant $\eta$ cannot be no-regret, \ie cannot have regret sublinear in $T$.

Note that the guarantee in Theorem~\ref{thm:interval} deteriorates as the parameter $\eta$ increases, and becomes essentially vacuous when $\eta\sim \log(T)$. The latter makes sense, since this regime of $\eta$ is used in UCB1 algorithm and suffices for Thompson Sampling.
\end{discussion}

\begin{discussion}
Assumption \eqref{eq:assn-2} is innocuous from the social learning perspective: essentially, the agents hold initial beliefs grounded in data and these beliefs are not completely uninformed. From the bandit perspective, this assumption is less innocuous: while it seems unreasonable to discard the initial data, an algorithm can always choose to do so, possibly side-stepping the failure result. In any case, we remove this assumption in Theorem~\ref{thm:LB-ext} below.
\end{discussion}

\begin{remark}
A weaker version of \eqref{eq:assn-2}, namely $N_0\geq \eta$, is necessary to guarantee an $n$-sampling failure for any $\eta$-confident agents. Indeed, suppose all agents are $\eta$-optimistic for arm $1$ (the good arm), and $\eta$-pessimistic for arm $2$ (the bad arm). If $N_0<\eta$, then the index for arm $2$ is $0$ after the initial samples, whereas the index of arm $1$ is always positive. Then all agents choose arm $1$.
\end{remark}

\medskip

Next, we spell out two corollaries which help elucidate the main result.

\begin{corollary}\label{cor:LB-small-gap}
If the gap is sufficiently small, $\Delta < O\rbr{1/\sqrt{N_0}}$, then Theorem~\ref{thm:interval} holds with
\begin{align}\label{eq:thm:interval-2}
 \pfail = \Omega_c\rbr{\sqrt{(1+\eta)/N_0}}\cdot e^{-O_c(\eta)}.
 \end{align}
\end{corollary}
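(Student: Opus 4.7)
The plan is essentially to substitute the assumption $\Delta < O(1/\sqrt{N_0})$ directly into the bound \eqref{eq:thm:interval-1} from Theorem~\ref{thm:interval} and observe that the $N_0 \Delta^2$ term in the exponent becomes an absolute constant (depending only on $c$), which can be absorbed into the big-O and contributes only a constant factor to $\pfail$.

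More concretely, the first step would be to unpack the hypothesis: ``$\Delta < O(1/\sqrt{N_0})$'' means there is an absolute constant $K$ (depending only on $c$) such that $\Delta \leq K/\sqrt{N_0}$, hence $N_0 \Delta^2 \leq K^2 = O_c(1)$. Second, I would plug this into the exponent of \eqref{eq:thm:interval-1}: since $O_c(\eta + N_0\Delta^2) \leq O_c(\eta) + O_c(1)$, the exponential factor satisfies
\begin{align*}
 e^{-O_c(\eta + N_0 \Delta^2)} \;\geq\; e^{-O_c(1)}\cdot e^{-O_c(\eta)} \;=\; \Omega_c(1)\cdot e^{-O_c(\eta)}.
\end{align*}
Third, the $\Omega_c(1)$ factor is absorbed into the outer $\Omega_c(\Delta+\sqrt{\eta/N_0})$ prefactor, giving the claimed $\pfail = \Omega_c(\Delta + \sqrt{\eta/N_0})\cdot e^{-O_c(\eta)}$. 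The conclusion on $\regret(T)$ follows as in Theorem~\ref{thm:interval}.

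There is really no obstacle here: the corollary is a direct specialization designed to highlight the ``small gap'' regime, and the only thing to check is bookkeeping of the implicit constants in the big-O notation (which is why the statement is flagged $\Omega_c,O_c$ rather than $\Omega,O$). The whole argument is a couple of lines once one notes that the small-gap hypothesis eliminates the $N_0\Delta^2$ contribution in the exponent, isolating the clean $e^{-O_c(\eta)}$ dependence on $\eta$ that is the main takeaway of the corollary.
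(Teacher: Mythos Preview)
Your proposal is correct and matches the paper's treatment: the paper does not give a separate proof of this corollary, presenting it as an immediate specialization of Theorem~\ref{thm:interval}, which is exactly the substitution you describe.
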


\begin{remark}
The assumption in Corollary~\ref{cor:LB-small-gap} is quite mild in light of the fact that when $\Delta > \Omega\rbr{\sqrt{\log(T)/N_0}}$, the initial samples suffice to determine the best arm with high probability.
\end{remark}

\begin{corollary}\label{cor:LB-unbiased}
If all agents are unbiased, then Theorem~\ref{thm:interval} holds with $\eta=0$ and
\begin{align}
 \pfail
    &= \Omega_c\rbr{1/\sqrt{N_0}}\cdot e^{-O_c\rbr{N_0\;\Delta^2}}
            \label{eq:thm:interval-3}\\
    &= \Omega_c\rbr{1/\sqrt{N_0}}
        \qquad\qquad \text{if $\Delta < O\rbr{1/\sqrt{N_0}}$}.\nonumber
\end{align}
In the latter case,
$ \regret(T) \geq \Omega_c\rbr{\Delta/\sqrt{N_0}} \cdot T$.
\end{corollary}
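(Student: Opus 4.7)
The plan is to derive the corollary as a direct specialization of Theorem~\ref{thm:interval}, by observing that unbiased agents are exactly $0$-confident agents. First I would verify this reduction: with $\eta = 0$ the confidence term $\sqrt{\eta/n_{a,t}}$ vanishes, so in \eqref{eq:def_ucb} both $\ucb_{a,t}$ and $\lcb_{a,t}$ collapse to $\muhat_{a,t}$ (the min/max truncation is inactive since $\muhat_{a,t}\in[0,1]$ automatically). Hence the $0$-confidence interval is the singleton $\{\muhat_{a,t}\}$, and the only index consistent with \eqref{eq:ind-conf} is $\indx_{a,t} = \muhat_{a,t}$---precisely the unbiased behavior.

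Next I would check the hypotheses of Theorem~\ref{thm:interval} at $\eta = 0$: assumption \eqref{eq:assn-1} is inherited verbatim, while assumption \eqref{eq:assn-2} degenerates to $N_0 \geq 1/c$, a mild constant lower bound that can be absorbed into the $O_c$/$\Omega_c$ constants (or viewed as a standing hypothesis, since $N_0\geq 1$ by the setup and one can always enlarge $c$'s role). Plugging $\eta = 0$ into \eqref{eq:thm:interval-1} then gives
\[
\pfail = \Omega_c\rbr{\Delta + 0}\cdot e^{-O_c(0 + N_0\Delta^2)} = \Omega_c(\Delta)\cdot e^{-O_c(N_0\Delta^2)},
\]
which is exactly \eqref{eq:thm:interval-3}, and regret then follows from $\regret(T)\geq \Delta\cdot \pfail\cdot T$.

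Finally, the second displayed equality follows by noting that the small-gap assumption $\Delta < O(1/\sqrt{N_0})$ yields $N_0\Delta^2 = O(1)$, so the exponential factor $e^{-O_c(N_0\Delta^2)}$ is bounded below by a constant depending only on $c$, and hence $\pfail = \Omega_c(\Delta)$. There is essentially no obstacle here: the corollary is a syntactic instantiation of the parent theorem. The only subtle point is confirming that the $\eta = 0$ confidence interval genuinely forces the unbiased index, which uses the truncation in \eqref{eq:def_ucb} together with the fact that empirical averages of $[0,1]$-valued rewards lie in $[0,1]$, so no further concentration or tail-bound machinery is required beyond what Theorem~\ref{thm:interval} already packages.
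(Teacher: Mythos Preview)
Your proposal is correct and matches the paper's approach: the corollary is stated without a separate proof because it is precisely the specialization of Theorem~\ref{thm:interval} to $\eta=0$, using that unbiased agents are exactly $0$-confident agents. Your observations about the confidence interval collapsing to $\{\muhat_{a,t}\}$, assumption~\eqref{eq:assn-2} reducing to $N_0\ge 1/c$, and the small-gap simplification $N_0\Delta^2=O(1)$ are all on point.
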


\begin{remark}\label{rem:trivial-failure}
A trivial failure result relies on the event $\cE$ that all $N_0$ initial samples of the good arm are realized as $0$. ($\cE$ implies a $0$-sampling failure as long as $\geq 1$ initial sample of the bad arm is realized to $1$.) This result is weak for $N_0\gg 1$ since $\Pr{\cE} = (1-\mu_1)^{N_0}$. In contrast, our guarantee on the failure probability scales as $1/\sqrt{N_0}$ when the gap is small enough.
Thus, we have the first failure result for the greedy algorithm with a non-trivial dependence on $N_0$.
\end{remark}


\medskip

Let us remove assumption \eqref{eq:assn-2} and allow ``small" $N_0$, namely
    $N_0\leq N_* := \ceil{64\eta/c^2 + 1/c}$.
While the analysis of initial samples simplifies --- we rely on all samples being $0$ for the good arm and $1$ for the bad arm --- the rest of the analysis becomes more intricate. Essentially, this is due to ``boundary effects": confidence intervals are initially too wide to fit into the $[0,1]$ interval. The guarantee is slightly weaker: $n$-sampling failures, $n= N^*-N_0$, rather than $0$-sampling failures. Also, we need the behavioral type for each agent $t$ to satisfy two natural (and very mild) properties:
\begin{itemize}
\item[(P1)] \emph{(symmetry)} if all rewards in $\hist_t$ are $0$, the two arms are treated symmetrically;%
    \footnote{That is, the behavioral type stays the same if the arms' labels are switched.}

\item[(P2)] \emph{(monotonicity)} Fix any arm $a\in[2]$, any $t$-round history $H$ in which all rewards are $0$ for both arms, and any other $t$-round history $H'$ that contains the same number of samples of arm $a$ such that all these samples have reward $1$. Then
    \begin{align}\label{eq:thm:LB-ext-P2}
        \Pr{a_t=a \mid \hist_t = H'} \geq \Pr{a_t=a \mid \hist_t = H}.
    \end{align}
\end{itemize}
Note that both properties would still be natural and mild even without the ``all rewards are zero" clause. The resulting guarantee on the failure probability is somewhat cleaner.

\begin{theorem}[small $N_0$]\label{thm:LB-ext}
Fix $\eta\geq 0$, assume~\refeq{eq:assn-1}, and let $N_0\in [1, N^*]$, where
    $N^* := \ceil{64\eta/c^2 + 1/c}$.
Suppose each agent $t$ is $\eta$-confident and satisfies properties (P1) and (P2). Then an $n$-sampling failure,
    $n=N^*-N_0$,
occurs with probability at least
\begin{align}\label{eq:thm:LB-ext}
      \pfail = \Omega_c\rbr{ c^{2N^*}}
             = \Omega_c\rbr{e^{-O_c(\eta)}}.
\end{align}
Consequently,
    $ \regret(T) \geq \Delta\cdot \pfail\cdot (T-n)$.
\end{theorem}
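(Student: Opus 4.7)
The plan is to reduce Theorem~\ref{thm:LB-ext} to Theorem~\ref{thm:interval} by conditioning on a bootstrap phase of $2n$ rounds (where $n=N^*-N_0$) that is entirely all-zero and balanced across arms, so that from round $2n+1$ onwards the system looks exactly like the initial condition of Theorem~\ref{thm:interval} with $N_0 \leftarrow N^*$. The target bound $\pfail = \Omega_c(c^{2N^*})$ is consistent with conditioning on $2N^*$ Bernoulli events each of probability $\ge c$, which splits as $2N_0$ initial samples plus $2n = 2(N^*-N_0)$ bootstrap rewards.

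First, I would introduce $\cE_0$, the event that all $2N_0$ initial samples are realized as $0$. Assumption \eqref{eq:assn-1} gives $(1-\mu_1)^{N_0}(1-\mu_2)^{N_0}\ge c^{2N_0}$, so $\Pr[\cE_0] \ge c^{2N_0}$. Then I would define $\cE_1$ as the event that in the first $2n$ rounds every observed reward is $0$ and each arm is chosen exactly $n$ times; on $\cE_0\cap\cE_1$ the history at round $2n+1$ consists of $N^*$ all-zero samples of each arm. The ``all rewards zero'' part is easy: conditioning on the history being all-zero up to round $t$, the next reward is $0$ with probability $\ge 1-\mu_{a_t}\ge c$, contributing a factor $\ge c^{2n}$. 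For the ``balanced picks'' part I would use properties (P1) and (P2): by (P1), whenever the history is symmetric and all-zero, each arm is chosen with probability exactly $1/2$, and by (P2) any asymmetric all-zero history is dominated (in terms of arm-$a$ choice probability) by the corresponding history where arm $a$ is flipped to all ones, which pins down certain coarse monotonicity features of the discrepancy walk $X_t=n_{1,t}-n_{2,t}$. A coupling with a symmetric $\pm 1$ random walk starting and ending at $0$ after $2n$ steps shows that the balanced-picks event occurs with conditional probability at least a polynomial factor in $n$, so $\Pr[\cE_1 \mid \cE_0] \ge \Omega_c(c^{2n})$.

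Finally, I would invoke Theorem~\ref{thm:interval} on the residual process starting from round $2n+1$, where both arms have $N^*$ all-zero samples; this yields a $0$-sampling failure for rounds $2n+1,\ldots,T$ with probability $\Omega_c(e^{-O_c(\eta)})$. Chaining, $\pfail \ge c^{2N_0}\cdot \Omega_c(c^{2n}) \cdot \Omega_c(e^{-O_c(\eta)}) = \Omega_c(c^{2N^*}) = \Omega_c(e^{-O_c(\eta)})$, and since at most $n$ agents (those in the bootstrap who pick arm~$1$) deviate from picking the bad arm, the conclusion is an $n$-sampling failure yielding the stated regret bound. The main obstacle will be Step~2, because (P1) controls only symmetric histories, (P2) is one-sided, and $\eta$-confidence by itself allows arbitrary preferences in asymmetric all-zero histories; the fix is to argue that any admissible bias of $X_t$ still leaves room to reach $X_{2n}=0$ with probability at least $\mathrm{poly}(1/n)$, which is negligible compared to $c^{2n}$ and so absorbs into the $\Omega_c$ notation.
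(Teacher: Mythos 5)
Your reduction has two genuine gaps, both concentrated in the step you yourself flag as the main obstacle. First, the ``balanced picks'' event can have probability \emph{zero}, not $\mathrm{poly}(1/n)$. Properties (P1) and (P2) say nothing about asymmetric all-zero histories, and $\eta$-confidence leaves the agent free there: in an all-zero history with $n_{1,t}>n_{2,t}$ the admissible indices are $[0,\sqrt{\eta/n_{a,t}}]$ for each arm, so an agent may set $\indx_{1,t}>0=\indx_{2,t}$ and deterministically repeat whichever arm was chosen first. Such a behavioral type satisfies (P1) (fair coin on symmetric all-zero histories) and (P2) (always choose an arm whose samples are all ones), yet the discrepancy walk $X_t=n_{1,t}-n_{2,t}$ never returns to $0$ after the first step, so your event $\cE_1$ is impossible for $n\geq 1$ and no coupling with a symmetric random walk can rescue it. Second, even granting $\cE_1$, you cannot invoke Theorem~\ref{thm:interval} as a black box with $N_0\leftarrow N^*$: that theorem's failure probability is computed over i.i.d.\ Bernoulli initial samples, and its proof requires the event $\sampfail_2$ that arm~2's running empirical average \emph{never} drops below $q_2\approx\mu_2-c\Delta/4>0$. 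Conditioning arm~2's first $N^*$ samples to be all zeros makes that event impossible, and with both arms all-zero nothing in $\eta$-confidence forces any agent to prefer arm~2, so the residual process need not fail at all.

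The paper's proof resolves exactly this tension by making the conditioning asymmetric: it takes arm~1's first $N^*$ tape entries to be all $0$ and arm~2's first $N^*$ entries to be all $1$ (jointly probability $\geq c^{2N^*}$), plus the martingale event of Lemma~\ref{lm:bad_arm_happy} keeping arm~2's subsequent average above $q_2=c/2$; then $\indx_{1,t}\leq\sqrt{\eta/N^*}\leq c/8<\indx_{2,t}$ once both arms have $N^*$ samples. The role of (P1) and (P2) is not to force balance but to prove Claim~\ref{cl:pr_samp3}: conditioned on these asymmetric events, the time of arm~1's $n$-th pick first-order stochastically dominates its value in the symmetric all-zeros world, where by (P1) arm~2 wins the race to $n$ picks with probability $\nicefrac12$. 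This sidesteps both of your gaps --- no exact balance is needed, and the residual dominance of arm~2 is built into the conditioning rather than borrowed from Theorem~\ref{thm:interval}. Your instinct about the $c^{2N^*}$ bookkeeping and the relevance of (P1)/(P2) is right, but the construction itself needs to change along these lines.
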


\medskip
If all agents are pessimistic, we find that \emph{any levels of pessimism}, whether small or large or different across agents, lead to a $0$-sampling failure with probability $\Omega_c(1/\sqrt{N_0})$, matching Corollary~\ref{cor:LB-unbiased} for the unbiased behavior. This happens in the (very reasonable) regime when
\begin{align}\label{eq:thm:pessimism-LB-regime}
\Omega_c(\eta) < N_0 < O(1/\Delta^2).
\end{align}

\begin{theorem}[pessimistic agents]\label{thm:pessimis}
Suppose each agent $t\in[T]$ is $\eta_t$-pessimistic, for some $\eta_t\geq 0$. Suppose assumptions \eqref{eq:assn-1} and \eqref{eq:assn-2} hold for $\eta = \max_{t\in[T]} \eta_t$. Then the $0$-sampling failure occurs with probability lower-bounded by
\refeq{eq:thm:interval-3}. Consequently,
    $ \regret(T) \geq \Omega_c\rbr{\Delta/\sqrt{N_0}}\cdot e^{-O_c\rbr{N_0\,\Delta^2}}$.
\end{theorem}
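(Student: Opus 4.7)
The plan is to couple the $\eta_t$-pessimistic dynamics with the \emph{unbiased} dynamics (all $\eta_t=0$) on the same reward-tape and tie-breaking coins, and to argue pathwise that every $0$-sampling failure of the unbiased process is also a $0$-sampling failure of the pessimistic one. The pessimistic failure probability will then be at least the unbiased one, and Corollary~\ref{cor:LB-unbiased} supplies the lower bound $\pfail\geq\Omega_c(\Delta)\cdot e^{-O_c(N_0\Delta^2)}$; multiplying by $\Delta\cdot T$ yields the regret bound. Note that assumption \eqref{eq:assn-2} of Theorem~\ref{thm:pessimis} (with $\eta=\max_t\eta_t$) trivially implies the weaker version of \eqref{eq:assn-2} needed in Corollary~\ref{cor:LB-unbiased} (where $\eta=0$).

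The coupling proceeds by induction on $t$. Let $\cE$ denote the unbiased failure event, on which $n_{1,t}=N_0$ and $n_{2,t}=N_0+t-1$ for every $t$, and $\muhat_{2,t}\geq \muhat_{1,t}$. Assume inductively that in the pessimistic process agents $1,\ldots,t-1$ have all pulled arm $2$; then the histories, counts, and empirical means in the pessimistic process coincide with those in the unbiased process. The $\eta_t$-pessimistic agent prefers arm $2$ iff
\[
\max\cbr{0,\;\muhat_{2,t}-\sqrt{\eta_t/n_{2,t}}}\ \geq\ \max\cbr{0,\;\muhat_{1,t}-\sqrt{\eta_t/N_0}}.
\]
Since $n_{2,t}\geq N_0$, the penalty $\sqrt{\eta_t/n_{2,t}}\leq \sqrt{\eta_t/N_0}$ hurts arm $1$ \emph{more} than arm $2$. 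Combined with $\muhat_{2,t}\geq \muhat_{1,t}$ from $\cE$, the argument of the left-hand $\max$ dominates that of the right-hand one, and truncation at $0$ preserves this domination. Hence arm $2$ wins (weakly) in the pessimistic process as well. In short, pessimism only sharpens the unbiased advantage of the better-sampled arm.

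The one subtlety is that truncation at $0$ may collapse strict inequalities into ties, forcing reliance on random tie-breaks. I would sidestep this by restricting $\cE$ to the subevent $\cE'$ enforcing a quantitative margin: $\muhat_{1,N_0}\leq \mu_2-C(c)/\sqrt{N_0}$ and $\muhat_{2,n}\geq \mu_2-C(c)/\sqrt{N_0}$ for every $n\geq N_0$, with a suitable constant $C(c)$. Assumption \eqref{eq:assn-1} gives $\mu_2>c$ and assumption \eqref{eq:assn-2} gives $\sqrt{\eta/N_0}\leq c/8$, so on $\cE'$ arm $2$'s pessimistic index is uniformly bounded away from $0$ and strictly exceeds arm $1$'s, making tie-breaking immaterial. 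The proof of Corollary~\ref{cor:LB-unbiased} essentially already produces this stronger subevent with probability $\Omega_c(\Delta)\cdot e^{-O_c(N_0\Delta^2)}$: a small-deviation estimate on $\muhat_{1,N_0}$ contributes the $\Delta$ prefactor and the exponential factor, and a ``random walk stays above a barrier'' estimate on the tape-average $\muhat^{\term{tape}}_{2,n}$ contributes a $\Theta_c(1)$ factor.

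The main obstacle is not the coupling algebra (it is essentially one line) but faithfully importing the quantitative subevent $\cE'$ from the unbiased analysis with matching constants, and verifying the two ``non-binding truncation'' facts under \eqref{eq:assn-1}--\eqref{eq:assn-2}. Both are routine once $C(c)$ is chosen consistently with the proof of Corollary~\ref{cor:LB-unbiased}, and together they turn the one-line pathwise coupling into a clean lower bound $\pfail\geq \Omega_c(\Delta)\cdot e^{-O_c(N_0\Delta^2)}$, proving Theorem~\ref{thm:pessimis}.
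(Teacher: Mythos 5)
Your proposal is correct and is essentially the paper's own argument: the paper likewise reuses the reward-tape events $\sampfail_1=\cbr{\muTape_{1,N_0}\le q_1}$ and $\sampfail_2=\cbr{\forall n:\ \muTape_{2,n}\ge q_2}$ together with the same probability estimate as in the unbiased case, and the deterministic step is exactly your observation that, since arm 2 always has $n\ge N_0$ samples, the pessimism penalty $\sqrt{\eta_t/n}$ hurts arm 1 at least as much as arm 2, so a positive margin $q_2>q_1$ forces arm 2's index to win strictly (no $e^{-O(\eta)}$ loss). The one adjustment to your write-up is that your subevent $\cE'$ should use two distinct barriers (the paper takes $q_1=\mu_2-c\Delta/4$ and $q_2=\mu_2-c\Delta/8$) rather than the single threshold $\mu_2-C(c)/\sqrt{N_0}$ for both arms, precisely to secure the strict separation that makes tie-breaking and the truncation at $0$ immaterial.
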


Note that we allow extremely pessimistic agents ($\eta_t\sim \log T$), and that the pessimism level $\eta_t$ can be different for different agents $t$. The relevant parameter is $\eta = \max_{t\in[T]} \eta_t$, the highest level of pessimism among the agents. However, the failure probability in \eqref{eq:thm:interval-3} does not contain the $e^{-\eta}$ term.
(The dependence on $\eta$ ``creeps in" through assumption \eqref{eq:assn-2},
\ie that $N_0>\Omega_c(\eta)$.)

\subsection{Proofs overview  and probability tools}
\label{sec:fail-overview}

Our proofs rely on two tools from Probability (proved in Appendix~\ref{app:tools}): a sharp  anti-concentration inequality for Binomial distribution and a lemma that encapsulates a martingale argument.

\begin{lemma}[anti-concentration]\label{lm:good_arm_sad}
Let $(X_i)_{i\in \N}$ be a sequence of independent Bernoulli random variables with mean $p\in [c,1-c]$, for some $c\in (0,\nicefrac12)$ interpreted as an absolute constant. Then
\begin{align}\label{eq:lm:good_arm_sad}
\rbr{\forall n\geq 1/c,\; q \in (c/8,\,p)}\qquad
\Pr{ \tfrac{1}{n}\;{\textstyle \sum_{i=1}^{n}}\; X_i \leq q}
    \ge \Omega_c\rbr{ e^{-O_c\rbr{n(p - q)^2}} }.
\end{align}
\end{lemma}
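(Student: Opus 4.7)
The plan is to lower-bound $\Pr{\tfrac{1}{n}\sum_i X_i \le q}$ by summing the point masses $\Pr{S_n = k}$ (where $S_n := \sum_i X_i$) over a window of $\Theta(\sqrt{n})$ consecutive integers $k$ just below $qn$, bounding each point mass from below via Stirling's formula. The two ingredients I would need are: (i) the standard Stirling estimate $\binom{n}{k} \ge \frac{C_1(c)}{\sqrt{n}}\, e^{n H(k/n)}$, valid whenever $k/n$ stays in a compact subinterval of $(0,1)$ bounded away from $\{0,1\}$ by a constant depending on $c$ (here $H$ denotes the binary entropy), and (ii) the Taylor bound $D(k/n \,\|\, p) \le O_c\!\rbr{(p - k/n)^2}$ for the binary KL divergence $D$, which follows from the fact that $\partial_x^2 D(x\|p) = 1/(x(1-x))$ is $O_c(1)$ on the relevant range.

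With these in hand, I would choose the window $W = [\lfloor qn \rfloor - \lfloor \sqrt{n}\rfloor,\, \lfloor qn \rfloor]$. Provided $n \ge n_c := \Theta(1/c^2)$, every $k \in W$ satisfies $k/n \in [c/16,\, 1-c/16]$ (using $q > c/8$), and $|p - k/n| \le (p-q) + 1/\sqrt{n}$. Combining (i) and (ii) then yields $\Pr{S_n = k} \ge \frac{C_1(c)}{\sqrt{n}}\, e^{-O_c\rbr{n(p-q)^2 + 1}}$ for each such $k$. Summing the $\Theta(\sqrt{n})$ terms in $W$ cancels the $1/\sqrt{n}$ factor and gives the claimed bound $\Pr{S_n \le qn} \ge \Omega_c\!\rbr{e^{-O_c(n(p-q)^2)}}$.

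The regime $1/c \le n < n_c$ is not covered by the window argument, since the window could drop below $(c/16)\,n$. Here the target itself is only $\Omega_c(1)$, as $n(p-q)^2 \le n \le O_c(1)$, so I would fall back on the crude single-term bound $\Pr{S_n = \lfloor qn \rfloor} \ge c^n \ge c^{n_c} = \Omega_c(1)$, using only $\binom{n}{k} \ge 1$ and $\min(p, 1-p) \ge c$. The main obstacle is not any single estimate but the bookkeeping of the $c$-dependent constants: choosing them so that the window stays uniformly away from $\{0,n\}$, so that the $1/\sqrt{n}$ slack in $|p - k/n|$ is absorbed (which is non-negligible precisely in the small-gap regime $p - q < 1/\sqrt{n}$, where it in fact dominates and makes the bound reduce to $\Omega_c(1)$), and so that they propagate cleanly through Stirling and the Taylor control of $D$ without degrading the $e^{-O_c(n(p-q)^2)}$ rate.
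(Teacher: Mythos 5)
Your proposal is correct, but it takes a genuinely different route from the paper. The paper obtains \eqref{eq:lm:good_arm_sad} by citing two black-box results: a sharp lower bound on the binomial lower tail (Theorem 9 of \cite{zhang2020non}), which already gives $\Pr{\sum_i X_i \le nq} \ge c_\beta e^{-C_\beta\, n D(q\|p)}$ with no $1/\sqrt{n}$ loss, and a reverse Pinsker inequality from \cite{gotze2019higher} to convert $D(q\|p)$ into $O_c\rbr{(p-q)^2}$; the only real work in the paper's proof is verifying the hypotheses of the cited theorem (which is where the conditions $n \ge 1/c$ and $q > c/8$ enter). You instead give a self-contained elementary argument: Stirling plus summation of the $\Theta(\sqrt{n})$ point masses in the window $\sbr{\lfloor qn\rfloor - \lfloor\sqrt{n}\rfloor,\, \lfloor qn\rfloor}$, with the Taylor/second-derivative bound $D(x\|p) \le O_c\rbr{(p-x)^2}$ playing the role of reverse Pinsker. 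This windowing is precisely the device needed to remove the extra $1/\sqrt{n}$ factor that the paper itself warns a naive single-term Stirling estimate would incur (Section \ref{sec:fail-overview}), and your accounting is sound: the $O(1/\sqrt{n})$ slack in $|p - k/n|$ contributes only $O(1)$ to the exponent after multiplying by $n$, the constraint $q > c/8$ together with $n \ge \Theta_c(1/c^2)$ keeps $k/n$ bounded away from $0$ so that both Stirling and the curvature bound on $D$ apply with $c$-dependent constants, and the remaining regime $1/c \le n < \Theta_c(1/c^2)$ is correctly dispatched by the crude bound $\Pr{S_n = \lfloor qn\rfloor} \ge c^n = \Omega_c(1)$, which suffices because the target is itself only $\Omega_c(1)$ there. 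What the paper's approach buys is brevity and an exponent expressed in terms of the KL divergence (potentially sharper before the reverse-Pinsker step); what yours buys is independence from the two external references at the cost of more constant bookkeeping. Either proof supports the rest of the paper unchanged.
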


\begin{lemma}[martingale argument]\label{lm:bad_arm_happy}
In the setting of Lemma~\ref{lm:good_arm_sad},
\begin{align}\label{eq:lm:bad_arm_happy}
\forall q \in [0,p)\qquad
\Pr{\forall n\geq 1:\quad \tfrac{1}{n}\;{\textstyle \sum_{i=1}^{n}}\; X_i \geq q}
    \ge \Omega_c(p - q).
\end{align}
\end{lemma}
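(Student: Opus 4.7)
The plan is to apply a classical Lundberg/exponential-martingale argument to the random walk $S_n := \sum_{i=1}^n (X_i - q)$, which has positive drift $\Delta := p - q > 0$. For $q=0$ the walk is nonnegative deterministically, so I would assume $q\in (0, p)$ throughout. First, conditioning on $X_1 = 1$ (probability $p \geq c$) places the walk at $S_1 = 1-q > 0$, and by independence of $X_2, X_3,\ldots$,
\begin{align*}
\Pr{\forall n\geq 1:\; S_n \geq 0} \;\geq\; p \cdot \Pr{\inf_{k\geq 1} S'_k \geq -(1-q)},
\end{align*}
where $(S'_k)$ is an independent copy of $(S_n)$. This reduces the problem to a one-sided ruin estimate for a random walk with positive drift, started at $0$, against the barrier $-(1-q)$.

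For the ruin estimate, I would introduce the Lundberg exponent $\theta^* > 0$: the unique positive root of $\phi(\theta) := \Ex{e^{-\theta(X_1 - q)}} = 1$. Existence is immediate from convexity of $\phi$ together with $\phi(0)=1$, $\phi'(0) = -\Delta < 0$, and $\phi(\theta)\to \infty$ as $\theta\to\infty$ (using $q > 0$). Then $M_n := e^{-\theta^* S_n}$ is a nonnegative martingale with $M_0 = 1$. Applying the optional stopping theorem to $\tau \wedge N$, where $\tau := \inf\cbr{n\geq 1 : S_n \leq -(1-q)}$, and using $M_\tau \geq e^{\theta^*(1-q)}$ on $\cbr{\tau<\infty}$ together with $M_n \geq 0$, yields (upon letting $N\to\infty$) the classical Lundberg bound
\begin{align*}
\Pr{\inf_{k\geq 1} S'_k \leq -(1-q)} \;\leq\; e^{-\theta^*(1-q)}.
\end{align*}

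The main quantitative step, and the main obstacle, is to show $\theta^* \geq \Omega_c(\Delta)$. I would prove the slightly stronger $\theta^* \geq \Delta$ by verifying $\phi(\Delta) \leq 1$ directly. Using the elementary inequalities $e^{-x} \leq 1 - x + x^2/2$ for $x \geq 0$ and $e^y \leq 1 + y + y^2$ for $y \in [0,1]$, applied to $x = \Delta(1-q)$ and $y = \Delta q$ (both in $[0,1]$ since $\Delta \leq 1$), a short expansion collapses, via $p(1-q)-(1-p)q=\Delta$, to
\begin{align*}
\phi(\Delta) \;\leq\; 1 - \Delta^2 + \Delta^2 \bigl[\tfrac12\, p(1-q)^2 + (1-p)\, q^2\bigr] \;\leq\; 1,
\end{align*}
where the bracketed coefficient is at most $\tfrac{p}{2} + (1-p) \leq 1$. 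Convexity of $\phi$ with $\phi(0)=\phi(\theta^*)=1$ and $\phi'(0)<0$ then forces $\theta^* \geq \Delta$.

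Finally, since $q < p \leq 1-c$, we have $1-q \geq c$, so $\theta^*(1-q) \geq c\Delta$; and since $c\Delta \leq 1$ (as $c\leq 1/2$ and $\Delta \leq 1$), the inequality $1 - e^{-x} \geq x/2$ on $[0,1]$ yields $1 - e^{-\theta^*(1-q)} \geq c\Delta/2$. Combining with the first display gives
\begin{align*}
\Pr{\forall n\geq 1:\; S_n \geq 0} \;\geq\; p \cdot \tfrac{c\Delta}{2} \;\geq\; \tfrac{c^2}{2}\cdot (p-q) \;=\; \Omega_c(p-q),
\end{align*}
as claimed.
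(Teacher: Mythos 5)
Your proof is correct, and its skeleton coincides with the paper's: both condition on $X_1=1$ to buy a head-start of $1-q$, both run the same exponential martingale (your $e^{-\theta^* S_n}$ is exactly the paper's $Z_n = u^{\sum Y_i}$ with $u=e^{-\theta^*}$, and your defining equation $\Ex{e^{-\theta^*(X_1-q)}}=1$ is precisely the paper's \eqref{eq:u_def}), and Ville's inequality versus optional stopping at $\tau\wedge N$ is a distinction without a difference. Where you genuinely diverge is in the quantitative core, i.e.\ the analogue of the paper's Lemma~\ref{lm:u_def}. The paper bounds $u$ by the critical point $x_0=\frac{(1-p)q}{p(1-q)}$ of $f$ and then proves $x_0^{1-q}\le 1-p+q$ via a somewhat delicate comparison of derivatives in $\eps=p-q$ on both sides of a logarithmic inequality. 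You instead lower-bound the Lundberg exponent directly, showing $\phi(\Delta)\le 1$ by a second-order Taylor estimate whose linear terms collapse via $p(1-q)-(1-p)q=\Delta$, and then invoke convexity to get $\theta^*\ge\Delta$. Your route is shorter and more transparent (the only inequalities used are $e^{-x}\le 1-x+x^2/2$ and $e^{y}\le 1+y+y^2$ on $[0,1]$), at the cost of a slightly worse constant ($c^2/2$ versus the paper's $c$), which is immaterial under the $\Omega_c(\cdot)$ convention. Your separate handling of $q=0$ (where the event is deterministic) is also needed and correctly noted, since $\phi(\theta)\not\to\infty$ when $q=0$; the paper's Lemma~\ref{lm:u_def} likewise assumes $q>0$.
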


The overall argument will be as follows. We will use Lemma~\ref{lm:good_arm_sad} to upper-bound the average reward of arm 1, \ie the good arm, by some threshold $q_1$. This upper bound will only be guaranteed to hold when this arm is sampled exactly $N$ times, for a particular $N\geq N_0$.  Lemma~\ref{lm:bad_arm_happy} will allow us to uniformly \emph{lower}-bound the average reward of arm 2, \ie the bad arm, by some threshold $q_2\in(q_1, \mu_2)$. Focus on the round $t^*$ when the good arm is sampled for the $N$-th time (if this ever happens). If the events in both lemmas hold, from round $t^*$ onwards the bad arm will have a larger average reward by a constant margin $q_2-q_1$. We will prove that this implies that the bad arm has a larger index, and therefore gets chosen by the agents. The details of this argument differ from one theorem to another.

Lemma~\ref{lm:good_arm_sad} is a somewhat non-standard statement which follows from the anti-concentration inequality in \cite{zhang2020non} and a reverse Pinsker inequality in \cite{gotze2019higher}. More standard anti-concentration results via Stirling's approximation lead to an additional factor of $1/\sqrt{n}$ on the right-hand side of \eqref{eq:lm:good_arm_sad}. For Lemma~\ref{lm:bad_arm_happy}, we introduce an exponential martingale and relate the  event in \eqref{eq:lm:bad_arm_happy} to a deviation of this martingale. We then use Ville's inequality (a version of Doob's martingale inequality) to bound the probability that this deviation occurs.

\subsection{Proof of Theorem~\ref{thm:interval}: $\eta$-confident agents}
\label{sec:fail-mainpf}

%

Fix thresholds $q_1<q_2$ to be specified later. Define two ``failure events":
\begin{description}
\item[$\sampfail_1$:] the average reward of arm 1 after the $N_0$ initial samples  is below $q_1$;
\item[$\sampfail_2$:] the average reward of arm 2 is never below $q_2$.
\end{description}
In a formula, using the reward-tape notation from Section~\ref{prelims-prelims}, these events are
\begin{align}
\sampfail_1 := \cbr{\muTape_{1,\,N_0} \le q_1}
\quad\text{and}\quad
\sampfail_2 := \cbr{\forall n \in [T]: \muTape_{2,n} \ge q_2}.
\label{eq:def-q-LB}
\end{align}

We show that event
    $\sampfail := \sampfail_1 \cap \sampfail_2$
implies the $0$-sampling failure, as long as the margin $q_2-q_1$ is sufficiently large.

\begin{claim}\label{claim:sampfail_implies_sample_failure}
Assume
    $q_2-q_1 > 2\cdot\sqrt{\eta/N_0}$
and event $\sampfail$. Then arm 1 is never chosen by the agents.
\end{claim}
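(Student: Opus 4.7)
The plan is to prove the claim by induction on $t \in [T]$, with the inductive hypothesis that arm $1$ has not been chosen in any round $1,\ldots,t-1$, and conclude that arm $2$ has a strictly larger index in round $t$.

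Under the inductive hypothesis, the only samples of arm $1$ available to agent $t$ are the $N_0$ initial samples, so $n_{1,t}=N_0$ and $\muhat_{1,t}=\muTape_{1,\,N_0}$. By event $\sampfail_1$ this gives $\muhat_{1,t}\le q_1$, and hence from the definition~\eqref{eq:def_ucb},
\[
  \ucb_{1,t}\;\le\;\muhat_{1,t}+\sqrt{\eta/N_0}\;\le\;q_1+\sqrt{\eta/N_0}.
\]
For arm $2$ we have $n_{2,t}=N_0+(t-1)\ge N_0$, so $\muhat_{2,t}=\muTape_{2,\,n_{2,t}}$, and event $\sampfail_2$ yields $\muhat_{2,t}\ge q_2$. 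Since the ``max with $0$'' in the LCB only increases the bound,
\[
  \lcb_{2,t}\;\ge\;\muhat_{2,t}-\sqrt{\eta/n_{2,t}}\;\ge\;q_2-\sqrt{\eta/N_0}.
\]

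Combining these two bounds with the $\eta$-confidence constraint~\eqref{eq:ind-conf} and the hypothesis $q_2-q_1>2\sqrt{\eta/N_0}$,
\[
  \indx_{2,t}\;\ge\;\lcb_{2,t}\;\ge\;q_2-\sqrt{\eta/N_0}\;>\;q_1+\sqrt{\eta/N_0}\;\ge\;\ucb_{1,t}\;\ge\;\indx_{1,t}.
\]
So the index of arm $2$ strictly exceeds that of arm $1$, meaning agent $t$ deterministically picks arm $2$ (no tie-breaking needed). This closes the induction and proves that arm $1$ is never chosen.

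There is no real obstacle here; the argument is essentially book-keeping on the confidence bounds. The only mildly delicate point is verifying that the truncations by $0$ and $1$ in the definitions of LCB/UCB go in the favorable direction, which they do: $\ucb_{1,t}$ can only be smaller than the raw bound $\muhat_{1,t}+\sqrt{\eta/N_0}$, and $\lcb_{2,t}$ can only be larger than $\muhat_{2,t}-\sqrt{\eta/n_{2,t}}$.
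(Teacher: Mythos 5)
Your proof is correct and follows essentially the same argument as the paper's: the paper phrases it as a contradiction at the first round where arm 1 is chosen, while you phrase it as an induction, but the bounds $\indx_{1,t}\le q_1+\sqrt{\eta/N_0}$ and $\indx_{2,t}\ge q_2-\sqrt{\eta/N_0}$ and the observation that the truncations only help are identical.
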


\begin{proof}
Assume, for the sake of contradiction, that some agent chooses arm $1$. Let $t$ be the first round when this happens. Note that
    $\indx_{1, t} \ge \indx_{2, t}$.
We will show that this is not possible by upper-bounding $\indx_{1, t}$ and lower-bounding $\indx_{2, t}$.

By definition of round $t$, arm 1 has been previously sampled exactly $N_{0}$ times. Therefore,
\begin{align*}
\indx_{1, t}
    &\leq \muTape_{1,\,N_0} +  \sqrt{\eta/N_0}
        &\EqComment{by definition of index} \\
    &\leq q_1 + \sqrt{\eta/N_0}
        &\EqComment{by $\sampfail_1$} \\
    &< q_2 - \sqrt{\eta/N_0}
        &\EqComment{by assumption}.
    \end{align*}

Let $n$ be the number of times arm 2 has been sampled before round $t$. This includes the initial samples, so $n\ge N_0$. It follows that
\begin{align*}
\indx_{2, t}
    &\geq \muTape_{2,n} -  \sqrt{\eta/n}
        &\EqComment{by definition of index} \\
    &\geq q_2 - \sqrt{\eta/N_0}
        &\EqComment{by $\sampfail_2$ and $n\geq N_0$}.
\end{align*}
Consequently, $\indx_{2, t} > \indx_{1, t}$, contradiction.
\end{proof}

In what follows, let $c$ be the absolute constant from assumption \eqref{eq:assn-1}.

Let us lower bound $\Pr{\sampfail}$ by applying Lemmas~\ref{lm:good_arm_sad} and~\ref{lm:bad_arm_happy} to the reward-tape.

\begin{claim}\label{cl:pf:thm:interval-prob}
Assume $c/4<q_1<q_2<\mu_2$. Then
\begin{align}\label{eq:pf:thm:interval-prob}
\Pr{\sampfail}
    \geq \qfail := \Omega_c(\mu_2 - q_2) \cdot e^{-O_c\rbr{N_0(\mu_1 - q_1)^2}}.
\end{align}
\end{claim}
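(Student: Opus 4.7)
The plan is to decouple the two failure sub-events via independence, then apply one probability tool to each. Events $\sampfail_1$ and $\sampfail_2$ are functions of disjoint strips of the reward-tape: $\sampfail_1$ depends only on $(\tape_{1,i})_{i\in[N_0]}$, while $\sampfail_2$ depends only on $(\tape_{2,i})_{i\in[T]}$. Since all tape entries are mutually independent, these two events are independent, so $\Pr{\sampfail} = \Pr{\sampfail_1}\cdot\Pr{\sampfail_2}$.

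For the lower bound on $\Pr{\sampfail_1}$, I would apply Lemma~\ref{lm:good_arm_sad} (anti-concentration) to the i.i.d.\ $\bernoul(\mu_1)$ sequence $(\tape_{1,i})_{i\in[N_0]}$ with threshold $q=q_1$. I need to check the three hypotheses: (i) the mean $\mu_1$ lies in $[c,1-c]$, which follows from assumption~\eqref{eq:assn-1}; (ii) the sample size $N_0 \ge 1/c$, which follows from assumption~\eqref{eq:assn-2}; (iii) the threshold satisfies $q_1 \in (c/8,\mu_1)$, which holds because by hypothesis $q_1 > c/4 > c/8$ and $q_1 < q_2 < \mu_2 < \mu_1$. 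The lemma then gives
\[
\Pr{\sampfail_1} \;=\; \Pr{\muTape_{1,N_0}\le q_1} \;\ge\; \Omega_c\!\rbr{e^{-O_c(N_0(\mu_1-q_1)^2)}}.
\]

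For the lower bound on $\Pr{\sampfail_2}$, I would apply Lemma~\ref{lm:bad_arm_happy} (martingale argument) to the i.i.d.\ $\bernoul(\mu_2)$ sequence $(\tape_{2,i})_{i\in\N}$ with threshold $q=q_2$. The required hypothesis $q_2 \in [0,\mu_2)$ holds by assumption. The lemma's conclusion is a lower bound for the event $\{\forall n\ge 1:\muTape_{2,n}\ge q_2\}$, which is a subset of $\sampfail_2 = \{\forall n\in[T]:\muTape_{2,n}\ge q_2\}$. Thus
\[
\Pr{\sampfail_2} \;\ge\; \Omega_c(\mu_2-q_2).
\]
Multiplying the two bounds using independence yields exactly $\qfail$, completing the claim.

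I do not expect any real obstacle: the proof is essentially a bookkeeping step that plugs Lemmas~\ref{lm:good_arm_sad} and~\ref{lm:bad_arm_happy} into the right places. The only subtlety is verifying the interval conditions on $q_1$ and $q_2$ (in particular that $q_1 > c/8$ and $q_2 < \mu_2$) and recognizing that the independence between the two arms' tapes is what turns a product-of-probabilities bound into the desired single inequality.
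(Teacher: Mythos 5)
Your proposal is correct and follows essentially the same route as the paper's own proof: independence of the two arms' tape strips, Lemma~\ref{lm:good_arm_sad} applied to arm $1$ with $n=N_0$ and $q=q_1$, and Lemma~\ref{lm:bad_arm_happy} applied to arm $2$ with $q=q_2$. Your explicit verification of the hypothesis $q_1\in(c/8,\mu_1)$ is slightly more careful than the paper's write-up, but the argument is the same.
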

\begin{proof}
To handle $\sampfail_1$, apply Lemma~\ref{lm:good_arm_sad} to the reward-tape for arm $1$, \ie to the random sequence $(\tape_{1,i})_{i\in[T]}$, with $n=N_0$ and $q=q_1$. Recalling that $N_0 \geq 1/c$ by assumption \eqref{eq:assn-2},
\begin{align}\label{eq:pf:thm:interval-prob-1}
\Pr{\sampfail_1} \geq
    \Omega_c\rbr{e^{-O_c\rbr{N_0(\mu_1 - q_1)^2}}}.
\end{align}

To handle $\sampfail_2$, apply Lemma~\ref{lm:bad_arm_happy}  to the reward-tape for arm $2$, \ie to the random sequence $(\tape_{2,i})_{i\in[T]}$, with threshold $q=q_2$. Then
\begin{align}\label{eq:pf:thm:interval-prob-2}
\Pr{\sampfail_2} \geq \Omega_c(\mu_2 - q_2).
\end{align}
Events $\sampfail_1$ and $\sampfail_2$ are independent, because they are determined by, resp., realized rewards of arm $1$ and realized rewards of arm $2$. The claim follows.
\end{proof}
Finally, let us specify suitable thresholds that satisfy the preconditions in
Claims~\ref{claim:sampfail_implies_sample_failure}
and~\ref{cl:pf:thm:interval-prob}:
\begin{align*}
q_1 := \mu_2 - 4\cdot\sqrt{\eta/N_0} -
    c'/\sqrt{N_0}
    \quad\text{and}\quad
q_2 := \mu_2 - \sqrt{\eta/N_0} -
    c'/\sqrt{N_0},
\end{align*}
where $c' = c/4$.
Plugging in $\mu_2\geq c$ and $N_0\geq \max(64\cdot\eta/c^2,1)$, it is easy to check that $q_1\geq c/4$, as needed for Claim~\ref{cl:pf:thm:interval-prob}.
Thus, the preconditions in
Claims~\ref{claim:sampfail_implies_sample_failure}
and~\ref{cl:pf:thm:interval-prob} are satisfied. It follows that the $0$-failure happens with probability at least $\qfail$, as defined in Claim~\ref{cl:pf:thm:interval-prob}. We obtain the final expression in \refeq{eq:thm:interval-1} because
    $\mu_1-q_1\leq O_c(\Delta + \sqrt{(1+\eta)/N_0})$
and
    $\mu_2-q_2\geq \Omega_c(\sqrt{(1+\eta)/N_0})$.

\subsection{Proof of Theorem~\ref{thm:pessimis}: pessimistic agents}
\label{sec:fail-pessimistic}

We reuse the machinery from Section~\ref{sec:fail-mainpf}: we define event
    $\sampfail := \sampfail_1 \cap \sampfail_2$
as per \refeq{eq:def-q-LB}, for some thresholds $q_1<q_2$ to be specified later, and use Claim~\ref{cl:pf:thm:interval-prob} to bound $\Pr{\sampfail}$. However, we need a different argument to prove that $\sampfail$ implies the $0$-sampling failure, and a different way to set the thresholds.

\begin{claim}\label{claim:pessimism_sampfail_implies_sample_failure}
Assume $q_1 > \sqrt{\eta/N_0}$ and event $\sampfail$. Then arm 1 is never chosen by the agents.
\end{claim}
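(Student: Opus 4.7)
The plan is to argue by contradiction, paralleling Claim~\ref{claim:sampfail_implies_sample_failure} but exploiting that an $\eta_t$-pessimistic agent subtracts (rather than adds) its confidence term for both arms, so these terms cancel rather than compound. Suppose, for contradiction, that some agent picks arm $1$, and let $t$ be the earliest such round. Since no earlier agent chose arm $1$, it has been sampled exactly $N_0$ times, so $\muhat_{1,t} = \muTape_{1,N_0} \le q_1$ by $\sampfail_1$. Let $n \ge N_0$ denote the number of samples of arm $2$ by round $t$; by $\sampfail_2$, $\muhat_{2,t} \ge q_2$. The choice of arm $1$ means $\indx_{1,t} \ge \indx_{2,t}$, which I aim to contradict.

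The key step is to bound both indices without the $\max\{0,\cdot\}$ truncation in the LCB getting in the way. The assumption $q_1 > \sqrt{\eta/N_0}$ combined with $\eta_t \le \eta$ gives $q_1 - \sqrt{\eta_t/N_0} > 0$, so I can upper-bound $\indx_{1,t} = \max\{0,\, \muhat_{1,t} - \sqrt{\eta_t/N_0}\} \le q_1 - \sqrt{\eta_t/N_0}$ using $\muhat_{1,t} \le q_1$. In the other direction, since we have $q_2 > q_1$ (implicit in the choice of thresholds for $\sampfail$), we get $\muhat_{2,t} \ge q_2 > \sqrt{\eta/N_0} \ge \sqrt{\eta_t/n}$ using $n \ge N_0$, so the truncation is likewise inactive on arm $2$ and $\indx_{2,t} = \muhat_{2,t} - \sqrt{\eta_t/n} \ge q_2 - \sqrt{\eta_t/N_0}$. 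Subtracting the two bounds, the $\sqrt{\eta_t/N_0}$ terms cancel and $\indx_{2,t} - \indx_{1,t} \ge q_2 - q_1 > 0$, contradicting $\indx_{1,t} \ge \indx_{2,t}$.

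I expect the main obstacle is just keeping the truncation bookkeeping straight; the hypothesis $q_1 > \sqrt{\eta/N_0}$ is exactly the leverage that lets both truncations be bypassed, since it propagates to $q_2$ via $q_2 > q_1$. Conceptually, this cancellation is what should eventually let Theorem~\ref{thm:pessimis} avoid the $e^{-\eta}$ factor that appears in Theorem~\ref{thm:interval}: in the general $\eta$-confident case, arm $1$ is upper-bounded by its UCB (adding $\sqrt{\eta/N_0}$) while arm $2$ is lower-bounded by its LCB (subtracting $\sqrt{\eta/N_0}$), forcing a $2\sqrt{\eta/N_0}$ gap between $q_1$ and $q_2$; under pure pessimism both arms' confidence terms subtract, so no such slack is needed and the thresholds can be chosen with a gap of order only $\Delta$.
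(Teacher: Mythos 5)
Your argument is correct and is essentially the paper's own proof: contradiction at the first round arm 1 is chosen, upper-bound $\indx_{1,t}$ by $q_1-\sqrt{\eta_t/N_0}$ (the hypothesis $q_1>\sqrt{\eta/N_0}$ deactivating the truncation), lower-bound $\indx_{2,t}$ by $q_2-\sqrt{\eta_t/N_0}$, and let the confidence terms cancel. Your version is if anything slightly more careful than the paper's write-up in tracking $\eta_t$ versus $\eta=\max_t\eta_t$, and your closing remark correctly identifies why this cancellation is what removes the $e^{-\eta}$ factor in Theorem~\ref{thm:pessimis}.
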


\begin{proof}
Assume, for the sake of contradiction, that some agent chooses arm $1$. Let $t$ be the first round when this happens. Note that
    $\indx_{1, t} \ge \indx_{2, t}$.
We will show that this is not possible by upper-bounding $\indx_{1, t}$ and lower-bounding $\indx_{2, t}$.

By definition of round $t$, arm 1 has been previously sampled exactly $N_{0}$ times. Therefore,
\begin{align*}
\indx_{1, t}
    &= \max\{0, \muTape_{1,\,N_0} -\sqrt{\eta/N_0}\}
        &\EqComment{by definition of index} \\
    &\leq \max\{0, q_1 - \sqrt{\eta/N_0}\}
        &\EqComment{by $\sampfail_1$} \\
    &= q_1 - \sqrt{\eta/N_0}
        &\EqComment{by assumption}.
\end{align*}

Let $n$ be the number of times arm 2 has been sampled before round $t$. This includes the initial samples, so $n\ge N_0$. It follows that
\begin{align*}
\indx_{2, t}
    &\geq \muTape_{2,n} -  \sqrt{\eta/n}
        &\EqComment{by definition of index} \\
    &\geq q_2 - \sqrt{\eta/N_0}
        &\EqComment{by $\sampfail_2$ and $n\geq N_0$}.
\end{align*}
Since $q_2>q_1$, it follows that $\indx_{2, t} > \indx_{1, t}$, contradiction.
\end{proof}

Now, set the thresholds $q_1,q_2$ as follows:
\begin{align*}
    q_1 := \mu_2 -
        2c'/\sqrt{N_0}
        \quad\text{and}\quad
    q_2 := \mu_2 -
        c'/\sqrt{N_0},
    \end{align*}
where $c' = c/8$.
Plugging in $\mu_2\geq c$ and $N_0\geq \max(64\cdot\eta/c^2,1)$,
it is easy to check that
$q_1 > \sqrt{\eta/N_0}$ and $q_1 \ge c/4$
as needed for Claim~\ref{cl:pf:thm:interval-prob} and
Claim \ref{claim:pessimism_sampfail_implies_sample_failure} respectively.
Thus, the preconditions in
Claims~\ref{cl:pf:thm:interval-prob}
and~\ref{claim:pessimism_sampfail_implies_sample_failure} are satisfied. So, the $0$-failure happens with probability at least $\qfail$ from  Claim~\ref{cl:pf:thm:interval-prob}. The final expression in \refeq{eq:thm:interval-1} follows because
    $\mu_1-q_1 \leq O_c(\Delta + 1/\sqrt{N_0})$
and
    $\mu_2-q_2 = \Omega_c(1/\sqrt{N_0})$.

\medskip


\subsection{Proof of Theorem~\ref{thm:LB-ext}: small $N_0$}
\label{sec:fail-extension}

We focus on the case when
    $N_0\leq N^* := \ceil{64\eta/c^2 + 1/c}$.
We can now afford to handle the initial samples in a very crude way: our failure events posit that all initial samples of the good arm return reward $0$, and all initial samples of the bad arm return reward $1$.
\begin{align*}
\sampfail_1
    &:= \setnot{\forall i \in [1, N^*]: \tape_{1, i} = 0},\\
\sampfail_2
    &:= \setnot{
      \forall i \in [1, N^*]: \tape_{2, i} = 1 \quad\text{ and }\quad
      \forall i\in[T]: \muTape_{2, i} \ge q_2}.
\end{align*}
Here, $q_2>0$ is the threshold to be defined later.

On the other hand, our analysis given these events becomes more subtle. In particular, we introduce another ``failure event" $\sampfail_3$, with a more subtle definition: if arm 1 is chosen by at least
    $n:=N^* - N_0$
agents, then arm 2 is chosen by $n$ agents before arm 1 is.


We first show that $\sampfail := \sampfail_1 \cap \sampfail_2 \cap \sampfail_3$ implies the $n$-sampling failure.

\begin{claim}\label{cl:sampfail_implies_samp_fail_hetero}
Assume that $q_2 \ge c/4$ and $\sampfail$ holds. Then at most $n = N^* -N_0$ agents choose arm 1.
\end{claim}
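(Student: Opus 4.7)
The plan is to derive a contradiction from the assumption that more than $n = N^* - N_0$ agents choose arm $1$. Let $t$ denote the round in which arm $1$ is chosen for the $(n+1)$-st time (which exists by assumption). I would then bound the indices $\indx_{1,t}$ and $\indx_{2,t}$ under event $\sampfail = \sampfail_1 \cap \sampfail_2 \cap \sampfail_3$ and show that $\indx_{2,t} > \indx_{1,t}$, contradicting the fact that arm $1$ was selected at round $t$.

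For arm $1$: by round $t$, arm $1$ has been sampled exactly $N_0 + n = N^*$ times, counting the initial samples and the $n$ prior agent selections. By $\sampfail_1$, all of these samples take value $0$, so $\muhat_{1,t} = 0$ and consequently $\indx_{1,t} \leq \ucb_{1,t} = \sqrt{\eta/N^*}$. For arm $2$: the hypothesis that arm $1$ is chosen at least $n$ times activates the clause of $\sampfail_3$, which gives that by round $t$, arm $2$ has already been chosen by at least $n$ agents. Together with the $N_0$ initial samples, this yields a sample count $n_{2,t} \geq N^*$. By the second clause of $\sampfail_2$, $\muhat_{2,t} \geq q_2 \geq c/4$, so $\indx_{2,t} \geq \lcb_{2,t} \geq q_2 - \sqrt{\eta/n_{2,t}} \geq q_2 - \sqrt{\eta/N^*}$ (where the truncation at $0$ in the LCB is not binding since $q_2 \geq c/4 > \sqrt{\eta/N^*}$, see below).

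Combining the two bounds, $\indx_{2,t} - \indx_{1,t} \geq q_2 - 2\sqrt{\eta/N^*} \geq c/4 - 2\sqrt{\eta/N^*}$. Since $N^* = \lceil 64\eta/c^2 + 1/c \rceil > 64\eta/c^2$ strictly (thanks to the $1/c$ additive term), we get $\sqrt{\eta/N^*} < c/8$, hence $\indx_{2,t} > \indx_{1,t}$ strictly. This contradicts the selection of arm $1$ at round $t$, completing the proof.

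The only delicate point is correctly applying $\sampfail_3$ to conclude that arm $2$ has accumulated at least $n$ agent-samples by round $t$ — once this bookkeeping is in place, the index comparison is routine. One must also be attentive to the fact that $\sqrt{\eta/N^*} < c/8$ must be \emph{strict} (so that the tie-breaking rule cannot resurrect arm $1$); this is precisely why the definition of $N^*$ contains the additive $1/c$ on top of the $64\eta/c^2$ term, and is mild because it only costs an additive $O_c(1)$ in the allowed $n$-sampling failure threshold.
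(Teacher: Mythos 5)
Your proof is correct and follows essentially the same route as the paper's: a contradiction at the round of the $(n+1)$-st selection of arm $1$, using $\sampfail_1$ to force $\muTape_{1,N^*}=0$, $\sampfail_3$ to guarantee arm $2$ has at least $N^*$ samples, and $\sampfail_2$ to lower-bound its average, yielding $\indx_{2,t}>\indx_{1,t}$. Your handling of the strict inequality directly from $q_2\geq c/4$ and $\sqrt{\eta/N^*}<c/8$ is in fact slightly cleaner than the paper's, which at the last step invokes $q_2\geq c/2$ (the value actually chosen later) rather than the claim's stated hypothesis.
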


\begin{proof}
  For the sake of contradiction, suppose arm 1 is chosen by more than $n$ agents.
  Let agent $t$ be the $(n+1)$-th agent that chooses arm $1$. In particular, $\indx_{1, t} \geq \indx_{2, t}$.

By definition of $t$, arm 1 has been previously sampled exactly $N^*$ times before (counting the $N_0$ initial samples). Therefore,
\begin{align*}
\indx_{1, t}
    &\leq \muTape_{1, N^*} +  \sqrt{\eta/N^*}
        &\EqComment{by $\eta$-confidence}\\
    &=\sqrt{\eta/N^*}
        &\EqComment{by event $\sampfail_1$}\\
    &\leq c/8
        &\EqComment{by definition of $N^*$}.
\end{align*}
Let $m$ be the number of times arm 2 has been sampled before round $t$. Then
\begin{align*}
\indx_{2, t}
  &\geq \muTape_{2, m} - \sqrt{\eta/m}
    &\EqComment{by $\eta$-confidence}\\
  &\geq q_2 - \sqrt{\eta/m}
    &\EqComment{by event $\sampfail_2$}\\
  &\geq q_2 - \sqrt{\eta/N^*}
    &\EqComment{since $m\ge N^*$ by event $\sampfail_3$}\\
  &\geq q_2 - c/8
    &\EqComment{by definition of $N^*$}\\
  &> c/8
    &\EqComment{since $q_2\geq c/4$}.
\end{align*}
Therefore, $\indx_{2, t} > \indx_{1, t}$, contradiction.
\end{proof}

Next, we lower bound the probability of $\sampfail_1 \cap \sampfail_2$ using Lemma \ref{lm:bad_arm_happy}.
\begin{claim}\label{cl:pr_samp12_hetero}
If $q_2<\mu_2$ then
  $\Pr{\sampfail_1 \cap \sampfail_2}
    \geq \Omega_c(\mu_2 - q_2)\cdot c^{2\,N^*}$.
\end{claim}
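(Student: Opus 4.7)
The plan is to exploit independence of the two arms' reward-tapes and to reduce the bound on $\Pr{\sampfail_2}$ to a direct application of Lemma~\ref{lm:bad_arm_happy}. Since $\sampfail_1$ depends only on $(\tape_{1,i})_{i\in[T]}$ and $\sampfail_2$ only on $(\tape_{2,i})_{i\in[T]}$, the two events are independent, so
\[
\Pr{\sampfail_1 \cap \sampfail_2} = \Pr{\sampfail_1}\cdot \Pr{\sampfail_2},
\]
and I can handle each factor separately.

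The first factor is easy: $\Pr{\sampfail_1} = (1-\mu_1)^{N^*} \geq c^{N^*}$ by assumption~\eqref{eq:assn-1}. For the second factor, I would decompose $\sampfail_2$ into the event $\cE_1 := \{\tape_{2,i}=1 \text{ for all } i\in[N^*]\}$ and the event $\cE_2 := \{\muTape_{2,i}\geq q_2 \text{ for all } i\in[T]\}$, writing
\[
\Pr{\sampfail_2} = \Pr{\cE_1}\cdot \Pr{\cE_2 \mid \cE_1}.
\]
We have $\Pr{\cE_1} = \mu_2^{N^*} \geq c^{N^*}$, again by~\eqref{eq:assn-1}.

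The heart of the argument is bounding $\Pr{\cE_2 \mid \cE_1}$. The key observation is that once we condition on $\cE_1$, the running averages for $i\le N^*$ are all equal to $1 \geq q_2$, so $\cE_2$ reduces to a tail condition on the independent future samples $(\tape_{2,N^*+k})_{k\geq 1}$. Writing $T_k = \sum_{j=1}^{k}\tape_{2,N^*+j}$ and noting that, for $k\geq 1$,
\[
\muTape_{2,\,N^*+k} \;=\; \frac{N^* + T_k}{N^* + k} \;\geq\; \frac{N^*\,q_2 + k\,(T_k/k)}{N^* + k},
\]
the event $\{T_k/k \geq q_2 \text{ for all } k\geq 1\}$ implies $\cE_2$ (using $q_2\leq 1$, so that the initial ``$1$''s only help). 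Since the future samples are independent of $\cE_1$, I can invoke Lemma~\ref{lm:bad_arm_happy} on $(\tape_{2,N^*+k})_{k\geq 1}$ to conclude $\Pr{\cE_2 \mid \cE_1} \geq \Omega_c(\mu_2 - q_2)$.

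The main thing to watch is the boundary case in $\cE_2$: the definition of $\sampfail_2$ quantifies over \emph{all} $i\in[T]$, including $i\leq N^*$, which is why I handle those indices separately via $\cE_1$ and only then invoke Lemma~\ref{lm:bad_arm_happy} on the remaining sequence. Combining the three factors $c^{N^*}\cdot c^{N^*}\cdot \Omega_c(\mu_2 - q_2)$ yields the claimed $\Omega_c(\mu_2 - q_2)\cdot c^{2N^*}$.
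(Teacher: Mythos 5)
Your proposal is correct and follows essentially the same route as the paper: both split off the event that the first $N^*$ tape entries of arm 2 equal $1$, apply Lemma~\ref{lm:bad_arm_happy} to the post-$N^*$ suffix of arm 2's tape, and combine the three independent pieces to get $c^{N^*}\cdot c^{N^*}\cdot \Omega_c(\mu_2-q_2)$. (Incidentally, your explicit probabilities $\Pr{\sampfail_1}=(1-\mu_1)^{N^*}$ and $\Pr{\cE_1}=\mu_2^{N^*}$ are the right ones; the paper's proof writes these two the other way around, though both are still $\geq c^{N^*}$ under assumption~\eqref{eq:assn-1}.)
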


\begin{proof}
Instead of analyzing $\sampfail_2$ directly, consider events
\begin{align*}
\cE :=\cbr{ \forall i \in [1, N^*]: \tape_{2, i} = 1 }
\text{ and }
\cE' :=\cbr{\forall m \in [N^* + 1, T]:
                \tfrac{1}{m-N^*}\;{\textstyle \sum_{i=N^* + 1}^{m}}\;\tape_{2, i} \ge q_2}.
\end{align*}
Note that $\cE\cap\cE'$ implies $\sampfail_2$. Now,
    $\Pr{\sampfail_1}\geq {\mu_1}^{N^*} \geq c^{N^*}$
and
    $\Pr{\cE} \geq (1-\mu_2)^{N^*} \geq c^{N^*}$.
Further,
    $\Pr{\cE'}\geq \Omega_c(\mu_2-q_2)$
by Lemma \ref{lm:bad_arm_happy}. The claim follows since these three events are mutually independent.
\end{proof}

To bound $\Pr{\sampfail}$, we argue indirectly, assuming $\sampfail_1 \cap \sampfail_2$ and proving that the conditional probability of $\sampfail_3$ is at least $\nicefrac12$. While this statement feels natural given that $\sampfail_1 \cap \sampfail_2$ favors arm $2$, the proof requires a somewhat subtle inductive argument. This is where we use the symmetry and monotonicity properties from the theorem statement.

\begin{claim}\label{cl:pr_samp3}
  $\Pr{\sampfail_3 \mid \sampfail_1 \cap \sampfail_2} \ge \frac{1}{2}$.
\end{claim}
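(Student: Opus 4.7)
The plan is to reformulate $\sampfail_3$ as a simple combinatorial event and then couple scenario A (conditioned on $\sampfail_1 \cap \sampfail_2$) with an all-zero reference scenario R in order to lower-bound the conditional probability. Let $k_a(t)$ denote the number of agent plays of arm $a$ in the first $t$ rounds, and let $\tau_a := \min\{t : k_a(t) = n+1\}$. A short counting argument shows that $\sampfail_3$ is equivalent to $\{\tau_1 > 2n\}$, i.e., $\{k_1(2n) \leq n\}$: indeed, $\tau_2 \leq 2n$ already forces $\tau_1 > 2n$, because at round $\tau_2$ we have $k_1(\tau_2) = \tau_2 - (n+1) \leq n-1$ and arm 1 can grow by at most $2n - \tau_2$ more, giving $k_1(2n) \leq n-1$. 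Thus $\sampfail_3^c = \{\tau_1 \leq 2n\}$, the event that arm 1 ``wins the race'' to $n+1$ plays within $2n$ rounds.

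I would next introduce a reference scenario R in which every tape entry equals 0, so all observed rewards are 0 throughout. By property (P1), each agent's behavior is arm-label symmetric on all-zero histories, and a straightforward induction on $t$ lifts this to symmetry of the joint distribution of $(a_1^R, \ldots, a_{2n}^R)$ under the arm-label swap. The events ``arm 1 wins'' and ``arm 2 wins'' the race are disjoint by pigeonhole (since their sum of counts cannot exceed $2n$) and equiprobable by symmetry, so $\Pr{F^R} \leq \tfrac{1}{2}$, where $F^R := \{\tau_1^R \leq 2n\}$.

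The core step is to couple A and R using shared uniform random variables $U_t \in [0,1]$ driving the agents' choices: agent $t$ in scenario $s \in \{A, R\}$ picks arm 2 iff $U_t \leq \Pr{a_t = 2 \mid H_t^s}$. I aim to maintain the invariant $k_1^A(t) \leq k_1^R(t)$ throughout $t \leq 2n$, which would imply $\tau_1^A \geq \tau_1^R$ and hence $F^A \subseteq F^R$. If the counts at round $t$ already differ, the invariant is preserved automatically, since $k_1^A$ grows by at most $1$ per round. If the counts match and additionally $k_2^A(t-1) \leq n$, then under $\sampfail_1 \cap \sampfail_2$ the histories $H_t^A$ and $H_t^R$ share counts and differ only in that A's arm-2 samples are all 1 while R's are all 0; property (P2) applied with $a = 2$ then yields $\Pr{a_t^A = 2 \mid H_t^A} \geq \Pr{a_t^R = 2 \mid H_t^R}$, and the shared-$U_t$ coupling ensures $a_t^R = 1 \implies a_t^A = 1$.

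The main obstacle is the residual case where counts match but $k_2^A(t-1) > n$, in which the random tape entries beyond position $N^*$ can give A's arm-2 samples some zeros and (P2) no longer applies directly. The resolution is exactly the counting observation from the first step: $k_2^A(t-1) > n$ means $\tau_2^A \leq t - 1 \leq 2n$, which already forces $\sampfail_3$ to hold and $F^A$ to fail. Consequently, on the event $F^A$ we are guaranteed $\tau_2^A > 2n$, and hence $k_2^A(t-1) \leq n$ for every $t \leq \tau_1^A$; (P2) always applies when we need it, the invariant survives all the way to $\tau_1^A$, and $F^A \subseteq F^R$ yields $\Pr{\sampfail_3 \mid \sampfail_1 \cap \sampfail_2} = 1 - \Pr{F^A} \geq 1 - \Pr{F^R} \geq \tfrac{1}{2}$.
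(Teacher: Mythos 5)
Your argument is correct and follows essentially the same route as the paper's proof: the paper likewise rewrites $\sampfail_3$ as a race condition (on the $n$-th rather than $(n+1)$-th arm-1 choice), invokes (P1) on the all-zero reference history to get probability $\nicefrac12$, and fixes per-agent uniform thresholds to run a monotone coupling via (P2) --- your invariant $k_1^A(t)\le k_1^R(t)$ is the same statement as the paper's first-order stochastic dominance of the arm-1 hitting times $\tau_{1,i}$. Two minor remarks: in the coupling step the implication should read $a_t^A=1\Rightarrow a_t^R=1$ (not the converse), which is what your inequality $\Pr{a_t^A=2\mid H_t^A}\ge \Pr{a_t^R=2\mid H_t^R}$ actually yields and what the invariant requires; and your explicit treatment of the residual case $k_2^A(t-1)>n$ (where arm 2's samples leave the first $N^*$ tape entries and (P2) no longer applies directly) is, if anything, more careful than the paper's inductive argument, which glosses over it.
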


Now, we can lower-bound $\Pr{\sampfail}$ by
    $\Omega_c(\mu_2 - q_2)\cdot c^{2\,N^*}$.
Finally, we set the threshold to $q_2=c/2$ and the theorem follows.

\begin{proof}[Proof of Claim~\ref{cl:pr_samp3}]
Note that event $\sampfail_t$ is determined by the first $N^*$ entries of the reward-tape for both arms, in the sense that it does not depend on the rest of the reward-tape.

For each arm $a$ and $i\in[T]$, let agent $\tau_{a,i}$ be the $i$-th agent that chooses arm $a$, if such agent exists, and $\tau_i=T+1$ otherwise. Then
\begin{align}\label{eq:cl:pr_samp3-equiv}
\sampfail_3
    = \cbr{\tau_{2,n}\leq \tau_{1,n}}
    = \cbr{\tau_{1,n} \geq 2n }
\end{align}

Let $\cE$ be the event that the first $N^*$ entries of the reward-tape are $0$ for both arms. By symmetry between the two arms (property (P1) in the theorem statement) we have
\[ \Pr{\tau_{2,n}<\tau_{1,n}\mid \cE} = \Pr{\tau_{2,n}>\tau_{1,n}\mid \cE} = \nicefrac12,\]
and therefore
\begin{align}\label{eq:cl:pr_samp3-half}
\Pr{\sampfail_3\mid \cE}
    = \Pr{\tau_{2,n}\leq \tau_{1,n}\mid \cE}
    \geq \nicefrac12.
\end{align}

Next, for two distributions $F,G$, write $F\geqfosd G$ if $F$ first-order stochastically dominates $G$. A conditional distribution of random variable $X$ given event $\cE$ is denoted $(X|\cE)$. For each $i\in[T]$, we consider two conditional distributions for $\tau_{1,i}$: one given $\sampfail_1\cap\sampfail_2$ and another given $\cE$, and prove that the former dominates:
\begin{align}\label{eq:cl:pr_samp3-fosd}
\rbr{\tau_{1,i} \mid \sampfail_1\cap\sampfail_2} \geqfosd
\rbr{\tau_{1,i} \mid \cE}
\quad\forall i\in[T].
\end{align}

\noindent Applying \eqref{eq:cl:pr_samp3-fosd} with $i=n$, it follows that
\begin{align*}
\Pr{\sampfail_3\mid \sampfail_1\cap\sampfail_2}
    &= \Pr{\tau_{1,n} \geq 2n \mid \sampfail_1\cap\sampfail_2} \\
    &\geq \Pr{\tau_{1,n} \geq 2n \mid \cE}
    = \nicefrac12.
\end{align*}
(The last equality follows from \eqref{eq:cl:pr_samp3-half} and \refeq{eq:cl:pr_samp3-half}.)
Thus, it remains to prove \eqref{eq:cl:pr_samp3-fosd}.

Let us consider a fixed realization of each agents' behavioral type, \ie a fixed, deterministic mapping from histories to arms. W.l.o.g. interpret the behavioral type of each agent $t$ as first deterministically mapping history $\hist_t$ to a number $p_t\in[0,1]$, then drawing a threshold $\theta_t\in[0,1]$ independently and uniformly at random, and then choosing arm $1$ if and only if $p_t\geq \theta_t$. Note that
    $p_t = \Pr{a_t=1\mid \hist_t}$.
So, we pre-select the thresholds $\theta_t$ for each agent $t$. Note the agents retain the monotonicity property (P2) from the theorem statement. (For this property, the probabilities on both sides of \refeq{eq:thm:LB-ext-P2} are now either $0$ or $1$.)

Let us prove \eqref{eq:cl:pr_samp3-fosd} for this fixed realization of the types, using induction on $i$. Both sides of \eqref{eq:cl:pr_samp3-fosd} are now deterministic; let $A_i,B_i$ denote, resp., the left-hand side and the right-hand side. So, we need to prove that $A_i\geq B_i$ for all $i\in [n]$. For the base case, take $i=0$ and define $A_0 = B_0 = 0$. For the inductive step, assume $A_i\geq B_i$ for some $i\geq 0$. We'd like to prove that $A_{i+1}\geq B_{i+1}$. Suppose, for the sake of contradiction, that this is not the case, \ie $A_{i+1}< B_{i+1}$. Since $A_i<A_{i+1}$ by definition of the sequence $(\tau_{a,i}:\,\in[T])$, we must have
    \[ B_i \leq A_i < A_{i+1} <B_{i+1}. \]
Focus on round $t = A_{i+1}$. Note that the history $\hist_t$ contains exactly $i$ agents that chose arm $1$, both under event $\sampfail_1\cap\sampfail_2$ and under event $\cE$. Yet, arm $2$ is chosen under $\cE$, while arm $1$ is chosen under $\sampfail_1\cap\sampfail_2$. This violates the monotonicity property (P2) from the theorem statement. Thus, we've proved \eqref{eq:cl:pr_samp3-fosd} for any fixed realization of the types. Consequently, \eqref{eq:cl:pr_samp3-fosd} holds in general.
\end{proof}

\section{Upper bounds for optimistic agents}
\label{sec:UB}

In this section, we upper-bound regret for optimistic agents. We match the exponential-in-$\eta$ scaling from Corollary~\ref{cor:LB-small-gap}. Further, we refine this result to allow for different behavioral types.

On a technical level, we prove three regret bounds of the same shape \eqref{eq:thm:upper}, but with a different $\Phi$ term. (We adopt a unified presentation to emphasize this similarity.) Throughout, $\Delta = \mu_1-\mu_2$ denotes the gap between the two arms.

The basic result assumes that all agents have the same behavioral type.

\begin{theorem}\label{thm:upper}
Suppose all agents are $\eta$-optimistic, for some fixed $\eta>0$. Then, letting $\Phi=\eta$,
\begin{align}\label{eq:thm:upper}
\regret(T)
    \leq O\rbr{ T\cdot e^{-\Omega(\eta)}\cdot
        \Delta(1 + \log(\nicefrac{1}{\Delta}))
    \;+\;\frac{\Phi}{\Delta}}.
\end{align}
\end{theorem}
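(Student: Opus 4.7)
\medskip
\noindent\textbf{Proof plan.}
The plan is to bound $\Ex{T_2}$, the expected number of rounds in which arm~$2$ is chosen; then regret equals $\Delta\cdot\Ex{T_2}$. Working in the reward-tape representation of Section~\ref{prelims-prelims}, write $\ucb_{a,t}=\ucbTape_{a,n_{a,t}}$. The classical ``optimism'' decomposition says that whenever $a_t=2$, the inequality $\ucb_{2,t}\geq\ucb_{1,t}$ forces either \textbf{(E1)}~$\ucb_{2,t}\geq\mu_1$ (arm~$2$ overestimated past $\mu_1$) or \textbf{(E2)}~$\ucb_{1,t}<\mu_1$ (arm~$1$ underestimated below its own mean); hence $\Ex{T_2}\leq A_1+A_2$ where $A_j$ is the expected number of rounds in which $a_t=2$ and (E$j$) holds. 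The term $A_1$ is handled classically: rounds with $a_t=2$ have distinct values of $n_{2,t}$, so $A_1\leq\sum_{m\geq N_0}\Pr{\ucbTape_{2,m}\geq\mu_1}$, which by Hoeffding is $O(1)$ for $m=O(\eta/\Delta^2)$ and $e^{-\Omega(m\Delta^2)}$ for larger $m$, summing to $O(\eta/\Delta^2)$; multiplied by $\Delta$ this yields the $\eta/\Delta$ regret term.

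The term $A_2$ is the crux. Partition $[T]$ into segments on which $n_{1,t}$ is constant, and let $L_n$ be the length of the segment with $n_{1,t}=n$ (i.e.\ one plus the number of arm-$2$ pulls while arm~$1$ has exactly $n$ samples). Then $A_2\leq\Ex{\sum_{n\geq N_0}L_n\cdot\Ind{\ucbTape_{1,n}<\mu_1}}$. The key structural observation is that, conditional on the first $n$ entries of arm~$1$'s reward-tape (which determine both $\ucbTape_{1,n}$ and the history at the start of the segment), the segment is governed entirely by arm~$2$'s tape; hence $\Ex{L_n\mid\ucbTape_{1,n}=y}$ admits a pointwise bound depending only on $y$. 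If $y>\mu_2$, arm~$2$'s UCB drops below $y$ once $n_{2,t}=\Omega(\eta/(y-\mu_2)^2)$, giving $\Ex{L_n\mid y}\leq O(\eta/(y-\mu_2)^2)$. If $y\leq\mu_2$, only the trivial $L_n\leq T$ is available.

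I would then split by the deficit $\alpha=\mu_1-y$. For $\alpha\in(0,\Delta)$, dyadically peel $\alpha\in[\Delta\cdot 2^{-k-1},\Delta\cdot 2^{-k}]$ for $k=0,1,\ldots,O(\log(1/\Delta))$; combining the pointwise bound $\Ex{L_n\mid y}\leq O(\eta/(\Delta-\alpha)^2)$ with the Hoeffding tail $\Pr{\ucbTape_{1,n}<\mu_1-\alpha}\leq e^{-2\eta-\Omega(n\alpha^2)}$, and summing a geometric series in $n$, the contribution from each scale is $O(e^{-\Omega(\eta)})$ per round, giving a total of $O(Te^{-\Omega(\eta)}\log(1/\Delta))$. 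For $\alpha\geq\Delta$, the event $\{\exists n\geq N_0:\ucbTape_{1,n}\leq\mu_2\}$ is controlled via Ville's inequality applied to the exponential supermartingale $\exp(-\lambda S_n-n\lambda^2/8)$ with $\lambda=8\Delta$ (where $S_n=\sum_{i\leq n}(\tape_{1,i}-\mu_1)$), giving probability $\leq e^{-\Omega(\eta)}$ uniformly in $n$; combined with $L_n\leq T$, this contributes $O(Te^{-\Omega(\eta)})$, absorbed into the same term.

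The main obstacle is the coupling between $L_n$ and $\ucbTape_{1,n}$, which are positively correlated since longer segments correspond to more pessimistic realizations of arm~$1$'s tape. The conditioning described in the second paragraph is what breaks this coupling and enables a pointwise bound on $\Ex{L_n\mid\ucbTape_{1,n}}$; the dyadic peeling then converts this pointwise bound into the $\log(1/\Delta)$ amplification in the final expression. Aside from this decoupling step and Ville's inequality, the remaining bookkeeping -- Hoeffding tails with the cross-term $2\Delta\sqrt{n\eta}$, summing geometric series in $n$, and handling the boundary scale $\alpha\approx\Delta$ -- is routine.
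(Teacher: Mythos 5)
Your high-level decomposition (regret $=\Delta\cdot\Ex{T_2}$, split into rounds where arm 2's UCB overshoots $\mu_1$ versus rounds where arm 1's UCB undershoots it) is the classical UCB1 analysis, and your treatment of the first term correctly produces the $\eta/\Delta$ term. Your decoupling observation --- that conditional on arm 1's first $n$ tape entries, the segment length $L_n$ is governed by arm 2's tape, so $\Ex{L_n\mid \ucbTape_{1,n}=y}\le O(\eta/(y-\mu_2)^2)$ for $y>\mu_2$ --- is also sound. The gap is in how you aggregate over $n$. For a deficit scale $\alpha\approx\Delta 2^{-k}$ you bound the contribution by $\sum_n \Pr{\ucbTape_{1,n}<\mu_1-\alpha}\cdot O(\eta/(\Delta-\alpha)^2)$. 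But the tail $e^{-2\eta-\Omega(n\alpha^2)}$ is essentially flat ($\approx e^{-2\eta}$) for all $n\lesssim\alpha^{-2}$, so the sum over $n$ costs a factor $\min(T,\alpha^{-2})$, not the $O(1)$ geometric series you assert; combined with the segment-length bound this gives a count of order $e^{-2\eta}\min(T,4^k\Delta^{-2})\cdot\eta\Delta^{-2}$ per scale, hence a regret contribution of order $T\eta e^{-2\eta}/\Delta$ rather than the claimed $T e^{-\Omega(\eta)}\Delta(1+\log(1/\Delta))$ --- off by roughly $\eta/(\Delta^2\log(1/\Delta))$. The boundary scale $k=0$ is not routine either: there $\Delta-\alpha$ can vanish, the pointwise bound $O(\eta/(\Delta-\alpha)^2)$ is vacuous, and further peeling in $y-\mu_2$ reproduces the same $1/\Delta^2$ loss. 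The root cause is that the events $\cbr{\ucbTape_{1,n}<\mu_1-\alpha}$ are strongly positively correlated across $n$, and summing their individual probabilities discards exactly the structure you need; this is the pitfall the paper explicitly warns against.

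The fix --- and what the paper does --- is to replace the per-$n$ summation by a time-uniform bound: define the clean event $\clean_1^\eta=\cbr{\forall i:\ \ucbTape_{1,i}\ge\mu_1-\Delta/2}$, prove $\Pr{\comp{\clean_1^\eta}}\le O\rbr{(1+\log(\nicefrac{1}{\Delta}))\,e^{-\Omega(\eta)}}$ by peeling over dyadic blocks $[n,2n]$ of the \emph{sample count} and applying Hoeffding's maximal inequality within each block (only $O(\log(\eta/\Delta^2))$ blocks contribute nontrivially, and for $n\ge 64\eta/\Delta^2$ the block probabilities decay geometrically), and simply charge $\Delta T$ regret on the complement. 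Together with the analogous event for arm 2 (Lemma~\ref{lm:failure_total_simplify}), the bad arm is pulled at most $64\eta/\Delta^2$ times on the clean event (Claim~\ref{cl:clean_implies}), and the theorem follows. Your Ville's-inequality step for $\alpha\ge\Delta$ is a maximal inequality of exactly this kind; you need its analogue, with the $\log(\nicefrac{1}{\Delta})$ overhead from peeling, for the intermediate deficits as well --- at which point the conditional segment-length machinery becomes unnecessary.
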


\begin{discussion}
The main take-away is that the exponential-in-$\eta$ scaling from Corollary~\ref{cor:LB-small-gap} is tight for $\eta$-optimistic agents, and therefore the best possible lower bound that one could obtain for $\eta$-confident agents. This result holds for any given $N_0$, the number of initial samples.%
\footnote{For ease of exposition, we do not track the improvements in regret when $N_0$ becomes larger.}
Our guarantee remains optimal in the ``extreme optimism" regime when $\eta\sim\log(T)$, whereby it matches the optimal regret rate,
    $O\rbr{\tfrac{\log T}{\Delta}}$,
for large enough $\eta$.
\end{discussion}

What if different agents can hold different behavioral types? First, let us allow agents to have varying amounts of optimism, possibly different across arms and possibly randomized.

\begin{definition}
Fix $\etamax\geq \eta>0$. An agent $t\in [T]$ is called \emph{$\sbr{\eta,\etamax}$-optimistic} if its index $\indx_{a,t}$ lies in the interval
    $\sbr{\ucb_{a, t}, \ucb[\etamax]_{a, t}}$,
for each arm $a\in[2]$.
\end{definition}

We show that the guarantee in Theorem~\ref{thm:upper} is robust to varying the optimism level ``upwards".

\begin{theorem}[robustness]\label{thm:upper-interval}
Fix $\etamax\geq \eta>0$. Suppose all agents are $\sbr{\eta,\etamax}$-optimistic. Then
regret bound \eqref{eq:thm:upper} holds with $\Phi = \etamax$.
\end{theorem}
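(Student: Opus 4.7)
The plan is to re-run the proof of Theorem~\ref{thm:upper} verbatim, while carefully tracking which steps use the parameter $\eta$ (the \emph{lower} end of the optimism interval) versus $\etamax$ (the \emph{upper} end). The key observation is that optimism plays two logically distinct roles: (i) guaranteeing that the good arm's index does not drop below $\mu_1$, and (ii) bounding how inflated the bad arm's index can become. Role (i) is controlled by the \emph{smaller} parameter $\eta$ since $\indx_{1,t} \geq \ucb^{\eta}_{1,t}$ by the assumption, and role (ii) is controlled by the \emph{larger} parameter $\etamax$ since $\indx_{2,t} \leq \ucb^{\etamax}_{2,t}$. This decoupling is what allows the failure probability to remain $e^{-\Omega(\eta)}$ while only the ``exploration budget'' scales with $\etamax$.

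First I would define the clean event $\clean$ to be the conjunction, over all rounds $t\in [T]$ and over both arms $a\in[2]$, of the concentration bounds
$|\muTape_{a,n} - \mu_a| \leq \sqrt{\eta/n}$ for all sample counts $n\in[T]$. By standard Hoeffding plus a union bound over $n$ (and noting that the good-arm side only needs a one-sided bound), $\Pr{\comp{\clean}}\leq O(T\cdot e^{-\Omega(\eta)})$. The contribution to regret from $\comp{\clean}$ is then bounded exactly as in the proof of Theorem~\ref{thm:upper}, which yields the first term of \eqref{eq:thm:upper}.

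Second, I would analyse a round $t$ in which the bad arm is chosen, conditional on $\clean$. On one hand,
\[
\indx_{1,t} \;\geq\; \ucb^{\eta}_{1,t} \;=\; \muhat_{1,t} + \sqrt{\eta/n_{1,t}} \;\geq\; \mu_1,
\]
using $[\eta,\etamax]$-optimism and then the good-arm half of $\clean$. On the other hand,
\[
\indx_{2,t} \;\leq\; \ucb^{\etamax}_{2,t} \;=\; \muhat_{2,t} + \sqrt{\etamax/n_{2,t}} \;\leq\; \mu_2 + \sqrt{\eta/n_{2,t}} + \sqrt{\etamax/n_{2,t}} \;\leq\; \mu_2 + 2\sqrt{\etamax/n_{2,t}},
\]
using again $[\eta,\etamax]$-optimism and the bad-arm half of $\clean$. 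The selection rule $\indx_{2,t}\geq \indx_{1,t}$ together with these two bounds yields $n_{2,t} \leq 16\,\etamax/\Delta^2$. Hence on $\clean$ the bad arm is played at most $O(\etamax/\Delta^2)$ times, contributing $O(\etamax/\Delta) = O(\Phi/\Delta)$ to regret, which is the second term of \eqref{eq:thm:upper}.

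The main obstacle, and really the only conceptual point worth stating, is the asymmetric handling of $\eta$ versus $\etamax$: the probabilistic (failure-event) side must only use $\eta$, while the deterministic (UCB-budget) side must use $\etamax$. A naive attempt that builds the clean event with parameter $\etamax$ would inflate the failure term to $T\cdot e^{-\Omega(\etamax)}$, which is both weaker and unnecessary; conversely, trying to bound $\indx_{2,t}$ using only the $\eta$-UCB is false, since the agent is allowed to push the index all the way up to the $\etamax$-UCB. Once the two sides are separated as above, the rest of the argument is identical to Theorem~\ref{thm:upper}, so the same bound \eqref{eq:thm:upper} holds with $\Phi=\etamax$.
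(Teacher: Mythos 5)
Your identification of the key structural point---that the lower parameter $\eta$ must control the probabilistic (clean-event) side while the upper parameter $\etamax$ only enters the deterministic budget for the bad arm's index---is correct and matches the paper, which runs the argument with the joint event $\clean^{\eta}_1\cap\clean^{\etamax}_2$ and bounds its failure probability by $e^{-\Omega(\eta)}$ terms. Your deterministic step (deriving $n_{2,t}\leq O(\etamax/\Delta^2)$ from $\indx_{2,t}\geq\indx_{1,t}$ on the clean event) is also essentially the paper's Claim~\ref{cl:clean_implies}.

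However, there is a genuine gap in how you establish the clean event, and it is precisely the pitfall the paper flags in Section~\ref{sec:UB-proof-1}. You define $\clean$ as $\abs{\muTape_{a,n}-\mu_a}\leq\sqrt{\eta/n}$ for \emph{all} $n\in[T]$ and bound $\Pr{\comp{\clean}}\leq O(T\cdot e^{-\Omega(\eta)})$ by a union bound over $n$. First, this event is too strong: for each fixed $n$ the failure probability is $\Theta(e^{-\Omega(\eta)})$, a constant when $\eta$ is constant, so by the law of the iterated logarithm the event actually fails almost surely as $T\to\infty$; the bound $O(T\cdot e^{-\Omega(\eta)})$ is vacuous in the regime of interest. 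Second, even taking your bound at face value, charging $\Delta\cdot T$ regret on $\comp{\clean}$ gives a contribution of $O(\Delta\, T^2 e^{-\Omega(\eta)})$, which is an extra factor of $T$ beyond the first term of \eqref{eq:thm:upper}; you cannot ``bound it exactly as in Theorem~\ref{thm:upper}'' because the correct proof of Theorem~\ref{thm:upper} does not use this clean event. The paper avoids this by (i) requiring the bad-arm bound $\ucbTape[\etamax]_{2,i}\leq\mu_2+\Delta/4$ only for $i\geq 64\,\etamax/\Delta^2$, where the required deviation is the fixed quantity $\Delta/8$ rather than $\sqrt{\eta/i}$, and (ii) proving Lemma~\ref{lm:failure_total_simplify} via a peeling argument: a maximal Hoeffding-type inequality on each dyadic block $[n,2n]$, whose failure probabilities decay geometrically and sum to $O(e^{-\Omega(\eta)})$ for arm~2 and $O((1+\log(\nicefrac{1}{\Delta}))\,e^{-\Omega(\eta)})$ for arm~1, with no dependence on $T$. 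Your proof needs this (or an equivalent $T$-free concentration argument) to go through.
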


Note that the upper bound $\etamax$ has only a mild influence on the regret bound in Theorem~\ref{thm:upper-interval}.

Our most general result only requires a small fraction of agents to be optimistic, whereas all agents are only required to be $\etamax$-confident (allowing all behaviors consistent with that).

\begin{theorem}[recurring optimism]\label{thm:upper-distr}
Fix $\etamax\geq \eta>0$. Suppose all agents are $\etamax$-confident. Further, suppose
each agent's behavioral type is chosen independently at random so that the agent is $\sbr{\eta,\etamax}$-optimistic with probability at least $q>0$.
Then regret bound \eqref{eq:thm:upper} holds with $\Phi = \etamax/q$.
\end{theorem}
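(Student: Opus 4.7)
I would extend the proof of Theorem~\ref{thm:upper-interval} by separately accounting for the $\sbr{\eta,\etamax}$-optimistic (``good'') agents and the remaining arbitrary $\etamax$-confident (``bad'') agents, and showing that as long as good agents recur with frequency $q$, the bad agents cannot sample the bad arm too often either.

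I would first work on the clean event $\cC$ that $|\muhat_{a,t}-\mu_a|\leq \sqrt{\eta/n_{a,t}}$ for every $t\in[T]$ and $a\in[2]$, which satisfies $\Pr{\neg\cC}=O(T\,e^{-2\eta})$ by Hoeffding and a union bound. Set $M_1:=\Cel{4\etamax/\Delta^2}$ and $M_2:=\Cel{16\etamax/\Delta^2}$. Standard manipulations of the confidence intervals on $\cC$ give two structural facts:
(i) if a good agent picks arm~$2$ at round $t$, then $n_{2,t}\leq M_1$ (its arm-$1$ index is at least $\ucb_{1,t}\geq \mu_1$, while its arm-$2$ index is at most $\ucb[\etamax]_{2,t}\leq \mu_2+2\sqrt{\etamax/n_{2,t}}$);
(ii) if \emph{any} $\etamax$-confident agent picks arm~$2$, then $n_{1,t}\leq M_2$ or $n_{2,t}\leq M_2$ (from the LCB--UCB manipulation at level $\etamax$, yielding $2\sqrt{\etamax/n_{1,t}}+2\sqrt{\etamax/n_{2,t}}\geq \Delta$).

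On $\cC$, I partition the arm-$2$ picks into two types. \emph{Type I} picks have $n_{2,t}\leq M_2$ and are trivially bounded by $M_2$. \emph{Type II} picks have $n_{2,t}>M_2$; by (ii) they also have $n_{1,t}\leq M_2$, and by the contrapositive of (i) (since $n_{2,t}>M_2\geq M_1$) they can only be made by bad agents. Defining the ``late dangerous phase'' $\cL:=\cbr{t\in[T]: n_{2,t}>M_2 \text{ and } n_{1,t}\leq M_2}$, all Type II picks live inside $\cL$, so it suffices to bound $\Ex{|\cL|}$. The crucial observation is that on $\cC$, every good agent inside $\cL$ picks arm~$1$ deterministically (again by the contrapositive of (i)), and so each such round strictly increments $n_{1,t}$; since $\cL$ closes the instant $n_{1,t}>M_2$, the number of good rounds inside $\cL$ is at most $M_2+1$. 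Because agent types are i.i.d.\ with each agent good with probability $\geq q$, a standard waiting-time argument then yields $\Ex{|\cL|}\leq (M_2+1)/q$. Summing, $\Ex{\text{arm-}2\text{ picks on }\cC}\leq M_2+(M_2+1)/q=O(\etamax/(q\Delta^2))$, contributing regret $O(\etamax/(q\Delta))=O(\Phi/\Delta)$. Off $\cC$, the identical bookkeeping used in Theorems~\ref{thm:upper} and~\ref{thm:upper-interval} bounds the contribution by $O(T\Delta\,e^{-\Omega(\eta)}(1+\log(1/\Delta)))$, so \eqref{eq:thm:upper} follows with $\Phi=\etamax/q$.

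The main obstacle will be formalizing the waiting-time bound on $\Ex{|\cL|}$: $\cL$ is a random (though a.s.\ contiguous) interval whose endpoints depend on both the agent-type sequence and the realized rewards, so it is not a vanilla negative-binomial stopping time. My plan is to couple the ``good agent'' indicators $(Y_t)$ with i.i.d.\ Bernoulli$(q)$ variables $(Z_t\le Y_t)$ via a common uniform draw; conditional on the (random) start time $t_0$ of $\cL$, the tail $(Z_{t_0+1},Z_{t_0+2},\ldots)$ remains i.i.d.\ Bernoulli$(q)$ by independence, and since good rounds deterministically advance $n_{1,t}$ inside $\cL$ the length $|\cL|$ is dominated by the hitting time of a Bernoulli$(q)$ sum at level $M_2+1$. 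A negative-binomial mean computation then yields $\Ex{|\cL|}\leq (M_2+1)/q$, completing the argument.
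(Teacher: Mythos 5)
Your structural argument is essentially sound and even a bit sharper than the paper's in how it counts bad-arm pulls: the paper instead fixes $t_0=O(m/q)$ with $m=1+64\,\etamax/\Delta^2$, uses a Chernoff bound to get $2m$ optimists among the first $t_0$ agents with probability $1-e^{-\Omega(\eta)}$, concludes via its Claim~\ref{cl:clean_implies} that the good arm is pulled $m$ times by round $t_0$, and then via Claim~\ref{cl:clean_implies_more} that the total number of bad-arm pulls is at most $m+t_0=O(\etamax/(q\Delta^2))$. Your Type I/Type II split plus the waiting-time bound on $\Ex{|\cL|}$ reaches the same $O(\etamax/(q\Delta^2))$ count by a different (expectation-based rather than high-probability) route, and the coupling you sketch for the stopping-time issue is workable because agent types are independent of the reward tape.

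The genuine gap is in your clean event. You define $\cC$ by requiring $|\muhat_{a,t}-\mu_a|\leq\sqrt{\eta/n_{a,t}}$ for \emph{every} round $t\in[T]$ and bound $\Pr{\neg\cC}=O(T\,e^{-2\eta})$ by a union bound over rounds. Off this event you can only charge $\Delta\cdot T$ per realization, so the failure contribution to regret is $O(\Delta\, T^2 e^{-2\eta})$, not the $O(T\,\Delta\, e^{-\Omega(\eta)}(1+\log(1/\Delta)))$ you assert in your last step; your final bookkeeping is inconsistent with your own definition of $\cC$. This is exactly the pitfall the paper flags in Section~\ref{sec:UB-proof-1}: since $\eta$ is a constant (not $\sim\log T$), a per-round union bound is fatal, and one must instead use the tape-based, $T$-independent clean events \eqref{eq:def_clean1}--\eqref{eq:def_clean2} whose failure probability is controlled by the peeling argument of Lemma~\ref{lm:failure_total_simplify}. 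The fix is mechanical --- run your Type I/Type II and $\cL$-phase argument on top of $\clean^{\eta}_1\cap\clean^{\etamax}_2$ rather than on your $\cC$ (your facts (i) and (ii) hold verbatim under those events) --- but as written the proof does not establish the stated bound.
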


\begin{discussion}
The take-away is that once there is even a small fraction of optimists,
    $q>\tfrac{1}{\Delta\cdot o(T)}$,
the behavioral type of less optimistic agents does not have much impact on regret. In particular, it does not hurt much if they become very pessimistic. A small fraction of optimists goes a long way!

Note that a small-but-constant fraction of \emph{extreme} optimists, \ie $\eta,\etamax\sim \log(T)$ in Theorem~\ref{thm:upper-distr}, yields optimal regret rate, $\log(T)/\Delta$.
\end{discussion}

\subsection{Proof of Theorem~\ref{thm:upper} and Theorem~\ref{thm:upper-interval}}
\label{sec:UB-proof-1}

We define certain ``clean events" to capture desirable realizations of random rewards, and  decompose our regret bounds based on whether or not these events hold. The ``clean events" ensure that the index of each arm is not too far from its true mean reward; more specifically, that the index is ``large enough" for the good arm, and ``small enough" for the bad arm. We have two ``clean events", one for each arm, defined in terms of the reward-table as follows:
\begin{align}
\clean^{\eta}_1
    &:= \cbr{ \forall i \in [T]:\; \ucbTape_{1, i} \geq \mu_1 - \Delta/2 },
        \label{eq:def_clean1}\\
\clean^{\eta}_{2}
    &:=\cbr{ \forall i \geq 64\,\eta/\Delta^2:\;  \ucbTape_{2, i} \leq \mu_2 + \Delta/4}.
        \label{eq:def_clean2}
\end{align}

Our analysis is more involved compared to the standard analysis of the UCB1 algorithm \cite{bandits-ucb1}, essentially because we cannot make $\eta$ be ``as large as needed" to ensure that clean events hold with very high probability. For example, we cannot upper-bound the deviation probability separately for each round and naively take a union bound over all rounds.%
\footnote{Indeed, this would only guarantee that clean events hold with probability at least
    $1-O(T\cdot e^{-\Omega(\eta)})$,
which in turn would lead to a regret bound like $O(T^2\cdot e^{-\Omega(\eta)})$.}
Instead, we apply a more careful ``peeling technique", used \eg in \citet{Bubeck-colt09}, so as to avoid \emph{any} dependence on $T$ in the lemma below.

\begin{lemma}\label{lm:failure_total_simplify}
The clean events hold with probability
\begin{align}
\Pr{\clean^{\eta}_1}
    &\geq 1- O\rbr{(1 + \log(\nicefrac{1}{\Delta}))\cdot e^{-\Omega(\eta)}},
    \label{eq:lm:failure_total-1}\\
\Pr{\clean^{\eta}_2}
    &\geq 1- O\rbr{e^{-\Omega(\eta)}}.
    \label{eq:lm:failure_total-2}
\end{align}
\end{lemma}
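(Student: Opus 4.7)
The plan is to reduce both events to tail bounds for the centered partial sums
$S_n^{(a)} := \sum_{j=1}^n (\tape_{a,j} - \mu_a)$, whose increments form a martingale-difference sequence with range $1$. I will rely on the maximal Azuma--Hoeffding inequality
\[
 \Pr{ \max_{n\le N} \bigl(\pm S_n^{(a)}\bigr) \ge x }
    \;\le\; \exp\!\rbr{- 2 x^2 / N},
\]
obtained by applying Doob's submartingale inequality to $e^{\pm \lambda S_n^{(a)}}$ and optimizing $\lambda$. The main obstacle, and the reason a direct Chernoff union bound over $i \in [T]$ fails, is that such a bound would pay a factor of $T$, blowing up the $e^{-\Omega(\eta)}$ into a useless $O(T \cdot e^{-\Omega(\eta)})$. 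The remedy is the standard dyadic peeling technique, as in \citet{Bubeck-colt09}.

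First I would rewrite each event as a partial-sum deviation. Since $\mu_1 - \Delta/2$ and $\mu_2 + \Delta/4$ both lie in $(0,1)$, the $\min\{1,\cdot\}$ truncation in $\ucbTape$ is inactive at the relevant thresholds, so
\[
 \neg\clean_1^{\eta}
  \;=\; \cbr{\exists\, i \in [T]:\; -S_i^{(1)} > \tfrac{i\Delta}{2} + \sqrt{i\eta}}.
\]
For the bad arm, using $\sqrt{\eta/i} \le \Delta/8$ whenever $i \ge i_0 := \lceil 64\eta/\Delta^2 \rceil$,
\[
 \neg\clean_2^{\eta}
  \;\subseteq\; \cbr{\exists\, i \ge i_0:\; S_i^{(2)} > i\Delta/8}.
\]

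Next, I would apply dyadic peeling: partition the relevant indices into blocks $I_k := [2^k n_0, 2^{k+1} n_0)$, with $n_0 = 1$ for $\clean_1^\eta$ and $n_0 = i_0$ for $\clean_2^\eta$, and on each block lower-bound the $i$-dependent threshold by its block minimum, then embed the per-block event in a max over $n \le 2^{k+1} n_0$ so that the maximal inequality applies. For $\clean_1^\eta$ the block threshold becomes $\xi_k := 2^k \Delta/2 + \sqrt{2^k \eta}$, and the maximal Azuma--Hoeffding bound is
$\exp(-2\xi_k^2/2^{k+1}) \le \exp(-2^k \Delta^2/4 - \eta)$,
after expanding the square and discarding the nonnegative cross term. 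For $\clean_2^\eta$ the block threshold becomes $2^k i_0 \Delta/8$, and the maximal inequality gives
$\exp(-2^k i_0 \Delta^2/64) = \exp(-2^k \eta)$
by the choice of $i_0$.

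To finish, sum over blocks. The $\clean_2^\eta$ sum is $\sum_{k\ge 0} e^{-2^k \eta} = O(e^{-\Omega(\eta)})$, a geometric-type series. The $\clean_1^\eta$ sum factors as $e^{-\eta}\sum_{k\ge 0} e^{-2^k \Delta^2/4}$; splitting at $k^\star := \lceil \log_2(4/\Delta^2)\rceil$ gives $k^\star + 1 = O(1 + \log(1/\Delta))$ summands bounded by $1$ below $k^\star$ and a doubly-geometric convergent tail above $k^\star$, producing the claimed $(1 + \log(1/\Delta)) \cdot e^{-\Omega(\eta)}$. The subtlety worth emphasizing is that the block thresholds $\xi_k$ for $\clean_1^\eta$ carry \emph{both} a gap contribution $2^k \Delta/2$ (responsible for summability in $k$, hence the $T$-free bound) \emph{and} a confidence contribution $\sqrt{2^k \eta}$ (responsible for the uniform-in-$k$ $e^{-\eta}$ factor); both pieces must sit inside the same square for the argument to deliver the product form of the final bound.
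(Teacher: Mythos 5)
Your proof is correct and follows essentially the same route as the paper's: both reduce the clean events to maximal deviations of the reward-tape partial sums, apply a Hoeffding-type maximal inequality on dyadic blocks (the ``peeling'' from \citet{Bubeck-colt09}), and extract the $(1+\log(\nicefrac{1}{\Delta}))$ factor from the number of blocks on which the gap term has not yet kicked in. The only cosmetic difference is that you keep the combined threshold $i\Delta/2+\sqrt{i\eta}$ for $\clean^{\eta}_1$ in a single peeling, whereas the paper splits into the regimes $i<\ceil{64\eta/\Delta^2}$ and $i\geq\ceil{64\eta/\Delta^2}$ and uses a different deviation threshold in each; both yield the same bound (and, like the paper, your final geometric summation for $\clean^{\eta}_2$ implicitly needs $\eta$ bounded away from $0$, the case of small $\eta$ being vacuous).
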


We show that under the appropriate clean events, $\eta$-optimistic agents cannot play the bad arm too often. In fact, this claim extends to $[\eta,\etamax]$-optimistic agents.

\begin{claim}\label{cl:clean_implies}
Assume that events $\clean^{\eta}_1$ and $\clean^{\etamax}_2$ hold. Then $[\eta,\etamax]$-optimistic agents cannot choose the bad arm more than $64\,\etamax/\Delta^2$ times.
\end{claim}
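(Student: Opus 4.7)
The plan is a direct contradiction argument built on comparing indices at the first ``bad'' round. Let $N := 64\,\etamax/\Delta^2$, and suppose for contradiction that the bad arm is chosen by strictly more than $N$ agents. Since the counter $n_{2,t}$ (which includes the $N_0\geq 1$ initial samples) is monotone non-decreasing in $t$ and increments by one each time arm $2$ is chosen, there must exist a round $t$ in which an agent selects arm $2$ while $n_{2,t}\ge N$. Since $a_t=2$, the tie-breaking rule gives $\indx_{2,t}\ge \indx_{1,t}$. The goal is to derive $\Delta\le 3\Delta/4$, contradicting $\Delta>0$.

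To do so, I would upper-bound $\indx_{2,t}$ and lower-bound $\indx_{1,t}$ using $[\eta,\etamax]$-optimism together with the two clean events. By definition of $[\eta,\etamax]$-optimism, $\indx_{2,t}\le \ucb[\etamax]_{2,t}$, and by the reward-tape representation this equals $\ucbTape[\etamax]_{2,\,n_{2,t}}$. Since $n_{2,t}\ge N = 64\etamax/\Delta^2$, event $\clean^{\etamax}_{2}$ (applied at index $n_{2,t}$) yields
\[
\indx_{2,t} \;\le\; \ucbTape[\etamax]_{2,\,n_{2,t}} \;\le\; \mu_2 + \Delta/4.
\]
Symmetrically, $\indx_{1,t}\ge \ucb[\eta]_{1,t} = \ucbTape[\eta]_{1,\,n_{1,t}}$, and event $\clean^{\eta}_{1}$ (which holds uniformly over all indices $i\in[T]$) gives
\[
\indx_{1,t} \;\ge\; \ucbTape[\eta]_{1,\,n_{1,t}} \;\ge\; \mu_1 - \Delta/2.
\]

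Chaining the three inequalities $\mu_1 - \Delta/2 \le \indx_{1,t} \le \indx_{2,t} \le \mu_2 + \Delta/4$ and rearranging yields $\Delta = \mu_1-\mu_2 \le 3\Delta/4$, which is impossible. Hence arm $2$ can be chosen at most $N = 64\,\etamax/\Delta^2$ times, as claimed.

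There is no real obstacle here: the work has been pushed into the clean events themselves (especially into Lemma~\ref{lm:failure_total_simplify}, whose peeling argument is the genuinely delicate part of this section). The only point that needs mild care in the writeup is lining up the ``right'' confidence parameter on each arm -- the lower bound on $\indx_{1,t}$ uses $\ucb[\eta]$ (the \emph{minimum} allowed optimism for arm $1$), while the upper bound on $\indx_{2,t}$ uses $\ucb[\etamax]$ (the \emph{maximum} allowed optimism for arm $2$); it is exactly this asymmetric pairing that makes the argument robust to heterogeneous optimism levels and explains why $\clean^{\etamax}_{2}$ (not $\clean^{\eta}_{2}$) is the correct clean event to assume.
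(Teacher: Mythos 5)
Your proof is correct and follows essentially the same route as the paper's: a contradiction at the first round where arm $2$ is chosen with $n_{2,t}\ge 64\,\etamax/\Delta^2$, lower-bounding $\indx_{1,t}$ via $\ucb[\eta]$ and $\clean^{\eta}_1$ and upper-bounding $\indx_{2,t}$ via $\ucb[\etamax]$ and $\clean^{\etamax}_2$. Your writeup is in fact slightly more careful than the paper's (which contains a sign/typo slip in the arm-2 bound), and your closing remark about the asymmetric $\eta$-versus-$\etamax$ pairing is exactly the right observation.
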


\begin{proof}
For the sake of contradiction, suppose $[\eta,\etamax]$-optimistic agents choose the bad arms at least
$n=64\,\etamax/\Delta^2$ times, and let $t$ be the round when this happens. However, by event $\clean^{\eta}_1$, the index of arm 1 is at least $\mu_1 - \Delta/2$. By event $\clean^{\etamax}_2$, the index of arm 2 is at most
    $\ucbTape_{i,n} \leq \mu_2 + \Delta/4$,
which is less than the index of arm 1, contradiction.
\end{proof}

For the ``joint" clean event,
    $\clean := \clean^{\eta}_1 \cap \clean^{\etamax}_2$,
Lemma~\ref{lm:failure_total_simplify} implies
\begin{align}\label{eq:UB-clean-prob}
\Pr{\clean}
    \geq 1- O\rbr{\log\rbr{\nicefrac{1}{\Delta}}\cdot e^{-\Omega(\eta)}}.
\end{align}

When the clean events fail, we upper-bound regret by $\Delta\cdot T$, which is the largest possible. Thus, Lemma~\ref{cl:clean_implies} and \refeq{eq:UB-clean-prob} imply Theorem~\ref{thm:upper-interval}, which in turn implies  Theorem~\ref{thm:upper} as a special case.

\subsection{Proof of Theorem~\ref{thm:upper-distr}}
\label{sec:UB-proof-2}

We reuse the machinery from Section~\ref{sec:UB-proof-1}, but we need some extra work. Recall that all agents are assumed to be $\etamax$-confident, whereas only a fraction are optimistic. Essentially, we rely on the optimistic agents to sample the good arm sufficiently many times (via Claim~\ref{cl:clean_implies}). Once this happens, all other agents ``fall in line" and cannot choose the bad arm too many times.

In what follows, let
    $m =1+64\,\etamax/\Delta^2$.

\begin{claim}\label{cl:clean_implies_more}
Assume $\clean$. Suppose the good arm is sampled at least $m$ times by some round $t_0$. Then after round $t_0$, agents cannot choose the bad arm more than $m$ times.
\end{claim}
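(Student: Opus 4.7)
The plan is a proof by contradiction. Assume $\clean$ holds, arm~$1$ has been sampled at least $m$ times by the end of round $t_0$, yet the bad arm is chosen more than $m$ times in rounds $t>t_0$. I would pick $t>t_0$ to be the round at which arm~$2$ is chosen for the $(m+1)$-th time after $t_0$ and derive $\indx_{1,t}>\indx_{2,t}$, contradicting the selection of arm~$2$ at round $t$.

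First I would record two counting facts at round $t$: $n_{1,t}\ge m$ (monotonicity of the good-arm count past $t_0$) and $n_{2,t}\ge m\ge 64\,\etamax/\Delta^2$ (from the $m$ plays of arm~$2$ strictly between $t_0$ and $t$). The second inequality is exactly the threshold required to invoke $\clean^{\etamax}_{2}$, so the round-$t$ agent's $\etamax$-confidence yields $\indx_{2,t}\le\ucb[\etamax]_{2,t}=\ucbTape[\etamax]_{2,n_{2,t}}\le \mu_2+\Delta/4$.

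For the good arm, event $\clean^{\eta}_1$ gives $\ucbTape_{1,n_{1,t}}\ge \mu_1-\Delta/2$; since $\mu_1-\Delta/2\le 1$ the min-with-$1$ truncation is not binding, so $\muTape_{1,n_{1,t}}\ge \mu_1-\Delta/2-\sqrt{\eta/n_{1,t}}$. Combining this with $\etamax$-confidence and $\eta\le\etamax$,
\[
\indx_{1,t}\;\ge\;\lcb[\etamax]_{1,t}\;\ge\;\muTape_{1,n_{1,t}}-\sqrt{\etamax/n_{1,t}}\;\ge\;\mu_1-\Delta/2-2\sqrt{\etamax/n_{1,t}}.
\]
The ``$+1$'' in $m=1+64\,\etamax/\Delta^2$ supplies the strict inequality $n_{1,t}>64\,\etamax/\Delta^2$, hence $2\sqrt{\etamax/n_{1,t}}<\Delta/4$, and therefore $\indx_{1,t}>\mu_1-3\Delta/4=\mu_2+\Delta/4\ge\indx_{2,t}$, the desired contradiction.

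The only genuine subtlety is the interplay between $\eta$ and $\etamax$: $\clean^{\eta}_1$ uses the smaller parameter (which keeps the lower bound on $\muTape_{1,\cdot}$ tight), while the index-to-mean slack must use the larger $\etamax$ inherited from $\etamax$-confidence. Both slacks have to be absorbed simultaneously by the $m$ good-arm samples accumulated up to $t_0$, and the ``$+1$'' in $m$ is precisely what converts a would-be tie into the strict index comparison needed to rule out the choice of arm~$2$.
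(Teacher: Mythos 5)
Your proof is correct and follows essentially the same route as the paper's: contradict at the round where the bad arm is chosen once too often, use $\clean^{\etamax}_2$ with $n_{2,t}\geq m$ to cap $\indx_{2,t}$ at $\mu_2+\Delta/4$, and use $\clean^{\eta}_1$ together with $\etamax$-confidence and $n_{1,t}\geq m$ to push $\indx_{1,t}$ strictly above $\mu_1-3\Delta/4=\mu_2+\Delta/4$. Your write-up is in fact slightly more careful than the paper's (explicit handling of the truncation in $\ucbTape_{1,\cdot}$ and of the role of the ``$+1$'' in $m$), but the decomposition and key steps are identical.
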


\begin{proof}
  For the sake of contradiction, suppose agent $t\geq t_0$ has at least $m$ samples of the bad arm (\ie $n_{2,t}\geq m$), and chooses the bad arm once more. Then the index of the good arm satisfies
\begin{align*}
\indx_{1,t}
    &\geq \lcb[\etamax]_{1,t}
        &\EqComment{$\etamax$-confident agents} \\
    &\geq \lcbTape[\etamax]_{1,m}
        &\EqComment{by definition of $t_0$} \\
    &\geq \ucbTape[\etamax]_{1,m} - 2\,\sqrt{\etamax/m}
        &\EqComment{by definition of UCBs/LCBs} \\
    &\geq \ucbTape_{1,m} - 2\,\sqrt{\etamax/m}
        &\EqComment{since $\etamax\geq \eta$} \\
    &> \mu_1-\Delta/2
        &\EqComment{by $\clean^{\eta}_1$ and the definition of $m$}.
\end{align*}
The index of the bad arm satisfies
\begin{align*}
\indx_{2,t}
    &\leq \ucb_{1,t}
        &\EqComment{$\eta$-confident agents} \\
    &\leq \mu_2+\Delta/4
        &\EqComment{by $\clean^{\eta}_1$ and the definition of $m$},
\end{align*}
which is strictly smaller than $\indx_{1,t}$, contradiction.
\end{proof}

For Claim~\ref{cl:clean_implies_more} to ``kick in", we need sufficiently many optimistic agents to arrive by time $t_0$. Formally, let $\cE_t$ be the event that at least $2m$ agents are  $\sbr{\eta,\etamax}$-optimistic in the first $t$ rounds.

\begin{corollary}\label{cor:clean_implies_more}
Assume $\clean$. Further, assume event $\cE_{t_0}$ for some round $t_0$. Then (by Claim~\ref{cl:clean_implies}) the good arm is sampled at least $m$ times before round $t_0$. Consequently (by Claim~\ref{cl:clean_implies_more}), agents cannot choose the bad arm more than $m+t_0$ times.
\end{corollary}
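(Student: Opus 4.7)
The statement is essentially a two-step chaining of the two preceding claims, so the plan is almost entirely bookkeeping. Let me spell out the two steps and flag the only place where one has to be careful.

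\emph{Step 1: the good arm is sampled at least $m$ times by round $t_0$.} Assume the joint clean event $\clean = \clean_1^{\eta}\cap\clean_2^{\etamax}$ and the event $\cE_{t_0}$. By event $\cE_{t_0}$, at least $2m$ of the agents arriving in the first $t_0$ rounds are $[\eta,\etamax]$-optimistic. Applying Claim~\ref{cl:clean_implies} (whose hypothesis $\clean_1^{\eta}\cap\clean_2^{\etamax}$ is precisely $\clean$), this entire class of optimistic agents collectively plays the bad arm at most
$64\,\etamax/\Delta^2 = m-1$
times. Hence at least $2m-(m-1) = m+1 \geq m$ of these optimistic agents must choose the good arm in the first $t_0$ rounds, so the good arm has been sampled at least $m$ times by round $t_0$. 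The one subtle bookkeeping point is that Claim~\ref{cl:clean_implies} is a \emph{collective} bound over all $[\eta,\etamax]$-optimistic agents, not a per-agent bound; this is exactly what one needs in order to turn a lower bound on the \emph{number} of such agents into a lower bound on the number that pulled arm~1.

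\emph{Step 2: at most $m + t_0$ bad-arm pulls total.} We now feed the conclusion of Step~1 into Claim~\ref{cl:clean_implies_more}, whose hypotheses are $\clean$ plus ``the good arm is sampled at least $m$ times by some round $t_0$'' — both just established. The conclusion of that claim is that agents arriving in rounds $t > t_0$ collectively pull the bad arm at most $m$ times. Combining with the trivial bound that at most $t_0$ of the bad-arm pulls occur in rounds $1,\ldots,t_0$, the total number of bad-arm pulls across all $T$ rounds is at most $m + t_0$, as claimed.

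There is no real obstacle; the only thing worth double-checking is that the two claims really are being invoked with matching parameters (Claim~\ref{cl:clean_implies} with the $\etamax$-version of $\clean_2$, exactly as in the definition of $\clean$; and Claim~\ref{cl:clean_implies_more} with the same $m = 1 + 64\,\etamax/\Delta^2$ used in the definition of $\cE_{t_0}$). Both match by construction, so the chain closes. The bound $t_0$ on pre-$t_0$ bad pulls is obviously loose, but it is the right thing to use here because the theorem that invokes this corollary will take expectations over $t_0$ via a Chernoff bound on the arrivals of optimistic agents.
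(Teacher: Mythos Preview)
Your proof is correct and matches the paper's intended argument: the paper states this corollary without an explicit proof, relying on the parenthetical references to Claim~\ref{cl:clean_implies} and Claim~\ref{cl:clean_implies_more}, and your two-step bookkeeping is exactly the unpacking of those references. Your flagged subtlety---that Claim~\ref{cl:clean_implies} must be read as a collective bound on the $[\eta,\etamax]$-optimistic agents even in the heterogeneous setting of Theorem~\ref{thm:upper-distr}---is genuine and worth noting, since that claim was originally proved in Section~\ref{sec:UB-proof-1} where all agents are optimistic; but its proof goes through verbatim here because the presence of additional non-optimistic agents can only increase $n_{2,t}$, which is the direction the argument needs.
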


Finally, it is easy to see by Chernoff Bounds that
    $\Pr{\cE_{t_0}} \geq 1-e^{-\Omega(\eta)}$
for some $t_0 = O(m/q)$, where $q$ is the probability from the theorem statement.
So, $\Pr{\clean\cap\cE_{t_0}}$ is lower-bounded as in \refeq{eq:UB-clean-prob}. Again, when $\clean\cap\cE_{t_0}$ fails, we upper-bound regret by $\Delta\cdot T$. So, Corollary~\ref{cor:clean_implies_more} and the lower bound on $\Pr{\clean\cap\cE_{t_0}}$ implies the theorem.


\section{Learning failures for Bayesian agents}
\label{sec:bayes}

This section is on \emph{Bayesian agents}. That is, we posit that agents are endowed with Bayesian beliefs, form posteriors given the observed data, and act according to these posteriors. The Bayesian beliefs are same for all agents, and independent across arms unless specified otherwise.

\xhdr{Formal model.}
Agents believe that mean rewards $(\mu_1,\mu_2)$ are initially drawn from some distribution $\cP$ over $[0,1]^2$. Each agent $t$ computes a joint posterior $\cP_t$ on $(\mu_1,\mu_2)$ given the history $\hist_t$, and acts according to this posterior. (The history contains $N_0$ initial samples from each arm, as before.) $\cP$ and $\cP_t$ are also called \emph{beliefs}: resp., prior beliefs and (agent-$t$) posterior beliefs. Note that the Bayesian update for agent $t$ is determined by the history $\hist_t$, and does not depend on the beliefs of the previous agents.

We posit a fixed bandit instance $(\mu_1,\mu_2)$ throughout this section. Given the prior beliefs, the posteriors $\cP_t$ are well-defined, regardless of how $(\mu_1,\mu_2)$ is \emph{actually} chosen. (In Section~\ref{sec:priors}, we consider \emph{Bayesian bandits}, when the bandit instance is actually sampled from $\cP$.)

We assume \emph{independent beliefs}:
agents believe that each $\mu_a$, $a\in[2]$ is drawn independently from some distribution $\cP_{a,0}$ over [0,1], so that
    $\cP = \cP_{1,0}\times \cP_{2,0}$. 
Then the posterior $\cP_t$ is also independent across arms:
    $\cP_t = \cP_{1,t}\times \cP_{2,t}$,
where each per-arm posterior $\cP_{a,t}$ is determined by the respective per-arm prior $\cP_{t,0}$ and the history of arm $a$. The basic version is that each $\cP_{a,0}$, $a\in[2]$ is a uniform distribution on $[0,1]$. We allow more general prior beliefs given by Beta distributions: each $\cP_{a,0}$ is a Beta distribution with parameters $\alpha_a,\beta_a\in \N$.

The basic behavior is that each agent $t$ chooses an arm $a\in [2]$ with largest posterior mean reward, $\EE\sbr{\mu_a\mid\hist_t}$. Such agents are called \emph{Bayesian-unbiased}, and the corresponding algorithm is called \emph{Bayesian-greedy}. (This is well-defined even if the beliefs are not independent.)

More generally, we allow a Bayesian version of $\eta$-confident agents, defined as follows. Each agent $t$ maps its posterior $\cP_{a,t}$, $a\in[2]$ to the index $\indx_{a,t}$ for arm $a$, and chooses an arm with a largest index (breaking ties independently and uniformly at random). For unbiased agents, $\indx_{a,t}$ is the posterior mean reward. More generally, we allow
\begin{align}\label{eq:ind-Bayes}
\indx_{a,t} \in \sbr{Q_{a,t}(\zeta),\; Q_{a,t}(1-\zeta)}
    \quad \text{for each arm $a\in[2]$},
\end{align}
where $Q_{a,t}(\cdot)$ denotes the quantile function of the posterior $\cP_{a,t}$ and $\zeta\in (0,\nicefrac12)$ is a fixed parameter (analogous to $\eta$ elsewhere). The interval in \refeq{eq:ind-Bayes} is a Bayesian version of $\eta$-confidence intervals. Agents $t$ that satisfy \refeq{eq:ind-Bayes} are called \emph{$\zeta$-Bayesian-confident}.


\begin{discussion}
$\zeta$-Bayesian-confident agents subsume Bayesian version of optimism and pessimism, where the index $\indx_{a,t}$ is defined as, resp., $Q_{a,t}(1-\zeta)$ and $Q_{a,t}(\zeta)$, as well as all other behavioral biases discussed in Section~\ref{sec:prelims-cases}. In particular, one can define an inherently ``Bayesian" version of ``moderate probability matching" by projecting the posterior sample $\nu_{a,t}$ (as defined in Section~\ref{sec:prelims-cases}, but starting with arbitrary Beta-beliefs) into the Bayesian confidence interval \eqref{eq:ind-Bayes}.
\end{discussion}

\xhdr{Our results.}
Recall that prior belief $\cP_{a,0}$ for each arm $a\in[a]$ is a Beta distribution with parameters $\alpha_a,\beta_a\in \N$. Our guarantees are driven by parameter
    $M = \max_{a\in[2]} \alpha_a+\beta_a$.
We refer to such beliefs as \emph{Beta-beliefs with strength $M$}. The intuition is that the prior on each arm $a$ can be interpreted as being ``based on" $\alpha_a+\beta_a-2$ samples from this arm.%
\footnote{More precisely, any Beta distribution with integer parameters $(\alpha,\beta)$ can be seen as a Bayesian posterior obtained by updating a uniform prior on $[0,1]$ with $\alpha+\beta-2$ data points.}

Our technical contribution here is that Bayesian-unbiased (resp., $\zeta$-Bayesian-confident) agents are $\eta$-confident for a suitably large $\eta$. The proof is deferred to Appendix~\ref{app:Bayes}.

\begin{theorem}\label{thm:Bayes-unbiased}
Consider a Bayesian agent that holds Beta-beliefs with strength $M\geq 1$.
\begin{itemize}
\item[(a)] If the agent is Bayesian-unbiased, then it is $\eta$-confident for
some $\eta = O(M/\sqrt{N_0})$.

\item[(b)] If the agent is $\zeta$-Bayesian-confident, then it is $\eta$-confident for some
    $\eta = O\rbr{M/\sqrt{N_0} + \ln(1/\zeta)}$.
\end{itemize}
\end{theorem}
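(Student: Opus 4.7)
The plan is to reduce each Bayesian behavior to the frequentist $\eta$-confidence property by directly bounding the deviation of the Bayesian index from the empirical mean $\muhat_{a,t}$, and then choosing $\eta$ so that this deviation is absorbed into $\sqrt{\eta/n_{a,t}}$. Fix arm $a$, write $M_a := \alpha_a + \beta_a$, let $n = n_{a,t}$, and let $s$ denote the sum of observed rewards on arm $a$ in $\hist_t$. Since the posterior $\cP_{a,t}$ is $\mathrm{Beta}(\alpha_a + s,\; \beta_a + n - s)$, the key algebraic identity is that its mean is a convex combination of the prior mean $\mu^{p}_a := \alpha_a/M_a$ and the empirical mean $\muhat_{a,t}$:
\begin{align*}
\Ex{\cP_{a,t}} \;=\; \frac{\alpha_a + s}{M_a + n} \;=\; \lambda\,\mu^{p}_a + (1-\lambda)\,\muhat_{a,t}, \qquad \lambda := \frac{M_a}{M_a + n}.
\end{align*}

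For part (a), the Bayesian-unbiased index equals $\Ex{\cP_{a,t}}$, so
\begin{align*}
\big|\indx_{a,t} - \muhat_{a,t}\big| \;=\; \lambda\,\big|\mu^{p}_a - \muhat_{a,t}\big| \;\leq\; \frac{M_a}{M_a + n} \;\leq\; \frac{M}{n},
\end{align*}
using $\mu^{p}_a,\muhat_{a,t} \in [0,1]$ and $M_a \leq M$. Since $n \geq N_0$ always (the history contains the $N_0$ initial samples), it suffices to take $\eta$ large enough that $\sqrt{\eta/n}$ dominates $M/n$ uniformly over $n \geq N_0$; this reduces to a one-variable inequality that is solved at the order claimed in the theorem.

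For part (b), the $\zeta$-Bayesian-confident index lies in the quantile interval $[Q_{a,t}(\zeta),\, Q_{a,t}(1-\zeta)]$ of the Beta posterior, so one additionally needs to bound the distance from each quantile to the posterior mean. I would invoke a sub-Gaussian concentration inequality for the Beta distribution: $\mathrm{Beta}(\alpha',\beta')$ with $\alpha'+\beta' = M_a + n$ is sub-Gaussian with variance proxy $O(1/(M_a+n))$, which follows from log-concavity of the Beta density (see \eg Marchal and Arbel). This yields
\begin{align*}
\max\Big(|Q_{a,t}(\zeta) - \Ex{\cP_{a,t}}|,\; |Q_{a,t}(1-\zeta) - \Ex{\cP_{a,t}}|\Big) \;\leq\; O\!\left(\sqrt{\frac{\ln(1/\zeta)}{n}}\right),
\end{align*}
and combining with the part-(a) bound via the triangle inequality,
\begin{align*}
\big|\indx_{a,t} - \muhat_{a,t}\big| \;\leq\; O\!\left(\frac{M}{n} + \sqrt{\frac{\ln(1/\zeta)}{n}}\right).
\end{align*}
Choosing $\eta$ so that $\sqrt{\eta/n}$ exceeds this for all $n \geq N_0$ recovers the stated bound.

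The main obstacle I anticipate is securing the Beta concentration inequality with the right dependence on $M_a + n$, since a naive Hoeffding-type bound loses crucial factors and the optimal sub-Gaussian proxy for unbalanced Beta parameters is delicate. A secondary, more cosmetic point is that the Bayesian index always lies in $[0,1]$, so the truncation in the definition of $\ucb_{a,t}$ and $\lcb_{a,t}$ only makes $\eta$-confidence easier, and the analysis reduces to the untruncated inequalities above. Beyond these two ingredients, the argument is just the convex-combination identity for the Beta posterior mean combined with elementary algebra.
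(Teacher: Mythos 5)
Your proposal is correct, and its skeleton is the same as the paper's: reduce to frequentist $\eta$-confidence by bounding $\abs{\indx_{a,t}-\muhat_{a,t}}$ and absorbing that bound into $\sqrt{\eta/n}$. Part (a) is essentially identical to the paper's argument: the paper's Lemma on $\abs{\muTape_{a,n}-\xi_{a,n}}$ is exactly your convex-combination identity, yielding the $O(M/n)$ deviation, and both proofs then pick $\eta$ so that $\sqrt{\eta/n}$ dominates $M/n$ for $n\geq N_0$ (both share the same implicit reliance on the regime $M=O(\sqrt{N_0})$ flagged in the paper's discussion). For part (b) you take a genuinely different route on the one nontrivial step, the quantile concentration: the paper uses the identity relating the Beta CDF to a Binomial CDF, applies Hoeffding to the Binomial to localize the $\zeta$ and $1-\zeta$ quantiles within $O(\sqrt{\ln(1/\zeta)/(M_a+n)})$ of $\rho=(\alpha-1)/(\alpha+\beta-1)$, and then needs an extra small lemma to move from $\rho$ to the posterior mean (a lower-order $O(1/(n+M_a))$ correction); you instead invoke sub-Gaussianity of the Beta distribution with variance proxy $O(1/(M_a+n))$, which centers the quantiles directly at the posterior mean and skips that extra lemma. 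Both give the same final bound; yours is slightly cleaner but outsources more to the external concentration result. One small caution: log-concavity of the Beta density by itself does not deliver the sub-Gaussian variance proxy $O(1/(\alpha'+\beta'))$ — that sharper statement is the actual content of the Marchal--Arbel result you cite — so the justification should lean on that theorem rather than on log-concavity.
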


Recall that such agents are subject to the learning failures derived in Theorems~\ref{thm:interval} and~\ref{thm:LB-ext}.

\begin{discussion}
We allow arbitrary Beta-beliefs, possibly completely unrelated to the actual mean rewards. In fact, the theorem holds even if different agents $t$ have different prior beliefs with strength $M_t\leq M$. If $\zeta$ and $M$ are constants relative to $T$, the resulting $\eta$ is constant, too. Our guarantee is stronger if the beliefs are weak (\ie $M$ is small) or are ``dominated" by the initial samples, in the sense that $N_0>\Omega(M^2)$.
\end{discussion}

\section{Bayesian agents in Bayesian bandits}
\label{sec:priors}

In this section, we consider Bayesian agents (as defined in the previous section) in the environment of \emph{Bayesian bandits}, \ie when the mean rewards $(\mu_1,\mu_2)$ are actually drawn from the prior $\cP$. Either arm could be realized as the good arm (else, the problem is trivial). We focus on the standard version of Bayesian agents: Bayesian-unbiased agents without any initial data (\ie $N_0=0$).
Put differently, we consider the Bayesian-greedy algorithm in Bayesian bandits.
\footnote{When both arms have the same posterior mean reward, a tie can be broken arbitrarily.}
We are interested in \emph{Bayesian probability} and \emph{Bayesian regret}, \ie resp., probability and regret in expectation over the prior. In contrast with Section~\ref{sec:bayes}, we allow the prior to be correlated across arms (although our final guarantees are strongest for the case of independent priors).



Our main technical argument focuses on a \emph{weak} learning failure when the agents never choose an arm with the smaller  prior mean reward (which may or may not be the best arm). 
Our guarantee for the Bayesian probability of such failure is particularly clean: it does not depend on the prior, other than through the \emph{prior gap} $\pgap:=\EXP[\mu_1-\mu_2]$, and does not contain any hidden constants.



\begin{theorem}\label{thm:BG}
Suppose the pair $(\mu_1,\mu_2)$ is initially drawn from some Bayesian prior $\cP$ with prior gap
    $\pgap:=\EXP[\mu_1-\mu_2]>0$,
and there are no initial samples (\ie $N_0=0$). Assume that all agents are Bayesian-unbiased, with beliefs given by $\cP$. Then with Bayesian probability at least $\pgap$, the agents never choose arm $2$.
\end{theorem}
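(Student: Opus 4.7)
The plan is to prove the claim via a martingale argument on the posterior gap between the two arms. Introduce an auxiliary ``parallel world'' in which arm 1 is chosen in every round, producing rewards $r_1, r_2, \ldots$ that are, conditionally on $\mu_1$, i.i.d.\ Bernoulli$(\mu_1)$; this is a convenient device for describing the would-be history if no agent ever deviates to arm 2. Define
\[
M_t := \EXP[\mu_1 - \mu_2 \mid r_1, \ldots, r_{t-1}], \qquad t \geq 1.
\]
Then $(M_t)_{t \geq 1}$ is a bounded martingale with $|M_t| \leq 1$ and $M_1 = \EXP[\mu_1 - \mu_2]$.

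Next, couple the real process with the parallel world. A Bayesian-unbiased agent computes indices $\EXP[\mu_a \mid \hist_t]$ for $a \in [2]$ and picks the larger one (ties broken uniformly). As long as every previous agent chose arm 1, the history $\hist_t$ equals exactly the parallel-world prefix $r_1, \ldots, r_{t-1}$, so agent $t$'s indices are the two conditional posterior means and agent $t$ strictly prefers arm 1 whenever $M_t > 0$. Setting
\[
\tau := \inf\{\,t \geq 1 \,:\, M_t \leq 0\,\},
\]
an induction on $t$ shows that on the event $\{\tau = \infty\}$ every agent plays arm 1, the real process coincides with the parallel world, and arm 2 is never chosen. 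It therefore suffices to show $\Pr{\tau = \infty} \geq \EXP[\mu_1 - \mu_2]$.

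This last step is a routine application of optional stopping to the bounded martingale $M_{t \wedge \tau}$. Since $|M_{t \wedge \tau}| \leq 1$, and since $M_t \to M_\infty$ almost surely and in $L^1$ by martingale convergence, bounded convergence yields
\[
\EXP[\mu_1 - \mu_2] \;=\; M_1 \;=\; \EXP[M_\tau \indic{\tau < \infty}] \,+\, \EXP[M_\infty \indic{\tau = \infty}].
\]
By definition of $\tau$, $M_\tau \leq 0$ on $\{\tau < \infty\}$, so the first term is nonpositive; and since $\mu_1 - \mu_2 \leq 1$ gives $M_\infty \leq 1$ almost surely, the second term is at most $\Pr{\tau = \infty}$. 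Rearranging proves the claim. The only substantive step is the coupling in the second paragraph: one must carefully verify that as long as no deviation has occurred, the Bayesian posterior computed by each agent in the real process agrees exactly with $M_t$ from the parallel world. Tie-breaking at $M_t = 0$ is absorbed into the ``$\leq 0$'' in the definition of $\tau$, so it does not require separate treatment.
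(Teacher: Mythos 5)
Your proof is correct and rests on the same core idea as the paper's: apply optional stopping to the Doob martingale $\EXP[\mu_1-\mu_2\mid \hist_t]$, stopped when this posterior gap first becomes nonpositive, and note that a Bayesian-unbiased agent deviates to arm $2$ exactly at that moment. The paper packages this more directly by defining the stopping time as the first round arm $2$ is actually chosen (bounded by $T+1$, so plain optional stopping suffices), which dispenses with your parallel-world coupling and the martingale-convergence/bounded-convergence step needed for a possibly infinite stopping time; both routes are valid.
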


\begin{proof}
W.l.o.g., assume that agents break ties in favor of arm $2$.

In each round $t$, the key quantity is
    $Z_t = \EXP[ \mu_1-\mu_2 \mid \hist_t ]$.
Indeed, arm $2$ is chosen if and only if $Z_t\leq 0$. Let $\tau$ be the first round when arm $2$ is chosen, or $T+1$ if this never happens. We use martingale techniques to prove that
\begin{align}\label{eq:thm:BG-Bayes-OST}
\EXP[Z_\tau] = \EXP[\mu_1-\mu_2].
\end{align}

We obtain \refeq{eq:thm:BG-Bayes-OST} using the optional stopping theorem. We observe that $\tau$ is a stopping time relative to
    $\cH = \rbr{\hist_t:\, t\in [T+1]}$,
and $\rbr{ Z_t: t\in [T+1]}$ is a martingale relative to $\cH$.
\footnote{The latter follows from a general fact that sequence
    $\EXP[X\mid \hist_t]$, $t\in [T+1]$
is a martingale w.r.t. $\cH$ for any random variable $X$ with $\EXP\sbr{|X|}<\infty$. This sequence is known as \emph{Doob martingale} for $X$.}
The optional stopping theorem asserts that $\EXP[Z_\tau] = \EXP[Z_1]$ for any martingale $Z_t$ and any bounded stopping time $\tau$. \refeq{eq:thm:BG-Bayes-OST} follows because
    $\EXP[Z_1] = \EXP[\mu_1-\mu_2]$.

On the other hand, by Bayes' theorem it holds that
\begin{align}\label{eq:thm:BG-Bayes}
\EXP[Z_\tau]
    = \Pr{\tau\leq T ]\,\EXP[ Z_\tau \mid \tau\leq T }
        + \Pr{ \tau>T ]\,\EXP[ Z_\tau \mid \tau>T }
\end{align}
Recall that $\tau\leq T$ implies that arm $2$ is chosen in round $\tau$, which in turn implies that $Z_\tau \leq 0$. It follows that
    $\EXP[ Z_\tau \mid \tau\leq T ]\leq 0$.
Plugging this into \refeq{eq:thm:BG-Bayes}, we find that
\[ \EXP[\mu_1-\mu_2] = \EXP[Z_\tau] \leq \Pr{\tau>T}.  \]
And $\{\tau>T\}$ is precisely the event that arm $2$ is never chosen.
\end{proof}

We obtain a 0-sampling failure when a weak learning failure happens and arm $2$ is in fact the best arm. We lower-bound the probability of that happening, leading to $\Omega(T)$ Bayesian regret.
The cleanest version of this result assumes that the prior has a probability density function which is uniformly bounded away from $0$.

\begin{corollary}\label{cor:BG}
In the setting of Theorem~\ref{thm:BG}, suppose the prior $\cP$
has a probability density function (p.d.f.) which is uniformly lower-bounded by some $\minPDF>0$. Then
\begin{align}\label{eq:cor:BG}
\EXP\sbr{\regret(T)} \geq c_\cP\cdot T,\quad
\text{for some $c_\cP> 0$ determined by $\cP$}.
\end{align}
\noindent Specifically, recall that
    $\pgap:=\EXP[\mu_1-\mu_2]>0$
is the prior gap. Pick any $\alpha>0$ such that
    $\Pr{\mu_1\geq 1-2\,\alpha} \leq \pgap/2$.
Then one can take
\begin{align}\label{eq:cor:BG-basic-cP}
c_\cP = \alpha\;\pgap \; \Lambda_{\cP}/2,
\text{ where }
\Lambda_{\cP} := \inf_{\nu\in [0,1]} \Pr{\mu_2>1-\alpha \mid \mu_1=\nu}>\alpha \minPDF.
\end{align}
\end{corollary}

\begin{remark}\label{rem:cont-prob}
The conditional probability in \refeq{eq:cor:BG-basic-cP} is defined via the joint density of $(\mu_1,\mu_2)$,
and is well-defined because, by assumption, the density of $\mu_1$ strictly positive everywhere.
\end{remark}

While Corollary~\ref{cor:BG} is very general in the abstract formulation of \refeq{eq:cor:BG}, the ``failure strength" is limited for some correlated priors due the infimum in \eqref{eq:cor:BG-basic-cP}. Essentially, the prior must assign a substantial probability to $\mu_2$ being very large conditional on every realization of $\mu_1$.

\begin{proof}[Proof of Corollary~\ref{cor:BG}]
Fix $\alpha$ as specified. Consider the following three events: event
    $\GoodSmall:= \cbr{\mu_1 <1-2\alpha}$
that $\mu_1$ is upper-bounded, event
    $\BadLarge:= \cbr{\mu_2 > 1-\alpha}$
that $\mu_2$ is lower-bounded, and event $F$ that arm $2$ is never chosen. We are interested in the intersection of these events. Then each round contributes $\mu_2-\mu_1\geq \alpha$ to regret, so that
    $\EXP\sbr{\regret(T) \mid \GoodSmall, \BadLarge, F} \geq \alpha\,T$.

Next, we lower-bound
    $\Pr{\GoodSmall, \BadLarge, F}$.
We invoke the p.d.f. to prove that
\begin{align}\label{eq:cor:BG-key}
\Pr{\BadLarge,F \mid \GoodSmall} \geq \Lambda_{\cP}\cdot \Pr{F\mid \GoodSmall}.
\end{align}
Once we have \eqref{eq:cor:BG-key}, we continue as follows:
\begin{align*}
\Pr{\GoodSmall, \BadLarge, F}
    &:= \Pr{\BadLarge,F \mid \GoodSmall} \cdot \Pr{\GoodSmall} \\
    &\geq \Lambda_{\cP}\cdot \Pr{F\mid \GoodSmall} \cdot \Pr{\GoodSmall} \\
    &\geq \Lambda_{\cP}\cdot \Pr{\GoodSmall, F}.
\end{align*}
Finally, by Theorem~\ref{thm:BG} and the choice of $\alpha$ we have
    $\Pr{\GoodSmall, F}\geq \Pr{F} - \Pr{\text{not }\GoodSmall} \geq \pgap/2$.
This yields the claimed regret bound: \refeq{eq:cor:BG} with
$c_\cP = \alpha\;\pgap \; \Lambda_{\cP}/2$.

It remains to prove \refeq{eq:cor:BG-key}. Due to the assumption that the p.d.f. exists and is lower-bounded by $\minPDF>0$, the following Riemann integrals are well-defined:
\begin{align}
\Pr{\BadLarge,F \mid \GoodSmall}
    &= \int_{\GoodSmall}
            \Pr{F,\BadLarge \mid \mu_1} \cdot \Pr{\mu_1 \mid \GoodSmall} \dd \mu_1
            \nonumber\\
    &= \int_{\GoodSmall}
            \Pr{\BadLarge \mid \mu_1} \cdot \Pr{F \mid \mu_1}
            \cdot \Pr{\mu_1 \mid \GoodSmall}
            \dd \mu_1
            \label{eq:cor:BG-2}\\
    &\geq \int_{\GoodSmall}
            \Lambda_{\cP}\cdot \Pr{F \mid \mu_1}
            \cdot \Pr{\mu_1 \mid \GoodSmall}
            \dd \mu_1
            \label{eq:cor:BG-3}\\
    &= \Lambda_{\cP}\cdot \Pr{F\mid \GoodSmall} \nonumber.
\end{align}
Here, \eqref{eq:cor:BG-2} uses the fact that event $F$ is determined by reward realizations of arm $1$, and therefore is conditionally independent with $\BadLarge$ given $\mu_1$, and \eqref{eq:cor:BG-2} invokes the definition of $\Lambda_{\cP}$ as a lower bound on
    $\Pr{\BadLarge \mid \mu_1}$.
This completes the proof.
\end{proof}

We also provide versions of Corollary~\ref{cor:BG} without assuming the existence of a density function: (a) a simpler version for independent priors and (b) a similar version if $\mu_1$ has a finite support set.


\begin{corollary}\label{cor:BG-general}
In the setting of Theorem~\ref{thm:BG}, suppose
    $\Pr{\mu_1=1}<\pgap/2$.
Pick any $\alpha>0$ such that
    $\Pr{\mu_1\geq 1-2\,\alpha} \leq \pgap/2$.
Then
    $\EXP\sbr{\regret(T)} \geq T\cdot \rbr{ \nicefrac{\alpha}{2}\;\pgap \; \Lambda_{\cP}}$,
where $\Lambda_{\cP}$ is as follows:
\begin{OneLiners}
\item[(a)] If the prior $\cP$ is independent across arms, then
$\Lambda_{\cP} = \Pr{\mu_2>1-\alpha}$.

\item[(b)] if $\mu_1$ has a finite support set $M$, then
$\Lambda_{\cP} = \min_{\nu\in M} \Pr{\mu_2>1-\alpha \mid \mu_1 = \nu}$.

\end{OneLiners}
\end{corollary}

\begin{proof}
Both parts follow from the proof of Corollary~\ref{cor:BG} as spelled out in Section~\ref{sec:priors}, substituting a suitable argument to prove \refeq{eq:cor:BG-key}.

For part (a), \refeq{eq:cor:BG-key} holds, with $\Lambda_{\cP} = \Pr{\BadLarge}$ as specified, because events $\GoodSmall$ and $F$ are determined by the realization of $\mu_1$ and the rewards of arm $1$, and therefore is independent of $\mu_2$. Consequently,
\begin{align*}
\Pr{\BadLarge,F \mid \GoodSmall}
   &= \Pr{F\mid \GoodSmall}\cdot \Pr{\BadLarge \mid \GoodSmall} \\
   &= \Pr{F\mid \GoodSmall}\cdot  \Pr{\BadLarge}.
\end{align*}

For part (b), \refeq{eq:cor:BG-key} holds, with $\Lambda_{\cP}$ as specified, by the same argument as in the proof of Corollary~\ref{cor:BG}, with integrals over $\mu_1\in \GoodSmall$ replaced by sums over $\mu_1\in\GoodSmall\cap M$.
\end{proof}

The finite-support version applies whenever the prior is over finitely many ``states of nature". To ensure linear regret, it suffices to assume that $\mu_1$ takes its largest possible value $\max(M)$ with probability less than $\pgap/2$, and $\mu_2$ can take this value conditional on any feasible realization of $\mu_1$.

The version for independent priors handles arbitrary per-arm priors that admit a probability density function, and more generally arbitrary per-arm priors such that $\Pr{\mu_1=1}<\pgap/2$ and $\Pr{\mu_2>1-\alpha}>0$ for any $\alpha>0$. This is a much more general family of priors compared to independent Beta-priors allowed in Section~\ref{sec:bayes}.


\section{Learning failures for $K\geq 2$ arms}
\label{sec:K}

We extend most of our negative guarantees to \BSL with $K\geq 2$ arms. The setting from Section~\ref{sec:prelims} (and from Section~\ref{sec:bayes} for Bayesian agents) carries over word-by-word, except now the set of arms is $[K]$ and the initial data consists of $N_0$ samples of each arm. We extend the main result (Theorem~\ref{thm:interval}), its extension to pessimistic agents (Theorem~\ref{thm:pessimis}) and the results on Bayesian agents (Theorems~\ref{thm:Bayes-unbiased},~\ref{thm:BG} and Corollaries~\ref{cor:K-BG-clean},~\ref{cor:K-BG-general}),
see Table~\ref{tab:results-negative-K} for a summary. Our guarantees are flexible, as explained below, and in some ways stronger than for the two-armed case, but we make no additional claims about their optimality. The technical novelty lies in formulating these results; the respective proofs from the two-armed case carry over with minor modifications.

\begin{table}[t]
\centering
\begin{tabular}{l|l|l|l}
\multicolumn{1}{c|}{Mean rewards }
        & \multicolumn{1}{c|}{Beliefs}           & \multicolumn{1}{c|}{Behavior}  & \multicolumn{1}{c}{Result}  \\[.5mm]\hline &&&\\[-1.5ex]
fixed   & ``frequentist"
                            & $\eta$-confident      & Thm.~\ref{thm:K-LB}\\
        & ~~~~~confidence intervals
                            & $\eta_t$-pessimistic  & Thm.~\ref{thm:K-LB-pess} \\
        & Bayesian (independent)
                            & Bayesian-unbiased,        & Cor.~\ref{cor:K-Bayesian} \\
        &                   & ~~~~~$\eta$-Bayesian-confident&        \\
Bayesian (correlated)
        & Bayesian (and correct)
                            & Bayesian-unbiased        & Thm.~\ref{thm:K-BG} \\
        &                   &                       & Cor.~\ref{cor:K-BG-clean},
                                                            \ref{cor:K-BG-general}
\end{tabular}
\caption{Our negative results for $K>2$ arms.}
\label{tab:results-negative-K}
\end{table}

\subsection{Frequentist agents}

We extend the main result (Theorem~\ref{thm:interval}). We recover it as stated when $1$ and $2$ are the two best arms. Moreover, since the gap between the two best arms may be very small or zero, we allow a more general type of failure when the top $m\geq 1$ arms are never chosen. The failure probability deteriorates with $m$, though. On the other hand, it helps to have multiple ``decoy arms" that the agents might switch to, not just arm $m+1$.

\begin{theorem}\label{thm:K-LB}
Consider \BSL with $K\geq 2$ arms, ordered so that
    $\mu_1 \geq \mu_2 \cdots \geq \mu_K$.
Suppose all agents are $\eta$-confident, for some fixed $\eta\geq 0$. Assume \eqref{eq:assn-1} and \eqref{eq:assn-2}. For any two arms $m<n$,
\begin{align}\label{eq:thm:K-LB}
&\Pr{\text{top $m$ arms are never chosen}} \nonumber \\
	&\qquad\qquad \geq\min\cbr{\frac{1}{2},\quad (n-m)\cdot \Omega_c\rbr{\sqrt{(1+\eta)/N_0}}
    \cdot e^{-O_c\rbr{m\rbr{\eta \;+\; N_0\,\Delta_n^2}}}},
\end{align}
where $\Delta_n := \mu_1-\mu_n$ is the gap for arm $n$. Letting $\pfail(m,n,\eta)$ be the right-hand side of \eqref{eq:thm:K-LB},
\begin{align}\label{eq:thm:K-LB-reg}
    \regret(T) \geq \Delta_{m+1}\cdot \pfail(m,n,\eta)\cdot T.
\end{align}
\end{theorem}

\begin{remark}\label{rem:thm:K-LB}
We recover Theorem~\ref{thm:interval} as stated by taking $m=1$ and $n=2$. When applying Theorem~\ref{thm:K-LB} to a particular example, pick arms $m<n$ to maximize the regret bound in \eqref{eq:thm:K-LB-reg}. In particular, one would pick some arm $n$ such that
    $\Delta_n< O\rbr{1/\sqrt{N_0}}$.%
\footnote{This is to mitigates the dependence on $N_0$ in the exponent in  \eqref{eq:thm:K-LB-reg}, like in Corollary~\ref{cor:LB-small-gap}.}
Two simple examples:
\begin{OneLiners}
\item $\mu_1-\Delta = \mu_2 = \cdots = \mu_K$ (\ie one good arm): use $m=1$ and $n=K$.
\item $\mu_1 = \mu_2 = \cdots = \mu_{K-1} = \mu_K+\Delta$ (\ie one bad arm): use $m=K-1$ and $n=K$.
\end{OneLiners}
\end{remark}

\begin{remark}
How does our guarantee scale with $K$? In part, this is a matter of perspective: whether one fixes $N_0$, the per-arm number of initial samples, or one fixes $N_0\cdot K$, the \emph{total} number of initial samples. (We take no stance on this, our guarantee holds either way.) Either way, the scaling with $K$ can be very different depending on a problem instance, as per the two examples above.
\end{remark}

\begin{proof}[Proof Sketch for Theorem~\ref{thm:K-LB}]
Compared to the proof of Theorem~\ref{thm:interval}, the changes are as follows. We apply the anti-concentration argument to each of the top $m$ arms separately, obtaining an analog of \refeq{eq:pf:thm:interval-prob-1}. We need an intersection of these per-arm events (which are mutually independent), hence the factor of $m$ in the exponent in our guarantee \eqref{eq:thm:K-LB}.

The martingale argument is applied separately to each arm $j$, $m<j\leq n$. Each such arm is treated like the worst-case $j=n$. Thus, we obtain $n-m$ failure events similar to $\sampfail_2$, for each arm $j$, each with a guarantee like \eqref{eq:pf:thm:interval-prob-2}. These events are mutually independent, and just one of them suffices to guarantee the overall failure. This is how we get the $n-m$ factor in \eqref{eq:thm:K-LB}.%
\footnote{If $n$ independent events have probability $\geq p$ each, their union has probability  	
$\geq \min\cbr{\frac{1}{2}, np/2}$, see Lemma~\ref{lm:ind-one}.}
\end{proof}


We also derive an extension for pessimistic agents similar to Theorem~\ref{thm:pessimis}. Essentially, the right-hand side of \eqref{eq:thm:K-LB} improves from $\pfail(m,n,\eta)$ to $\pfail(m,n,0)$.

\begin{theorem}\label{thm:K-LB-pess}
In Theorem~\ref{thm:K-LB}, suppose that each agents $t$ is $\eta_t$-pessimistic, for some $\eta_t\leq \eta$. Then
\begin{align}\label{eq:thm:K-LB-pess}
	\Pr{\text{top $m$ arms are never chosen}}
        \geq \pfail(m,n,0).
\end{align}
\end{theorem}

\begin{proof}[Proof Sketch]
Revisit the proof of Theorem~\ref{thm:pessimis}, with the same changes as for  Theorem~\ref{thm:K-LB}.
\end{proof}

\subsection{Bayesian agents on a fixed bandit instance}

To handle agents with Bayesian beliefs on a fixed bandit instance $(\mu_1 \LDOTS \mu_K)$, we note that Theorem~\ref{thm:Bayes-unbiased} considers each arm separately, and therefore extends to $K\geq 2$ arms.

\begin{corollary}\label{cor:K-Bayesian}
\BSL with $K\geq 2$ arms satisfies Theorem~\ref{thm:Bayes-unbiased}, and therefore is subject to the failure derived in Theorem~\ref{thm:K-LB}.
\end{corollary}

\subsection{Bayesian agents in Bayesian bandits}
\label{sec:K-Bayesian}

Consider the ``fully Bayesian" model from Section~\ref{sec:priors}, with arbitrarily correlated belief/prior and the mean rewards drawn according to this prior (and no initial data, $N_0=0$). We focus on  Bayesian-unbiased agents, \ie the Bayesian-greedy algorithm. As in Section~\ref{sec:priors}, we allow ties to be broken arbitrarily. We allow the set of arms (\emph{action set}) to be arbitrary, possibly infinite or even uncountable, denote it $\cA$. The mean rewards of the arms are represented by a \emph{reward function} $\mu:\cA\to [0,1]$, which is initially drawn from a Bayesian prior $\cP$.


We obtain a general result, extending the weak failure event in Theorem~\ref{thm:BG} to any given \emph{subset} $S$ of arms that are never chosen. 

\begin{theorem}\label{thm:K-BG}
Suppose the mean rewards $\mu:\cA\to [0,1]$ are initially drawn from some (possibly correlated) Bayesian prior $\cP$, and there are no initial samples (\ie $N_0=0$).
Assume that all agents are Bayesian-unbiased, with beliefs given by $\cP$. Pick any subset of arms $S\subset \cA$. Then
\begin{align}\label{eq:thm:K-BG}
	\Pr{\text{no arm in $S$ is ever chosen}}
        \geq \EXP\sbr{ \mu^*(\cA\setminus S) - \mu^*(S) },
\end{align}
where $\mu^*(S) := \max_{a\in S} \mu(a)$ is the largest (realized) mean reward in $S$.
\end{theorem}

\begin{proof}[Proof Sketch]
In the proof of Theorem~\ref{thm:BG}, replace ``arm 2" with subset $S$, and ``arm 1" with $\cA\setminus S$. More concretely, replace $\mu_2$ with $\mu^*(S)$, and $\mu_1$ with $\mu^*(\cA\setminus S)$.
\end{proof}

\begin{remark}
We recover Theorem~\ref{thm:BG} for two arms by taking a singleton set $S$ that consists of the second-best arm. More generally, we obtain a non-trivial bound for any subset $S$ which is ``less promising" than $\cA\setminus S$ according to the prior, in the precise sense given by \refeq{eq:thm:K-BG}. Note that when $S$ gets smaller, the right-hand side of \eqref{eq:thm:K-BG} leverages this via both $\mu^*(\cA\setminus S)$ and $-\mu^*(S)$.
\end{remark}

Theorem~\ref{thm:K-BG} implies $\Omega(T)$ Bayesian regret, like in the case of $K=2$ arms. The cleanest result parallels Corollary~\ref{cor:BG}, focusing on priors that admit a probability density function that is bounded away from $0$. In what follows, let $\cM = [0,1]^{\cA}$ be the set of all possible reward functions $\cA\to[0,1]$. 

\begin{corollary}\label{cor:K-BG-clean}
Consider the setting of Theorem~\ref{thm:K-BG} with finitely many arms. Suppose the prior $\cP$ has a probability density function (p.d.f.) over $\cM$ which is uniformly lower-bounded by some $\minPDF>0$, and it is not the case that all arms have the same prior mean rewards $\EE_{\cP}[\mu(a)]$. Then
\refeq{eq:cor:BG} holds.
\end{corollary}

This follows from a more explicit result stated below. Apart from a version with a p.d.f., we provide a similar result under a finite-support assumption and a simpler result under an independence assumption, akin to Corollary~\ref{cor:BG-general}. All three versions 
are stated under a common framing.

\begin{corollary}\label{cor:K-BG-general}
Consider the setting of Theorem~\ref{thm:K-BG}. Let $S\subset \cA$ be any subset of arms for which Theorem~\ref{thm:K-BG} gives a non-trivial guarantee
    $\pfail(S) := \EXP\sbr{\mu^*(\cA\setminus S)-\mu^*(S)} >0$,
and moreover
    $\Pr{\mu^*(\cA\setminus S)=1} < \pfail(S)/2$.
Fix any $\alpha>0$ such that
    $\Pr{\mu^*(\cA\setminus S)\geq 1-2\,\alpha} \leq \pfail(S)/2$.
Then Bayesian regret is at least
\begin{align*}
    \EXP\sbr{\regret(T)} \geq T\cdot \rbr{ \nicefrac{\alpha}{2}\cdot \pfail(S) \cdot \Lambda_\cP(S)},
\end{align*}
where $\Lambda_\cP(S)$ is concerned with event
    $\BadLarge[S]:= \cbr{\mu^*(S) > 1-\alpha}$.
Specifically:
\begin{OneLiners}
\item[(a)]
If $\mu^*(S)$ and $\mu^*(\cA\setminus S)$ are mutually independent,
then
    $\Lambda_{\cP}(S) = \Pr{\BadLarge[S]}$.

\item[(b)] if $\mu^*(\cA\setminus S)$ has a finite support set $M$, then
\begin{align*}
    \Lambda_{\cP}(S) = \min_{\nu\in M}
        \Pr{\BadLarge[S] \mid \mu^*(\cA\setminus S) = \nu}.
\end{align*}

\item[(c)] Suppose there are finitely many arms, and the prior $\cP$ has a p.d.f. which is uniformly lower-bounded by some $\minPDF>0$. Then
\begin{align*}
    \Lambda_{\cP}(S) = \inf_{\nu\in [0,1]}
        \Pr{\BadLarge[S] \mid \mu^*(\cA\setminus S) = \nu},
\end{align*}
where the conditional probability is defined via the joint density of
$\mu^*(S)$ and $\mu^*(\cA\setminus S)$.
\end{OneLiners}
\end{corollary}

\begin{proof}[Proof Sketches]
Corollary~\ref{cor:K-BG-general} follows from the proofs of Corollaries~\ref{cor:BG} and~\ref{cor:BG-general} -- which essentially carry over word-by-word if one replaces ``arm 2" with subset $S$, and ``arm 1" with subset $\cA\setminus S$. In particular, one replaces $\mu_2$ with $\mu^*(S)$, and $\mu_1$ with $\mu^*(\cA\setminus S)$.

In Corollary~\ref{cor:K-BG-general}(c), the existence of the joint density of $\mu^*(S)$ and $\mu^*(\cA\setminus S)$ follows by standard arguments, see Appendix~\ref{app:joint_density} for completeness.

Corollary~\ref{cor:K-BG-clean} follows from Corollary~\ref{cor:K-BG-clean}(c) by letting $S$ be an arbitrary subset of arms not containing the arm(s) with the largest prior mean reward, so that $\pfail(S)>0$. Since the p.d.f. for $\cP$ exists and is bounded away from $0$, it follows that a suitable $\alpha$ exists and $\Lambda_{\cP}(S)>0$.
\end{proof}

Like the respective corollaries for two arms, these linear-regret results are very general in the abstract formulation of Corollary~\ref{cor:K-BG-clean}, but the ``failure strength" is limited for some correlated priors due the minimum/infumum in the definition of $\Lambda_{\cP}$. On the other hand, the subset $S$ can be chosen arbitrarily so as to increase the failure strength.

Part (a) avoids the minimum/infimum via the independence assumption. Note that this assumption is only on $\mu^*(S)$ and $\mu^*(\cA\setminus S)$, rather than on individual arms.





\section{Numerical examples}
\label{sec:expts}

Let us provide some simple numerical examples to illustrate our main theoretical results. We focus on two arms and investigate the empirical probability of a learning failure.

Our experimental setup is as follows. For a particular algorithm / behavioural type, consider the event $F_t$ that the bad arm is chosen in all rounds between $t$ and the time horizon $T$. We are interested in $\Pr{F_t}$ for all $t$. We re-run the simulation $1000$ times, and plot the fraction of runs for which $F_t$ happens, as a curve over time $t$, henceforth called the \emph{fail-curve}.

We focus on the fundamental regime when agents are homogeneously all $\eta$-optimistic (resp., all $\eta$-pessimistic) for some fixed $\eta\geq 0$. We plot the fail-curves for several representative values of $\eta$, ranging from LCB to greedy to UCB. We consider mean rewards of the form $0.5 \pm \epsilon$, where $\epsilon$ specifies the problem instance and controls the ``gap" between the two arms.

The results are summarized in Figure~\ref{fig:expts} on page \pageref{fig:expts}. For time horizon $T = 1000$, we  consider $\epsilon=0.05$ (``large gap", top of the figure) and $\epsilon=0.01$ (``small gap", middle).  We find significant failures which, as one would expect, get worse as $\eta$ decreases (treating LCBs as negative $\eta$).

We also investigate UCBs with larger $\eta$, and find similar failures, albeit with  smaller probabilities. We increase the time horizon to $T=10,000$ to make the failures more apparent.%
\footnote{The smaller failure probabilities do not appear to be an artifact of the stringent definition of a failure. Indeed, we checked that relaxing the definition of $F_t$ to allow for a few samples of the good arm would not increase the observed failure probabilities by much.}

\begin{figure}[p]
 \centering
    \includegraphics[width=0.85\textwidth]{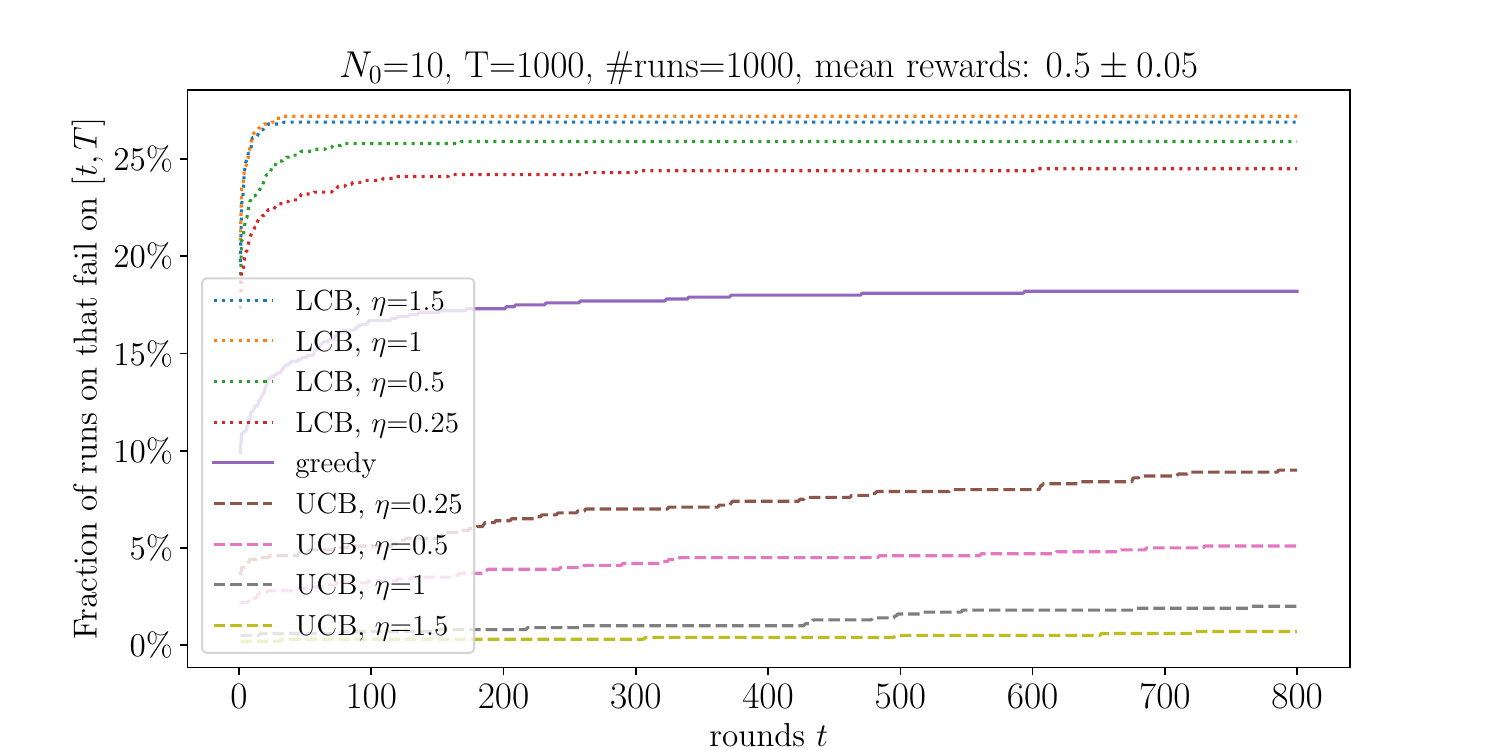}
    \includegraphics[width=0.85\textwidth]{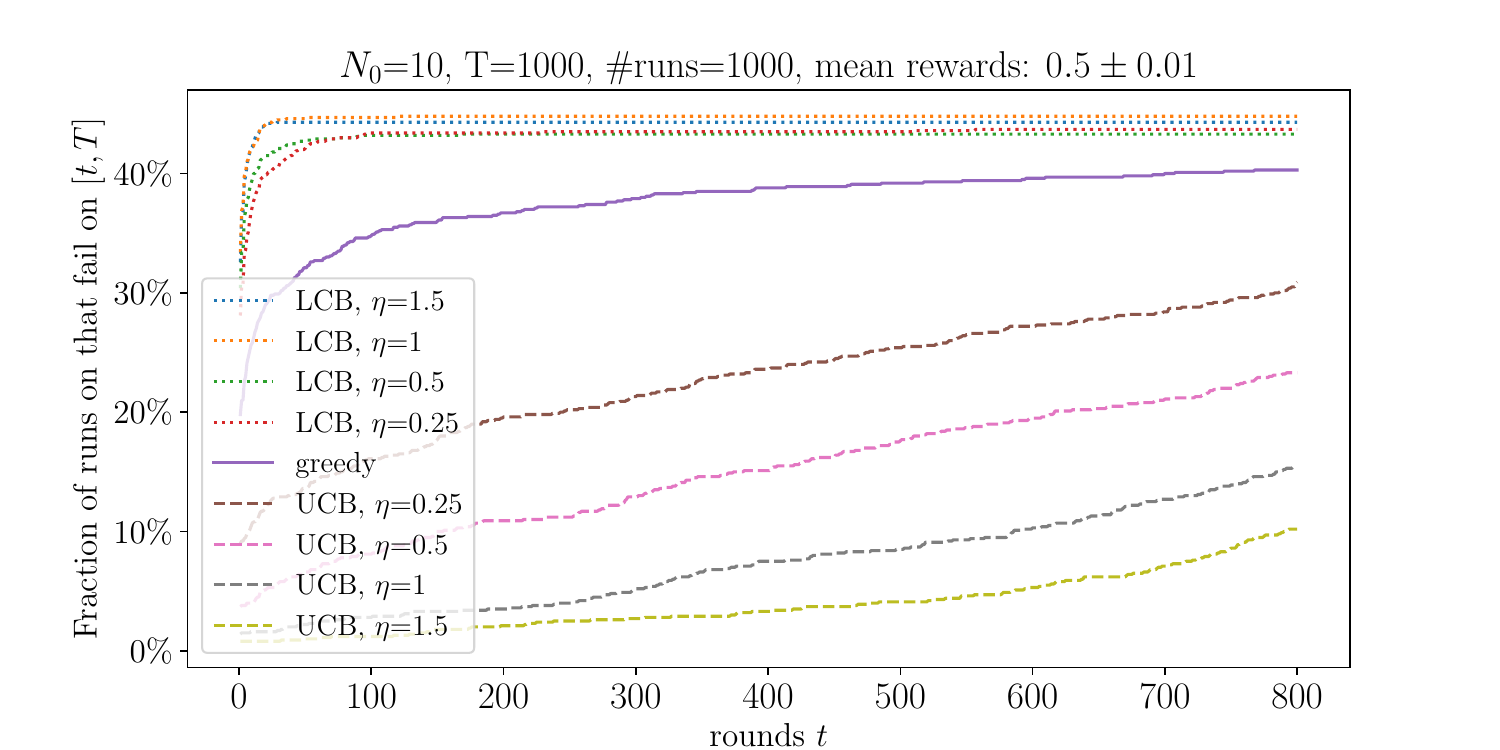}
    \includegraphics[width=0.85\textwidth]{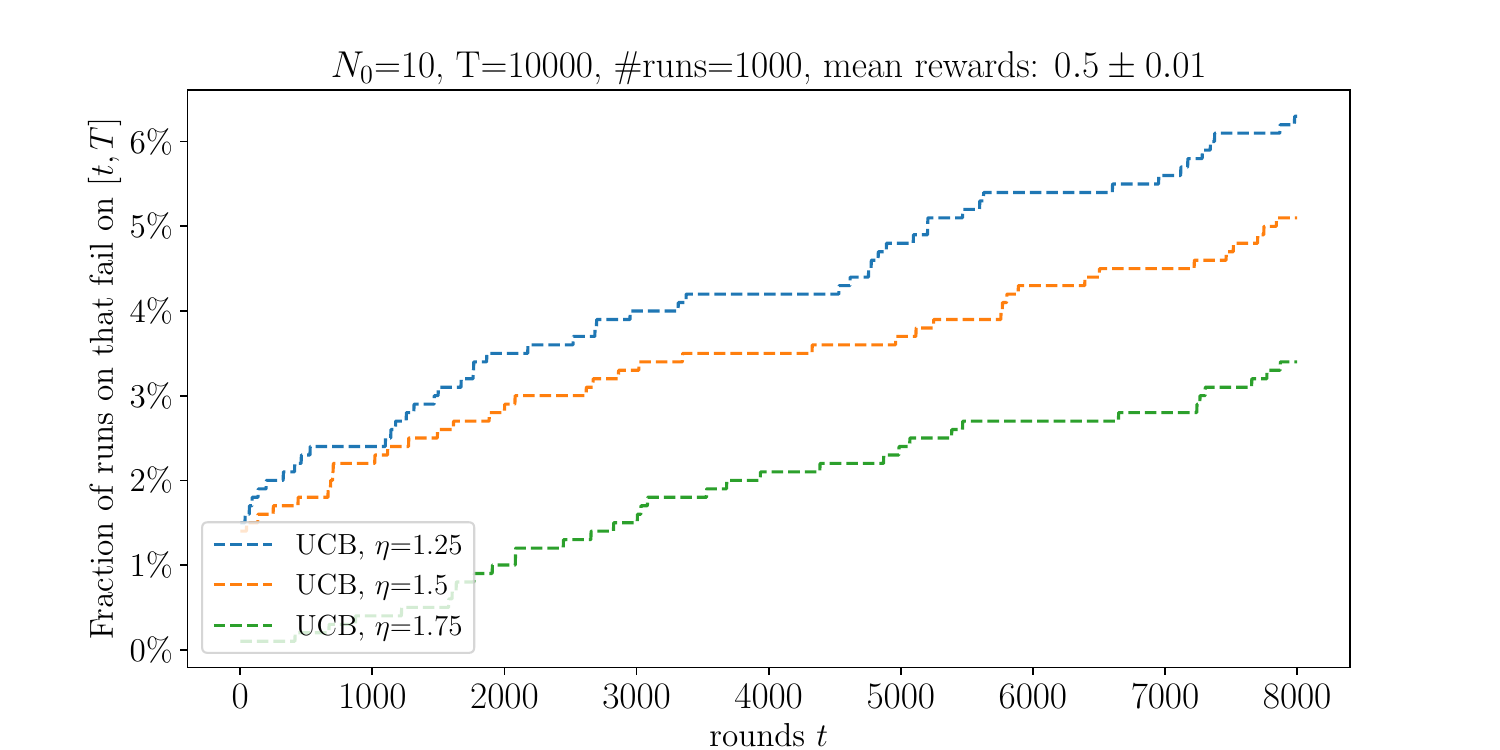}
  \caption{Fail-curves}
  \label{fig:expts}
\end{figure}

\newpage
\section{Conclusions and open questions}
\label{sec:conclusions}

We examine the dynamics of social learning in a multi-armed bandit scenario, where agents sequentially choose arms and receive rewards, and observe the full history of previous agents. For a range of agents' myopic behavior, we investigate how they impact exploration, and provide tight upper and lower bounds on the learning failure probabilities and regret rates. As a by-product, we obtain the first general results on the failure of the greedy algorithm in bandits.


With our results as a ``departure point", one could study \BSL in more complex bandit models with some known structure of rewards.%
\footnote{\Eg the literature on multi-armed bandits tends to study linear, convex, Lipschitz and combinatorial structures, see the books \citep{Bubeck-survey12,slivkins-MABbook,LS19bandit-book} for background.}
In particular, the greedy algorithm fails for some structures (\eg our current model) and works well for some others (\eg when all arms have the same rewards), and it is not at all clear what structures would cause failures and/or be amenable to analysis.  Our negative results in Sections~\ref{sec:priors} and~\ref{sec:K-Bayesian} make progress in this direction, as they handle arbitrary ``Bayesian" structures under a full-support assumption. However, these guarantees are restricted to Bayesian bandits (when the mean rewards are drawn according to the agents' prior),  and may be weak or vacuous because of the minimum/infimum in Corollaries~\ref{cor:BG} and~\ref{cor:K-BG-general}.

\xhdr{Follow-up work.}
\cite{Greedy-StructuredMAB-WP25} provide a first general result for \BSL with a known reward structure, focusing on the ``frequentist" greedy algorithm. They characterize whether the greedy algorithm asymptotically ``succeeds" or ``fails" on a given reward structure, in the sense of sublinear vs. linear regret as a function of time. Their characterization is very general: it applies to an arbitrary finite reward structure, and extends to contextual bandits and arbitrary auxiliary feedback. However, their guarantees are quite weak in terms of their dependence on $N_0$ and the reward structure, much like the exponential-in-$N_0$ example from the Introduction.



\newpage
\begin{small}
\bibliographystyle{plainnat}
\bibliography{ref,bib-abbrv,bib-bandits,bib-AGT,bib-slivkins}
\end{small}
\newpage

\appendix
\section{Probability tools}
\label{app:tools}

\subsection{Proof of Lemma~\ref{lm:good_arm_sad}}
\begin{proof}
  We use the following sharp lower bound on the tail probability of binomial distribution.
  \begin{theorem}[Theorem 9 in \cite{zhang2020non}] \label{thm:reverse_hoeffding}
    Let $n \in \N$ be a positive integer and
    let $(X_i)_{i\in [n]}$ be a sequence of i.i.d Bernoulli random variables with parameter $p$.
    For any $\beta > 1$ there exists constants $c_\beta$ and $C_\beta$ that only rely on $\beta$,
    such that for all $x$ satisfying $x \in [0, \frac{np}{\beta}]$ and $x + n(1-p) \ge 1$, we have
    \begin{align*}
      \Pr{\sum_{i=1}^{n} X_i \le np-x} \ge c_\beta e^{-C_\beta nD(p  - \frac{x}{n}|| p)},
    \end{align*}
   where $D(x||y)$ denotes the KL divergence between two Bernoulli random variables with parameters $x$ and $y$.
  \end{theorem}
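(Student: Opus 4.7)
The plan is to establish the reverse Chernoff-type lower bound $\Pr{S_n \leq np - x} \geq c_\beta\, e^{-C_\beta\, n\, D(q \| p)}$, where $S_n = \sum_{i=1}^n X_i$ and $q = p - x/n$, in three steps: a pointwise Stirling-type estimate for the binomial PMF, a geometric-tail-sum argument that recovers the $1/\sqrt{n}$ factor lost in Stirling, and a regime split that absorbs any residual polynomial prefactor into the exponential constant $C_\beta$. \textbf{Step 1 (pointwise estimate).} Let $k = \lfloor np - x\rfloor$ and $q = k/n$. Applying Stirling's formula with explicit two-sided error terms---or, equivalently, a Radon--Nikodym change of measure that tilts Bernoulli$(p)$ to Bernoulli$(q)$---one obtains
\[ \Pr{S_n = k} \;\geq\; \frac{c}{\sqrt{n\,q(1-q)}}\, e^{-n D(q \| p)}. \]
The non-degeneracy condition $x + n(1-p) \geq 1$ in the theorem ensures $k \leq n - 1$, so $\binom{n}{k}$ is nontrivial and the Stirling estimate applies uniformly away from the boundary.

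\textbf{Step 2 (geometric tail sum).} The consecutive ratio $\rho_j := \Pr{S_n = j - 1}/\Pr{S_n = j} = j(1-p)/((n - j + 1)p)$ is monotone increasing in $j$, so all ratios on $[1, k]$ are bounded by $\rho_k$. A direct calculation (using $x \in [0, np/\beta]$, together with $x \geq 1$, the complementary case being handled in Step 3) yields $1 - \rho_k \geq x/(2np)$. Dominating by a geometric series then gives
\[ \Pr{S_n \leq k} \;\geq\; \frac{\Pr{S_n = k}}{1 - \rho_k} \;\geq\; \frac{c'\,\sqrt{n}\,p}{x\,\sqrt{q(1-q)}}\, e^{-n D(q \| p)}. \]

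\textbf{Step 3 (absorbing the prefactor).} Split on the size of $x$. For small deviations $x \leq A\sqrt{n p (1-p)}$, a direct CLT/Berry--Esseen estimate gives $\Pr{S_n \leq np - x} \geq c$, while simultaneously $n D(q \| p) = O(1)$ via the Taylor expansion $D(q \| p) \leq (p - q)^2/(q(1-p))$, so the target holds by choosing $c_\beta$ small enough. For larger $x$, the exponent $n D(q \| p)$ grows at least quadratically in $x/\sqrt{n}$ (and eventually linearly in $n$), so increasing $C_\beta$ from $1$ absorbs the polynomial prefactor $\sqrt{n}/x$: concretely, $\sqrt{n}/x \geq e^{-(C_\beta - 1)\, n D(q \| p)}$ holds for $x \geq \sqrt{n}$ and any sufficiently large $C_\beta$ depending only on $\beta$.

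\textbf{Main obstacle.} The principal technical challenge is making the constants $c_\beta, C_\beta$ uniform across the full range $x \in [0, np/\beta]$, interpolating between a Gaussian-like regime ($n D(q \| p) \propto x^2/n$) and a genuine large-deviations regime ($n D(q \| p) \propto n$). A secondary subtlety is that as $\beta \to 1^+$ the tilted mean $q$ approaches $0$, causing $1/\sqrt{q(1-q)}$ to degrade and forcing $c_\beta \to 0$, consistent with the theorem's parameterization. Finally, the condition $x + n(1-p) \geq 1$ keeps $\binom{n}{k}$ away from the degenerate corner $k = n$, which is essential for the pointwise estimate in Step 1 to be non-vacuous.
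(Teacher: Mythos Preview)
The paper does not prove this theorem. Theorem~\ref{thm:reverse_hoeffding} is quoted verbatim from \cite{zhang2020non} (it is Theorem~9 there) and is invoked purely as a black-box input to the proof of Lemma~\ref{lm:good_arm_sad}; the paper supplies no argument of its own for it. There is therefore no ``paper's own proof'' against which to compare your proposal.

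On the substance of your sketch: the overall architecture (pointwise Stirling-type lower bound on the PMF, then absorbing the residual $1/\sqrt{n}$ into $C_\beta$ via a regime split) is the standard route to such reverse-Chernoff statements and is sound in outline. However, Step~2 as written is backwards. With $\rho_j=\Pr{S_n=j-1}/\Pr{S_n=j}$ increasing in $j$ and $\rho_j\le\rho_k<1$ for $j\le k$, one has
\[
\Pr{S_n=k-i}=\rho_{k}\rho_{k-1}\cdots\rho_{k-i+1}\,\Pr{S_n=k}\;\le\;\rho_k^{\,i}\,\Pr{S_n=k},
\]
so the geometric comparison yields the \emph{upper} bound $\Pr{S_n\le k}\le \Pr{S_n=k}/(1-\rho_k)$, not the lower bound you claim. (A quick sanity check: $n=2$, $p=1/2$, $k=1$ gives $\Pr{S_2\le 1}=3/4$ but $\Pr{S_2=1}/(1-\rho_1)=1$.) The tail sum therefore does not recover the missing $\sqrt{n}$ factor in the direction you need. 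Step~3 is the mechanism that actually carries the weight, but the interpolation there needs more care than the sketch indicates: in the intermediate range where $nD(q\|p)$ is of constant order (equivalently $x\asymp\sqrt{n}$), neither the trivial Stirling bound nor a naive ``absorb $\sqrt{n}$ into $e^{(C_\beta-1)nD}$'' argument works uniformly, and one must lean on a Berry--Esseen/tilted-CLT estimate whose constants genuinely depend on $\beta$ through the lower bound $q\ge p(1-1/\beta)$.
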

  We use the above result with $x= n(p-q)$ and $\beta = \frac{1-c}{1-\frac{9}{8}c}$.
  Note that $\beta > 1$ since $c < \frac{1}{2}$.
  We first verify that $x, \beta$ satisfy the conditions of the lemma.
  The $x + n(1-p) \ge 1$ condition
  holds by the assumption $n\ge 1/c$:
  \begin{align*}
      x + n(1-p) \ge n(1-p) \ge nc \ge 1.
  \end{align*}
  As for the $x\le \frac{np}{\beta}$ condition,
  by definition of $x$,
  \begin{align*}
    \frac{np}{x} = \frac{np}{n(p-q)} &= \frac{p}{p-q}.
  \end{align*}
  Since $p \le 1-c$ and
  $\frac{p}{p-q}$ is decreasing in $p$ for $p\ge q$,
  we can further bound this with
  \begin{align*}
  \frac{p}{p-q} \ge \frac{1-c}{1-c-q}
    \ge \frac{1-c}{1-c-\frac{c}{8}}
    = \beta,
  \end{align*}
  where the second inequality follows from $q \ge c/8$ and $q < p \le 1-c$, together with the fact that
  $\frac{1-c}{1-c-q}$ is decreasing in $q$ for $q < 1-c$.
  We obtain $x \le \frac{np}{\beta}$ by rearranging.

  Invoking Theorem \ref{thm:reverse_hoeffding} with the given values, we obtain
  \begin{align}
    \Pr{\frac{\sum_{i=1}^n X_i}{n} \le q} \ge c_{\beta}e^{-C_{\beta} nD(q || p)} = \Omega(e^{-O(nD(q || p))})
    .
    \label{eq:jan7_1235}
  \end{align}
  Next, we use the following type of reverse Pinsker's inequality 
  to upper bound $D(q || p)$.
  \begin{theorem}[\cite{gotze2019higher}]
    For any two probability measures $P$ and $Q$ on a finite support $X$, if $Q$ is absolutely continuous with respect to $P$,
    then the their KL divergence $D(Q||P)$ is upper bounded by $\frac{2}{\alpha_P}\delta(Q,P)^2$
    where $\alpha_P = \min_{x \in X} P(x)$ and $\delta(Q,P)$ denotes the total variation distance between $P$ and $Q$.
  \end{theorem}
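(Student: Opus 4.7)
My plan is to reduce the reverse Pinsker bound to two simple inequalities and stitch them together. First I would write
\begin{align*}
  D(Q||P) \;=\; \sum_{x \in X} P(x)\,\phi\!\left(\tfrac{Q(x)}{P(x)}\right),
  \qquad \phi(r) \;:=\; r \log r - r + 1,
\end{align*}
where the absolute-continuity hypothesis together with the convention $0 \log 0 = 0$ makes every summand well defined (on atoms with $P(x) = 0$ we also have $Q(x) = 0$, contributing zero). The key pointwise estimate is $\phi(r) \le (r-1)^2$ for all $r \ge 0$, which follows in one line from $\log r \le r - 1$: multiply that inequality through by $r \ge 0$ and rearrange to obtain $r\log r \le r^2 - r$, i.e.\ $\phi(r) \le r^2 - 2r + 1 = (r-1)^2$. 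Substituting back yields
\begin{align*}
  D(Q||P) \;\le\; \sum_{x}\frac{(Q(x)-P(x))^2}{P(x)} \;\le\; \frac{1}{\alpha_P}\sum_{x}(Q(x)-P(x))^2,
\end{align*}
which reduces the theorem to the single bound $\sum_x (Q(x)-P(x))^2 \le 2\,\delta(Q,P)^2$.

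For that final reduction I would set $a_x := Q(x)-P(x)$ and split the support as $A := \{x: a_x \ge 0\}$ and $B := \{x: a_x < 0\}$. Because $\sum_x a_x = 0$, both $\sum_{x \in A} a_x$ and $-\sum_{x \in B} a_x$ equal the total variation distance $\delta(Q,P)$ by its standard variational description. For any $x_0 \in A$, non-negativity of the remaining terms in $\sum_{x \in A} a_x$ forces $a_{x_0} \le \sum_{x \in A} a_x = \delta(Q,P)$, and therefore
\begin{align*}
  \sum_{x \in A} a_x^2 \;\le\; \delta(Q,P) \cdot \sum_{x \in A} a_x \;=\; \delta(Q,P)^2.
\end{align*}
Applying the same argument to $-a_x$ on $B$ gives $\sum_{x \in B} a_x^2 \le \delta(Q,P)^2$, and summing the two halves produces $\sum_x a_x^2 \le 2\,\delta(Q,P)^2$ as required. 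Chaining this with the earlier display proves the theorem. The main obstacle is really this last combinatorial inequality: it looks like a routine Cauchy--Schwarz estimate but in fact relies crucially on the zero-sum constraint $\sum_x a_x = 0$ --- without it one would only obtain a bound linear in $\delta$, which would fall short of the quadratic dependence stated in the theorem.
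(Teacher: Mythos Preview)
Your proof is correct. The paper itself does not prove this theorem: it is quoted as an external result from \cite{gotze2019higher} and invoked as a black box inside the proof of Lemma~\ref{lm:good_arm_sad}, so there is no in-paper argument to compare against. Your route---bound $D(Q\|P)$ by the $\chi^2$-divergence via the pointwise inequality $r\log r - r + 1 \le (r-1)^2$, then control $\sum_x(Q(x)-P(x))^2$ by $2\,\delta(Q,P)^2$ using the zero-sum structure of $Q-P$---is a clean, self-contained derivation that supplies what the paper only cites.
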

  Setting $P=\bernoul(p)$ and $Q = \bernoul(q)$,
  we have
  $\alpha_P = \min(p, 1-p)$, and
  $\delta(Q,P) = p - q$
  Therefore,
  since $\min(p, 1-p) \ge c$ by assumption,
  we conclude
  \begin{math}
    D(q || p) \le O((p-q)^2).
  \end{math}
  Plugging this back in Equation \eqref{eq:jan7_1235} finishes 
  the proof.
\end{proof}

\subsection{Proof of Lemma \ref{lm:bad_arm_happy}}
Our proof will rely on the following
doob-style inequality for (super)martingales.
\begin{lemma}[Ville's Inequality~\cite{ville1939etude}]\label{lm:ville}
  Let $(Z_{n})_{n\ge 0}$ be a positive supermartingale with respect to filtration $(\F_{n})_{n\ge 0}$, i.e. $Z_n \ge \Ex{Z_{n+1} | \F_n}$ for any $n \ge 0$.
  Then the following holds for any $x>0$,
  \begin{align*}
    \Pr{\max_{n \ge 0} Z_n \ge x} \le \Ex{Z_0}/x.
  \end{align*}
\end{lemma}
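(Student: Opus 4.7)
The plan is a textbook stopping-time argument. Define the hitting time $\tau := \inf\{n \ge 0 : Z_n \ge x\}$, with the convention $\inf \emptyset = \infty$, and consider the stopped process $Y_n := Z_{n \wedge \tau}$. The first step is to verify that $(Y_n)_{n \ge 0}$ is itself a nonnegative supermartingale with respect to $(\F_n)$: on the event $\{\tau > n\} \in \F_n$, we have $Y_{n+1} = Z_{n+1}$ and $Y_n = Z_n$, so the supermartingale inequality for $Z$ transfers, while on $\{\tau \le n\}$, $Y_{n+1} = Y_n$ and the inequality is trivial. Iterating yields $\Ex{Y_N} \le \Ex{Y_0} = \Ex{Z_0}$ for every finite $N \ge 0$.

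Next I would prove the finite-horizon version $\Pr{\max_{0\le n \le N} Z_n \ge x} \le \Ex{Z_0}/x$. Observe that by definition of $\tau$, the events $\{\max_{0 \le n \le N} Z_n \ge x\}$ and $\{\tau \le N\}$ coincide, and on this event $Y_N = Z_\tau \ge x$. Using $Y_N \ge 0$ everywhere to drop the indicator on the complementary event, one obtains
\begin{align*}
\Ex{Z_0} \;\ge\; \Ex{Y_N} \;\ge\; \Ex{Y_N \cdot \IND\{\tau \le N\}} \;\ge\; x \cdot \Pr{\tau \le N},
\end{align*}
which rearranges to the claimed bound at horizon $N$.

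Finally I would pass to the infinite-horizon supremum via monotone convergence on the probability measure. The events $A_N := \{\max_{0 \le n \le N} Z_n \ge x\}$ are nested, $A_N \subseteq A_{N+1}$, with $\bigcup_{N \ge 0} A_N = \{\max_{n \ge 0} Z_n \ge x\}$, so $\Pr{\max_{n \ge 0} Z_n \ge x} = \lim_{N \to \infty} \Pr{A_N} \le \Ex{Z_0}/x$.

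There is no genuine obstacle: the result is a standard maximal inequality and every step above is routine. The only point to watch is that one must stop at the bounded time $N \wedge \tau$ rather than at $\tau$ itself before taking $N \to \infty$, since $\tau$ may be infinite and $Z_n$ is not assumed to converge; nonnegativity of $(Z_n)$ is exactly what allows the finite-$N$ inequality to be sharp enough that no integrability or uniform-integrability hypothesis is needed when we pass to the limit.
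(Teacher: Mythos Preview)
Your proof is correct; it is the standard stopping-time argument for Ville's maximal inequality. Note, however, that the paper does not actually prove this lemma: it is quoted from the literature (with a citation to Ville) and used as a black box in the proof of Lemma~\ref{lm:bad_arm_happy}. So there is no ``paper's own proof'' to compare against, and your argument simply supplies the omitted textbook verification.
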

In order to use this result, we will define the martingale
$Z_n := u^{\sum_{i=1}^{n}(X_{i+1} - q)}$ for a suitable choice of $u$
as specified by the following lemma.
\begin{lemma}\label{lm:u_def}
  Let $c$ be an absolute constant.
  For any $p \in (c, 1-c)$ and $q \in (0, p)$, there exists
  a value of $u \in (0, 1)$ such that
  \begin{align}
    (p \cdot u^{1-q} + (1-p) \cdot u^{-q})=1
    \label{eq:u_def}.
  \end{align}
  In addition, $u$ satisfies
  \begin{align}
    p(1-u^{1-q}) \ge \Omega(p-q)
    \label{eq:u_prop}.
  \end{align}
\end{lemma}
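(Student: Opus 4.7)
The plan is to reparameterize equation~\eqref{eq:u_def} by writing $u = e^{-\lambda}$ with $\lambda > 0$ and studying the function
$$\phi(\lambda) := \log\bigl(p\, e^{-\lambda(1-q)} + (1-p)\, e^{\lambda q}\bigr),$$
which is the log-moment generating function of $-Y$, where $Y := X - q$ and $X \sim \bernoul(p)$. Equation \eqref{eq:u_def} is then equivalent to $\phi(\lambda) = 0$. Standard log-MGF facts give $\phi(0) = 0$; $\phi'(0) = -\EXP[Y] = -(p-q) < 0$; $\phi$ strictly convex with $\phi''(\lambda) = \mathrm{Var}_\lambda(Y)$ under the $\lambda$-tilted measure; and $\phi(\lambda) \to +\infty$ as $\lambda \to \infty$, since the $(1-p) e^{\lambda q}$ term dominates (using $q > 0$ and $p < 1$). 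Together, these imply existence (and uniqueness) of some $\lambda^* > 0$ with $\phi(\lambda^*) = 0$, so $u := e^{-\lambda^*} \in (0,1)$ satisfies \eqref{eq:u_def}.

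For \eqref{eq:u_prop}, my plan is first to show $\lambda^* \geq 8(p-q)$ via a uniform second-derivative bound, and then transfer this to a lower bound on $p(1 - u^{1-q})$. The key input is $\phi''(\lambda) \leq \tfrac14$ for every $\lambda$: since $Y$ is supported in the length-$1$ set $\{-q,\,1-q\}$, Popoviciu's inequality bounds $\mathrm{Var}_\lambda(Y) \leq \tfrac14$ under any reweighting. Taylor's theorem with integral remainder then yields
$$0 \;=\; \phi(\lambda^*) \;\leq\; \phi(0) + \phi'(0)\,\lambda^* + \tfrac{1}{8}(\lambda^*)^2 \;=\; -(p-q)\,\lambda^* + \tfrac{(\lambda^*)^2}{8},$$
so $\lambda^* \geq 8(p-q)$. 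Applying the elementary bound $1 - e^{-x} \geq \min(x/2,\,1/2)$ together with $p - q \leq 1$ gives $1 - u = 1 - e^{-\lambda^*} \geq (p-q)/2$.

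To finish, I would use the concavity of $s \mapsto s^{1-q}$ on $[0,1]$ (its second derivative $-q(1-q) s^{-q-1}$ is nonpositive) to obtain $u^{1-q} \leq 1 - (1-q)(1-u)$, hence $1 - u^{1-q} \geq (1-q)(1-u)$. The hypothesis $p \in (c, 1-c)$ together with $q < p$ implies $p \geq c$ and $1-q > 1-p \geq c$, so $p(1-q) \geq c^2$. Combining these estimates,
$$p(1 - u^{1-q}) \;\geq\; p(1-q)(1-u) \;\geq\; c^2 \cdot \tfrac{p-q}{2} \;=\; \Omega(p-q),$$
as required. The main technical step is the variance bound $\phi'' \leq \tfrac14$ and the Taylor argument that converts it into $\lambda^* \geq 8(p-q)$; once that is in place, existence is immediate from convexity and the final inequality reduces to routine calculus.
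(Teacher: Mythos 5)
Your proof is correct, but the quantitative half takes a genuinely different route from the paper's. For existence, both arguments are essentially the same intermediate-value argument (the paper works with $f(x)=p\,x^{1-q}+(1-p)\,x^{-q}$ directly, showing $f(1)=1$, $f'(1)=p-q>0$ and $f(x)\to\infty$ as $x\to 0$; your $\phi(\lambda)=\log f(e^{-\lambda})$ is just the logarithmic reparametrization). For \eqref{eq:u_prop}, however, the paper proceeds by explicit algebra: it identifies the stationary point $x_0=\frac{(1-p)q}{p(1-q)}$ of $f$, argues $u\le x_0$, and then proves the claim $x_0^{1-q}\le 1-(p-q)$ by a careful derivative comparison in $\eps=p-q$ of the two sides of a logarithmic inequality. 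You instead recognize $\phi$ as the cumulant generating function of $-(X-q)$ with $X\sim\bernoul(p)$, bound $\phi''\le\tfrac14$ uniformly via Popoviciu's inequality (the increments live in a set of diameter $1$), and extract $\lambda^*\ge 8(p-q)$ from Taylor's theorem, after which $1-u\ge (p-q)/2$ and the concavity bound $1-u^{1-q}\ge(1-q)(1-u)$ finish the job. Your route is more modular and would extend verbatim to arbitrary bounded (non-Bernoulli) increments, at the price of a slightly worse constant ($c^2/2$ versus the paper's $p(p-q)\ge c(p-q)$), which is immaterial for the $\Omega_c(p-q)$ conclusion. All the individual steps check out: $\phi'(0)=-(p-q)$, the elementary bound $1-e^{-x}\ge\min(x/2,\,1/2)$ combined with $p-q\le 1$, and $p(1-q)\ge c^2$ from $q<p<1-c$.
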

\begin{proof}
  To see why such a $u$ exists, define $f(x) = (p \cdot x^{1-q} + (1-p) \cdot x^{-q})$.
  It is clear that $f(1)=1$ and $\lim_{x\to0}f(x) = \infty$ as
  $\lim_{x \to 0} (1-p)x^{-q} = \infty$.
  Furthermore,
  \begin{equation*}
    f'(x) = p \cdot (1-q) \cdot x^{-q} + (1-p) \cdot (-q) \cdot x^{-q-1}
    ,
  \end{equation*}
  which implies
  \begin{equation*}
    f'(1) = p(1-q)  -(1-p)q
    =
    p - q
    > 0
    .
  \end{equation*}
  Therefore, $f(x)$ is decreasing at $x=1$. Since $\lim_{x\to 0} f(x) > f(1)$, this implies that $f(u) = f(1)$
  for some $u \in (0, 1)$, proving Equation \eqref{eq:u_def}.

  We now prove Equation \eqref{eq:u_prop},
  define $x_0$ as $x_0=\frac{(1-p)q}{p(1-q)}$.
  Note that $x_0< 1$ since $p > q$.
  We claim that $u\le x_0$.
  To see why, we first note that $f'(x)$ can be rewritten as
  \begin{align*}
    x^{-q - 1}\left(
      xp(1-q) -(1-p)q
    \right)
    .
  \end{align*}
  It is clear that $f'(x_0) = 0$. Since
  $xp(1-q) - (1-p)q$ is increasing in $x$,
  this further implies
  that
  $f'(x) > 0$ for $x > x_0$.
  Now, if $u > x_0$, then since $f'(x) > 0$ for $x > x_0$,
  we would conclude that $f(u) < f(1)$, which is not possible since
  $f(u)=f(1)=1$. Therefore,
  $u\le x_0$ as claimed.

  We now claim that $x_0^{1-q} \le 1- p + q$.
  This would finish the proof since, together with $u\le x_0$, this would imply
  \begin{align*}
    p(1-u^{1-q}) &\ge p(1-x_0^{1-q})
    \ge p(p - q) = \Omega(p-q),
  \end{align*}
  where for the last equation we have used the assumption $p  \in (c,1-c)$.

  To prove the claim, define $\eps := p-q$.
  We need to show that
  $x_0^{1-q} \le 1-\eps$, or equivalently
  $\ln(x_0) \le \frac{\ln(1-\eps)}{1-q}$.
  By defintion of $x_0$, this is equivalent to
  \begin{align}
    \ln \left(\frac{(1-p)(p-\eps)}{(1-p+\eps)p}\right) \le \frac{1}{1-p+\eps}\ln (1-\eps).
    \label{eq:jan8_1118}
  \end{align}
  Fix $p$ and consider both hand sides as a function of $\eps$.
  Putting $\eps = 0$, both hands side coincide as they both equal 0.
  To prove
  \refeq{eq:jan8_1118}, it suffices to show that as we increase
  $\eps$, the left hand side decreases faster than the right hand side.
  Equivalently, we need to show that the derivative of the LHS
  with respect to $\eps$ is larger than the derivative of the RHS with respect to $\eps$
  for $\eps \le [0, p]$.
  Taking the derivative with respect to $\eps$ on LHS, we obtain
  \begin{align*}
    \frac{d}{d\eps}\left(\ln(1-p) + \ln(p-\eps)  - \ln(1-p+\eps) - \ln (p)\right) = -\frac{1}{p-\eps} -  \frac{1}{1-p+\eps}.
  \end{align*}
  Similarly taking the derivative on RHS we obtain
  \begin{align*}
    \frac{d}{d\eps}\left( \frac{\ln(1-\eps)}{1-p+\eps} \right)  = -\frac{1}{(1-\eps)(1-p+\eps)} -\frac{\ln(1-\eps)}{(1-p+\eps)^2}.
  \end{align*}
  We therefore need to show that
  \begin{align}
    \frac{-1}{1-p+\eps} + \frac{-1}{p-\eps} \le \frac{-1}{(1-p+\eps)(1-\eps)} + \frac{-\ln(1-\eps)}{(1-p+\eps)^2}
    .
    \label{eq:jan8_1119}
  \end{align}
  We note however that
  \begin{align*}
    \frac{-1}{1-p+\eps} + \frac{-1}{p-\eps}
    = \frac{\eps - p  - 1 +  p- \eps}{(1-p+\eps)(1-\eps)}
    = \frac{-1}{(1-p+\eps)(1-\eps)}
    .
  \end{align*}
  Therefore Equation \eqref{eq:jan8_1119} is equivalent to
  \begin{align*}
    \frac{-\ln(1-\eps)}{(1-p+\eps)^2} \ge 0,
  \end{align*}
  which is true since $\eps \in [0, p]$.
  This proves the claim $x_0^{1-q} \le 1-\eps$, finishing the proof.
\end{proof}

We now prove Lemma \ref{lm:bad_arm_happy} using Lemma \ref{lm:ville} and \ref{lm:u_def}.
\begin{proof}[Proof of Lemma \ref{lm:bad_arm_happy}]
  Define the random variable $Y_i$ as $Y_i = X_{i + 1} - q$.
  Note that $Y_i$ takes value $1-q$ with probability $p$ and takes $-q$ with probability $1-p$.
  Set $u$ to be the value specified in Lemma \ref{lm:u_def}.
  For $n\ge 0$, define $Z_n := u^{\sum_{i=1}^n Y_i}$. We first observe that $Z_n$ is a martingale with respect to $Y_1,\ldots, Y_n$ as
  \begin{align*}
    \Ex{Z_{n+1} | Y_{1}, \dots Y_n}
    =
    \Ex{u^{\sum_{i=1}^{n+1}Y_i} | Y_{1}, \dots Y_n}
    &=
    u^{\sum_{i=1}^{n} Y_i} \cdot (p \cdot u^{1-q} + (1-p) \cdot u^{-q})
    \\&= u^{\sum_{i=1}^{n} Y_i}
    =
    Z_{n}.
  \end{align*}
  Since $0<u<1$, this further implies
  \begin{align*}
    \Pr{\forall n \ge 0: \sum_{i=1}^n Y_i \ge q-1 }
    &=1-\Pr{\exists n \ge 0: \sum_{i=1}^n Y_i < q-1}
    \\ &= 1-\Pr{\max_{j \in [n]}\{u^{\sum_{i=1}^j Y_i}  \} \ge u^{q-1}}
    \\ &\ge 1-\frac{\Ex{Z_1}}{u^{q-1}}
    \\ &= 1-u^{1-q}
    ,
  \end{align*}
  where the first inequality follows from Lemma \ref{lm:ville} and the final equality
  follows from $\Ex{Z_1} = \Ex{Z_0} = \Ex{u^{0}} = 1$.

  Since $Y_i$ is a function of $X_{s+1}$, we independently have $X_1=1$
  with probability $p$. Therefore,
  with probability $p(1-u^{1-q})$.
  \begin{align*}
    X_i = 1 \text{ and }
    \forall n \ge 1: \sum_{i=2}^n (X_i-q) \ge q-1,
  \end{align*}
  which further implies $\sum_{i=1}^n(X_i-q) \ge 0$.
  Therefore,
  \begin{align*}
    \Pr{\forall n \ge 1 : \frac{\sum_{i=1}^n X_i}{n} \ge q} \ge p(1-u^{1-q})
    &\ge \Omega(p-q),
  \end{align*}
  where the inequality follows from Equation \eqref{eq:u_prop}.
\end{proof}

\subsection{Union of independent events}
\label{app:ind-one}


The following result/proof is standard and provided for the sake of completeness.

\begin{lemma}\label{lm:ind-one}
  Let $A_1, \dots, A_n$ be independent events, each occurring with probability $\ge p$.
    The probability that at least one of these events occurs is lower bounded by
    \begin{math}
        \min\cbr{\frac{1}{2}, \frac{np}{2}}
        .
    \end{math}
\end{lemma}
\begin{proof}
    Since the events are independent, the probability of at least one of them occurring is lower bounded by
    \begin{math}
        \Pr{\cup_{i=1}^{n} A_i}
        =
        1 - (1-p)^{n}
        \ge 1-e^{-np}
        ,
    \end{math}
    where we have used the ineqaulity $1+x\le e^{x}$.
    If $e^{-np} \le 1/2$, then the claim follows. Otherwise,
    we have $np < 1$.
    Using the ineqaulity $e^{-x} \le 1-x/2$ which is valid
    for $x \le 1$, we obtain
    $\Pr{\cup_{i=1}^{n} A_i} \ge \frac{np}{2}$, finishing the proof.
\end{proof}

\subsection{Joint density function in Corollary~\ref{cor:K-BG-general}(c)}
\label{app:joint_density}

Recall that Corollary~\ref{cor:K-BG-general}(c) requires the existence of the joint density of $\mu^*(S)$ and $\mu^*(\cA\setminus S)$. While this follows from standard arguments, we provide the proof for completeness.

\begin{lemma}\label{lm:joint_density}
Let $\mu = (\mu_1, \dots, \mu_K)\in [0,1]^K$ be a random vector with a joint p.d.f. (probability density function) $f$. Fix subset $S \subseteq [K]$, let $\mu^*(S):= \max_{i \in S} \mu_i$.
Then random variables $X = \mu^*(S)$ and $Y=\mu^*([K] \setminus S)$ have a joint p.d.f.
\end{lemma}

\newcommand{\setIJ}{\mathtt{IJ}}

\begin{proof}

Fix indices $i \in S$, $j \in [K] \setminus S$. Given a vector $\nu\in \R^K$, let
    $\nu_{-i, -j} \in \R^{K - 2}$
denote the vector obtained by removing coordinates $i$ and $j$ from $\nu$. For each $x,y\in [0,1]$, define the set
  \begin{align*}
    B_{i, j}(x, y) := \cbr{
      \nu_{-i, -j}:
      \nu\in[0,1]^K,\;
      \max_{i' \in S} \nu_{i'} \le x, \;
      \max_{j' \in [K] \setminus S} \nu_{j'} \le y
    } \subset [0,1]^{K - 2}.
  \end{align*}

For shorthand, write
    $\nu = \rbr{\nu_{-i, -j}; \nu_i,\nu_j}$
and
    $\setIJ := S\times \rbr{[K]\setminus S}$
in what follows.

We prove $f_{X,Y}$ defined below is the p.d.f. for $(X, Y)$:
  \begin{align*}
      f_{X, Y}(x, y) :=
      \sum_{(i,j)\in\setIJ}\quad
      \int_{\nu_{-i, -j} \in B_{i, j}(x, y)} f{\rbr{\nu_{-i, -j}; x,y}} \, \dd \nu_{-i, -j}.
  \end{align*}


More formally, we need to prove that for any $x',y'\in [0,1]$,
\begin{align}\label{eq:pf:lm:joint_density}
  \Pr{X \le x', Y \le y'}
        = \int_{x \le x'} \int_{y \le y'} f_{X, Y}(x, y) \, \dd y \, \dd x.
\end{align}

Fix $x',y'\in [0,1]$. We are interested in the event
    $A := \cbr{X \le x', Y \le y'}$.
Define event
  \begin{align*}
    A_{i, j}: = A \cap
    \cbr{
      \mu_i = \mu^*(S), \mu_j = \mu^*([K] \setminus S)
    }, \quad (i,j)\in\setIJ.
  \end{align*}
Note that $A = \cup_{(i,j)\in\setIJ} A_{i, j}$. Moreover, the intersection
  $A_{i, j} \cap A_{i', j'}$
has zero Borel measure whenever $(i, j) \ne (i', j')$. It follows that
  \begin{align*}
    \Pr{A}
    &=
    \sum_{(i,j)\in\setIJ} \Pr{A_{i, j}}
    \\&=
    \sum_{(i,j)\in\setIJ}
    \int_{\nu_i \le x'} \int_{\nu_j \le y'}
    \int_{\nu_{-i, -j} \in B_{i, j}(\nu_i, \nu_j)} f(\nu) \dd \nu
    \\&=
    \sum_{(i,j)\in\setIJ}
    \int_{x \le x'} \int_{y \le y'}
    \int_{\nu_{-i, -j} \in B_{i, j}(x, y)} f(\nu_{-i, -j}; x, y) \dd\nu_{-i, -j} \, \dd y\, \dd x
    \\&=
    \int_{x \le x'} \int_{y \le y'}
    \rbr{
    \sum_{(i,j)\in\setIJ}
    \int_{\nu_{-i, -j} \in B_{i, j}(x, y)} f(\nu_{-i, -j}; x, y) \dd\nu_{-i, -j}
    }\, \dd y\, \dd x
    .\
  \end{align*}
\refeq{eq:pf:lm:joint_density} follows by definition of $f_{X, Y}$, completing the proof.
\end{proof}

\section{Proof of Lemma \ref{lm:failure_total_simplify}}
  We assume without loss of generality that $\eta > 2$.
  If $\eta \le 2$,
  the Lemma's statement can be made vacuous  using large enough constants
  in $O$.
  In addition, for mathematical convenience, we will assume that the tape for each arm is infinite,
  even though the entries after $T$ will never actually be seen by any of the agents.

  For each arm $a$, we first separately consider each interval of the form $[n, 2n]$ and
  bound the probability that $\ucbTape_{a, i}$ deviates too much from 
  $\mu_{a}$ for $i\in [n, 2n]$.
  While this can be done crudely by applying a union bound over all $i$,
  we use the following maximal inequality.
  \begin{lemma}[Eq. (2.17) in \cite{hoeffding1963probability}]
    Given
    a sequence of i.i.d. random variables $(X_i)_{i \in [n]}$ in $[0, 1]$ such that
    $\Ex{X_i} = \mu$, the inequality states that
    for any $x > 0$,
    \begin{align*}
      \Pr{\exists i \in [n]: \abs{\sum_{j=1}^{i} \rbr{X_{j} - \mu}} > x} \le 2e^{-\frac{2x^2}{n}}.
    \end{align*}
  \end{lemma}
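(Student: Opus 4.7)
The plan is to prove the Hoeffding maximal inequality using Doob's submartingale inequality applied to the exponential transform of the partial-sum martingale, followed by Hoeffding's moment generating function lemma, and finally optimizing over the exponential parameter. Concretely, let $S_i := \sum_{j=1}^{i} (X_j - \mu)$ for $i\in [n]$, with $S_0 = 0$. Since the $X_j$ are i.i.d.\ with mean $\mu$, the sequence $(S_i)_{i\ge 0}$ is a martingale with respect to the natural filtration $\cF_i := \sigma(X_1\LDOTS X_i)$.

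First I would handle the one-sided event $\cbr{\exists i\in[n]: S_i > x}$. For any $\lambda > 0$, the process $Z_i^{(\lambda)} := e^{\lambda S_i}$ is a nonnegative submartingale, because $e^{\lambda \cdot}$ is convex and $S_i$ is a martingale (Jensen's inequality). Doob's submartingale inequality then gives
\begin{align*}
 \Pr{\max_{i\in[n]} S_i > x}
    = \Pr{\max_{i\in[n]} Z_i^{(\lambda)} > e^{\lambda x}}
    \leq \frac{\Ex{Z_n^{(\lambda)}}}{e^{\lambda x}}
    = e^{-\lambda x}\cdot \Ex{e^{\lambda S_n}}.
\end{align*}
Next, because $X_j - \mu$ is a bounded, zero-mean random variable supported in an interval of length at most $1$, Hoeffding's lemma yields $\Ex{e^{\lambda (X_j - \mu)}} \leq e^{\lambda^2/8}$. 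Using independence, $\Ex{e^{\lambda S_n}} \leq e^{n\lambda^2/8}$, so $\Pr{\max_{i\in[n]} S_i > x} \leq \exp(-\lambda x + n\lambda^2/8)$. Choosing $\lambda := 4x/n$ to minimize the exponent produces the bound $e^{-2x^2/n}$.

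For the other tail, I would apply the identical argument to the martingale $(-S_i)$, which is also a martingale with the same increment bounds, obtaining $\Pr{\max_{i\in[n]}(-S_i) > x} \leq e^{-2x^2/n}$. A union bound over the two one-sided events yields
\begin{align*}
\Pr{\exists i\in[n]:\; |S_i| > x} \leq 2\,e^{-2x^2/n},
\end{align*}
which is exactly the claimed inequality.

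The only mildly delicate step is verifying that Doob's inequality applies in the form used (nonnegative submartingale with a strict inequality $> e^{\lambda x}$), but this is standard. Hoeffding's lemma for $[0,1]$-valued random variables is also standard. There is no real obstacle; the main ``work'' is the optimization $\lambda = 4x/n$, which is deterministic calculus. Everything else reduces to bookkeeping, so the proof should fit in a few lines.
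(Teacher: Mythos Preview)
Your proof is correct. Note, however, that the paper does not give its own proof of this lemma: it simply quotes the inequality as Eq.~(2.17) of \cite{hoeffding1963probability} and uses it as a black box. So there is no ``paper's proof'' to compare against; you have supplied the standard derivation (Doob's submartingale inequality applied to $e^{\lambda S_i}$, Hoeffding's lemma on the increments, optimization over $\lambda$, and a union bound for the two tails), which is exactly how Hoeffding establishes it in the cited reference.
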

  Focusing on some interval of the form $[n, 2n]$ for $n\in \N$,
  and applying this inequality to the reward tape of arm $a$,
  we conclude that
  \begin{align}
    \Pr{\exists i \in [n, 2n]: \abs{\muTape_{a, i} - \mu_{a}} \ge x}
    \le O(e^{-\Omega(nx^2)}).
    \label{eq:jan26_1934}
  \end{align}

  Define $f := \ceil{64\eta/\Delta^2}$. We note that
  $f = \Theta(\eta/\Delta^2)$ given the assumption $\eta > 2$.
  In order to bound $\Pr{\clean_2^{\eta}}$, we will
  apply this inequality to each interval
  $[n, 2n]$ for $n\ge f$,
  and take a union bound.
  Formally,
  \begin{align*}
    1-\Pr{\clean_2^{\eta}}
    &\le
    \Pr{\exists i\ge f: \muTape_{2, i} > \mu_2 + \Delta/8} 
    &\EqComment{Since $\sqrt{\eta/i}\le \Delta/8$ for $i\ge f$}
    \\&\le
    \sum_{r = 0}^{\infty}
    \Pr{\exists i \in [f2^{r}, f2^{r+1}] : \muTape_{2, i} > \mu_2 + \Delta/8} 
    &\EqComment{Union bound}
    \\&\le
    O\rbr{
    \sum_{r = 0}^{\infty}
    e^{-\Omega(\eta2^{r})}
  }
    &\EqComment{By \refeq{eq:jan26_1934}}
    \\&\le
    O\rbr{
      \sum_{r = 0}^{\infty}
      e^{-\Omega(\eta(r + 1))}
    }
    &\EqComment{Since $2^r \ge r+1$ for $r\in \N$}
    \\&=
    O(\frac{1}{e^{\Omega(\eta)} -1})
    &\EqComment{Sum of geometric series}
    \\&\le
    O(e^{-\Omega(\eta)})
    &\EqComment{By $\eta > 2$}
  \end{align*}

  In order to bound
  $\Pr{\clean_1^{\eta}}$,
  we separately handle the intervals
  $n< f$ and $n \ge f$.
  For $n \ge f$, repeating the same argument
  as above for arm 1 implies
  \begin{align*}
    \Pr{\exists i\ge f: \muTape_{1, i} < \mu_1 - \Delta/8} 
    \le 
    O(e^{-\Omega(\eta)}).
  \end{align*}
  For $n< f $, we use
  a modified argument that utilizes the extra $\sqrt{\eta/i}$ term in $\ucbTape_{1, i}$.
  Instead of bounding the probability $\muTape_{1, i}$ having deviation $\Delta/8$,
  we bound the probability that it deviates by $\sqrt{\eta/i}$.
  This results in a marked improvement because
  $\sqrt{\eta/i}$ increases as we decrease $i$.
  Formally,
  \begin{align*}
    &\Pr{\exists i \in [1, f]: \muTape_{1, i} < \mu_1 - \sqrt{\eta/i}} 
    \\&\le
    \sum_{r = 0}^{\ceil{\log(f)}}
    \Pr{\exists i \in [2^{r}, 2^{r+1}]: \muTape_{1, i} < \mu_1 - \sqrt{\eta/i}} 
    &\EqComment{Union bound}
    \\&\le
    \sum_{r = 0}^{\ceil{\log(f)}}
    \Pr{\exists i \in [2^{r}, 2^{r+1}]: \muTape_{1, i} < \mu_1 - \sqrt{\eta/2^{r + 1}}} 
    &\EqComment{By assumption on $i$}
    \\&\le
    O\rbr{
      \sum_{r = 0}^{\ceil{\log(f)}}
      e^{-\Omega(\eta)}
    }
    &\EqComment{By \refeq{eq:jan26_1934}}
    \\&=
    O(\ceil{\log(f)}e^{-\Omega(\eta)})
    .
  \end{align*}
  Finally, we note that
  since $\eta > 2$,
  \begin{align*}
    \ceil{\log(f)}
    \le O(1 + \log(f))
    =O(1 + \log(\eta) + \log(1/\Delta))
    .
  \end{align*}
  This implies \refeq{eq:lm:failure_total-1}
  because $O(\log(\eta)e^{-\Omega(\eta)})$
  can be rewritten as $O(e^{-\Omega(\eta)})$ by changing the constant behind $\Omega$.

\section{Proof of Theorem \ref{thm:Bayes-unbiased}}
\label{app:Bayes}
In this section, we prove Theorem \ref{thm:Bayes-unbiased}.
We first briefly review some properties of the beta distribution.
Throughout the section,
we consider a beta distribution with parameters $\alpha, \beta$.
\newcommand{\fbin}{F^{B}}
\newcommand{\fbeta}{F^{beta}}
\begin{lemma}[Fact 1 in \citet{Shipra-colt12}]
  Let $\fbin_{n, p}$ denote the CDF of the binomial distribution with paramters
  $n, p$ and $\fbeta_{\alpha, \beta}$ denote the CDF of the beta distribution.
  Then,
  \begin{align*}
    \fbeta_{\alpha, \beta}(y) = 1-\fbin_{\alpha + \beta - 1, y}(\alpha -1)
  \end{align*}
  for $\alpha, \beta$ that are positive integers.
\end{lemma}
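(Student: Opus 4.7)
The plan is to prove this classical identity via a probabilistic coupling between the Beta distribution and order statistics of uniform random variables. Let $n := \alpha + \beta - 1$, and consider $n$ i.i.d.\ random variables $U_1,\dots,U_n$ drawn uniformly from $[0,1]$. Let $U_{(\alpha)}$ denote their $\alpha$-th order statistic. The key two observations are: (i) $U_{(\alpha)}$ has a Beta$(\alpha,\beta)$ distribution, and (ii) the event $\{U_{(\alpha)} \le y\}$ coincides with the binomial event that at least $\alpha$ of the $U_i$'s fall in $[0,y]$.

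First, I would verify (i) by computing the density of $U_{(\alpha)}$ directly. Partitioning the $n$ samples into ``$\alpha-1$ below $y$, one equal to $y$, $n-\alpha$ above $y$'' gives the multinomial counting factor $\tfrac{n!}{(\alpha-1)!\,(n-\alpha)!}$, so the density of $U_{(\alpha)}$ at $y$ is $\tfrac{n!}{(\alpha-1)!\,(n-\alpha)!}\, y^{\alpha-1}(1-y)^{n-\alpha}$, which is exactly the Beta$(\alpha,\,n-\alpha+1) = $ Beta$(\alpha,\beta)$ density using the integer-parameter Beta normalization $B(\alpha,\beta) = \tfrac{(\alpha-1)!(\beta-1)!}{(\alpha+\beta-1)!}$. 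Next, (ii) is immediate: since the $U_i$'s are independent and each lies in $[0,y]$ with probability $y$, the count $|\{i : U_i \le y\}|$ is Binomial$(n,y)$, and $U_{(\alpha)} \le y$ iff this count is at least $\alpha$.

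Combining (i) and (ii) yields
\[
\fbeta_{\alpha,\beta}(y) \;=\; \Pr\!\big[U_{(\alpha)} \le y\big] \;=\; \Pr\!\big[\mathrm{Bin}(n,y) \ge \alpha\big] \;=\; 1 \;-\; \fbin_{\alpha+\beta-1,\,y}(\alpha-1),
\]
which is the desired identity. There is essentially no obstacle: the only technical step is the combinatorial density computation in (i), which is standard. As an alternative route, one can prove the identity by induction on $\alpha$: starting from $I_\alpha(y) := \int_0^y x^{\alpha-1}(1-x)^{\beta-1}\,dx$ and integrating by parts with $u = (1-x)^{\beta-1}$, $dv = x^{\alpha-1}\,dx$, one obtains a recursion whose unfolding matches the binomial sum $\sum_{i \ge \alpha}\binom{n}{i} y^i(1-y)^{n-i}$ term by term. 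The order-statistic proof is cleaner, so that would be my preferred route.
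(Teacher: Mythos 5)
Your proof is correct. Note, however, that the paper does not prove this lemma at all: it is imported verbatim as Fact~1 from \citet{agrawal2012analysis} and used as a black box (via Corollary~\ref{cor:concentration_beta}) in the proof of Theorem~\ref{thm:Bayes-unbiased}, so there is no in-paper argument to compare against. Your order-statistics derivation is the standard self-contained proof of this classical Beta--Binomial identity: with $n=\alpha+\beta-1$, the $\alpha$-th order statistic of $n$ i.i.d.\ uniforms on $[0,1]$ has density $\tfrac{n!}{(\alpha-1)!\,(n-\alpha)!}\,y^{\alpha-1}(1-y)^{n-\alpha}$, which is exactly the Beta$(\alpha,\beta)$ density since $n-\alpha=\beta-1$ and $B(\alpha,\beta)=\tfrac{(\alpha-1)!\,(\beta-1)!}{(\alpha+\beta-1)!}$ for integer parameters, and the event that this order statistic is at most $y$ coincides with the event that a Binomial$(n,y)$ count is at least $\alpha$, giving $1-F^{B}_{n,y}(\alpha-1)$. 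The only step you state heuristically is the density computation via the ``$\alpha-1$ below, one at $y$, $n-\alpha$ above'' partition; this becomes rigorous by differentiating the CDF obtained from observation (ii), so in fact (ii) plus a one-line differentiation already closes the argument without any combinatorial hand-waving. Your integration-by-parts alternative is equally valid and is the textbook derivation of the regularized incomplete beta function identity; either route fully establishes the lemma.
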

Using Hoeffding's inequality for concentration of the
binomial distribution, we immediately obtain the following corollary.
\begin{corollary}\label{cor:concentration_beta}
  Define
  \begin{math}
    \rho_{\alpha, \beta} := \frac{\alpha-1}{\alpha + \beta-1}.
  \end{math}
  If $X$ is sampled from the beta distribution with parameters $(\alpha, \beta)$,
  \begin{align*}
    \Pr{\abs{X - \rho_{\alpha, \beta}} \le y} \le 2e^{-(\alpha + \beta -1)y^2}
    .
  \end{align*}
  In addition, letting
  $Q(.)$ denote the quantile function of the distribution,
  \begin{align*}
    [Q(\zeta), Q(1-\zeta)] \subseteq
    \sbr{\rho_{\alpha, \beta} - \sqrt{\frac{\ln(2/\zeta)}{\alpha + \beta  -1}}, \rho_{\alpha, \beta} + \sqrt{\frac{\ln(2/\zeta)}{\alpha + \beta  -1}}},
  \end{align*}
\end{corollary}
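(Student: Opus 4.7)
The plan is to derive both parts of the corollary directly from the Beta--Binomial CDF identity given in the preceding lemma, combined with Hoeffding's inequality applied to the binomial distribution. The core idea is that, since $\alpha,\beta$ are positive integers, any tail event for $X\sim\mathrm{Beta}(\alpha,\beta)$ can be rewritten as a tail event for a binomial random variable with $n=\alpha+\beta-1$ trials and a success probability that depends on the tail threshold, at which point Hoeffding delivers the subgaussian bound.

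First I would set up the two tails separately. Fix $y>0$ and write $\rho=\rho_{\alpha,\beta}=(\alpha-1)/(\alpha+\beta-1)$. Using $\fbeta_{\alpha,\beta}(y')=1-\fbin_{\alpha+\beta-1,y'}(\alpha-1)$, the upper tail becomes
\begin{align*}
\Pr{X\ge \rho+y}=\fbin_{\alpha+\beta-1,\,\rho+y}(\alpha-1)=\Pr{Y_{+}\le \alpha-1},
\end{align*}
where $Y_{+}\sim\mathrm{Binomial}(\alpha+\beta-1,\rho+y)$, so that $\EXP[Y_{+}]=\alpha-1+(\alpha+\beta-1)y$. Symmetrically, the lower tail is
\begin{align*}
\Pr{X\le \rho-y}=1-\fbin_{\alpha+\beta-1,\,\rho-y}(\alpha-1)=\Pr{Y_{-}\ge \alpha},
\end{align*}
where $Y_{-}\sim\mathrm{Binomial}(\alpha+\beta-1,\rho-y)$ with $\EXP[Y_{-}]=\alpha-1-(\alpha+\beta-1)y$. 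In each case the binomial deviates by exactly $(\alpha+\beta-1)y$ from its mean.

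Next I would apply Hoeffding's inequality to each binomial sum. For a sum of $n$ independent $[0,1]$ variables, one-sided Hoeffding gives $\Pr{|Y-\EXP[Y]|\ge t}\le 2e^{-2t^{2}/n}$. Plugging in $n=\alpha+\beta-1$ and $t=(\alpha+\beta-1)y$ bounds each of the two tails above by $e^{-2(\alpha+\beta-1)y^{2}}$. A union bound then yields
\begin{align*}
\Pr{|X-\rho|\ge y}\le 2\,e^{-2(\alpha+\beta-1)y^{2}}\le 2\,e^{-(\alpha+\beta-1)y^{2}},
\end{align*}
which is the stated concentration inequality. (I am reading the ``$\le y$'' on the LHS of the corollary as a typo for ``$\ge y$''; otherwise the bound is trivially false for small $y$.)

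Finally, the quantile containment is a direct inversion of the one-sided concentration bound. Setting $y^{*}=\sqrt{\ln(2/\zeta)/(\alpha+\beta-1)}$ makes each one-sided probability $\Pr{X\ge \rho+y^{*}}$ and $\Pr{X\le \rho-y^{*}}$ at most $\zeta$, so by definition of the quantile function $Q(1-\zeta)\le \rho+y^{*}$ and $Q(\zeta)\ge \rho-y^{*}$, which is precisely the claimed inclusion. There is no real obstacle here: the only things to be careful about are the off-by-one in the binomial threshold ($\alpha-1$ versus $\alpha$, which goes the favorable direction in both tails) and keeping the factor of $2$ from the union bound, both of which are absorbed into the slightly loose constants the statement already permits.
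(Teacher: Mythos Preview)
Your proposal is correct and follows exactly the route the paper intends: the paper simply states that the corollary is immediate from the Beta--Binomial CDF identity combined with Hoeffding's inequality for the binomial, and your write-up spells out precisely this argument (including the correct handling of the off-by-one in the lower tail). Your reading of ``$\le y$'' as a typo for ``$\ge y$'' is also right.
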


Let $\alpha_{a, n}, \beta_{a, n}$ denote the posterior distribution after observing
$n$ entries of the tape for arm $a$.
Note that since we are assuming independent priors,
the posterior for each arm is independent of the seen rewards of the other arm.
Define
$M_{a, n} := \alpha_{a, n} + \beta_{a, n}$.
We note that by definition, $\alpha_{a, 0}, \beta_{a, 0}$ coincide with the prior $\alpha_a, \beta_a$.
We analogously define $M_{a} := \alpha_{a} + \beta_{a}$.
Define
$\rho_{a, n} := \frac{\alpha_{a, n} -1}{M_{a, n} - 1}$
and $\xi_{a, n} := \frac{\alpha_{a, n}}{M_{a, n}}$.
We note that $\xi_{a, n}$ is the mean of the posterior distribution after
observing $n$ entries of arm $a$.
\begin{lemma}\label{lm:dev_xi}
  For all $n\ge 0$,
  \begin{math}
    \abs{\muTape_{a, n} - \xi_{a, n}} \le O\rbr{\frac{M_{a, 0}}{n + M_{a, 0}}}.
  \end{math}
\end{lemma}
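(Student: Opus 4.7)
The plan is to reduce the lemma to straightforward algebra using Beta-Bernoulli conjugacy. Since the prior on arm $a$ is $\text{Beta}(\alpha_{a,0},\beta_{a,0})$ and each of the first $n$ entries of the reward-tape is an independent Bernoulli observation, the standard conjugate update gives
\[
\alpha_{a,n} = \alpha_{a,0} + S_n,\qquad \beta_{a,n} = \beta_{a,0} + n - S_n,\qquad M_{a,n} = M_{a,0} + n,
\]
where $S_n := \sum_{i=1}^{n}\tape_{a,i} = n\,\muTape_{a,n}$. I would state this update as the first step and cite (or briefly justify) Beta-Bernoulli conjugacy.

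Next I would plug these into the definition $\xi_{a,n} = \alpha_{a,n}/M_{a,n}$ and compute the difference directly. Putting everything over a common denominator,
\[
\muTape_{a,n} - \xi_{a,n}
= \frac{S_n}{n} - \frac{\alpha_{a,0}+S_n}{M_{a,0}+n}
= \frac{S_n\,M_{a,0} - n\,\alpha_{a,0}}{n\,(M_{a,0}+n)}
= \frac{M_{a,0}\,\muTape_{a,n} - \alpha_{a,0}}{M_{a,0}+n}.
\]
This identity is the whole content of the lemma; once it is in hand, the bound is immediate.

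For the final step, I would note that $\muTape_{a,n}\in[0,1]$ and $\alpha_{a,0}\in[0,M_{a,0}]$, so the numerator has absolute value at most $M_{a,0}$. Hence
\[
\bigl|\muTape_{a,n}-\xi_{a,n}\bigr|
\le \frac{M_{a,0}}{M_{a,0}+n}
= O\!\left(\frac{M_{a,0}}{n+M_{a,0}}\right),
\]
which is the claim. The only edge case is $n=0$, where $\muTape_{a,0}$ must be interpreted by convention (e.g.\ as $\xi_{a,0}$ itself, so that both sides vanish); I would handle this in a single sentence.

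There is no real obstacle here: the proof is pure bookkeeping, and the main thing to be careful about is writing the conjugate update correctly and confirming that the numerator bound does not hide a dependence on $n$. The statement is essentially the observation that the posterior mean is a convex combination of the empirical mean and the prior mean with weights $n/(n+M_{a,0})$ and $M_{a,0}/(n+M_{a,0})$, so the two means differ by at most $M_{a,0}/(n+M_{a,0})$ times the range $[0,1]$.
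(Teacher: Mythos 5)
Your proposal is correct and follows essentially the same route as the paper: write the conjugate update, put the difference over a common denominator, and bound the numerator by $M_{a,0}$ (the paper bounds the two terms $n\alpha_{a,0}$ and $X M_{a,0}$ separately, which is marginally looser but identical in spirit). Your explicit handling of the $n=0$ edge case and the convex-combination interpretation are nice touches but do not change the argument.
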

\begin{proof}
  After observing $n$ entries, the posterior parameters
  satisfy
  \begin{align*}
    \alpha_{a, n} := \alpha_{a, 0} + \sum_{i\le n} \tape_{a, i},\quad
    \beta_{a, n} := \beta_{a, 0} + \sum_{i\le n} (1 - \tape_{a, i})
    .
  \end{align*}
  It follows that
  \begin{align*}
    \xi_{a, n} =
    \frac{\alpha_{a, 0} + \sum_{i\le n} \tape_{a, i}}{\alpha_{a, 0} + \beta_{a, 0} + n}
    .
  \end{align*}
  Defining $X := \sum_{i\le n} \tape_{a, i}$,
  we can bound the difference between $\xi_{a, n}$ and $\muTape_{a, n}$ as
  \begin{align*}
    \abs{
      \frac{\alpha_{a, 0} + X}{M_{a, 0} +n} - \frac{X}{n}
    }
    &=
    \abs{
      \frac{n\alpha_{a, 0} + nX - nX - X M_{a, 0}}{n(n+M_{a, 0})}
    }
    \\&=
    \abs{
      \frac{n\alpha_{a, 0} - X M_{a, 0}}{n(n+M_{a, 0})}
    }
    \\&\le
    \frac{\alpha_{a, 0}}{n + M_{a, 0}} + \frac{M_{a, 0}}{n+M_{a, 0}}
      &\EqComment{Since $X \le n$}
    \\&\le
    O\rbr{\frac{M_{a, 0}}{n + M_{a, 0}}}
  \end{align*}
\end{proof}
\begin{lemma}\label{lm:dev_rho_xi}
  For all $n\ge 0$,
  \begin{math}
    \abs{\xi_{a, n} - \rho_{a, n}} \le
    O\rbr{\frac{1}{n + M_{a, 0}}}.
  \end{math}
\end{lemma}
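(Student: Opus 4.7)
The plan is to perform a direct algebraic computation. The two quantities $\xi_{a,n} = \alpha_{a,n}/M_{a,n}$ and $\rho_{a,n} = (\alpha_{a,n}-1)/(M_{a,n}-1)$ differ only by a ``$-1$'' in the numerator and denominator, so one expects their difference to scale like $1/M_{a,n}$. Since each observation of arm $a$ increments exactly one of $\alpha_{a,n}, \beta_{a,n}$ by $1$, we have the identity
\begin{equation*}
  M_{a,n} = M_{a,0} + n,
\end{equation*}
which is the key structural fact relating the denominator to the quantity $n + M_{a,0}$ appearing on the right-hand side of the lemma.

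Next, I would put the two fractions over a common denominator:
\begin{equation*}
  \xi_{a,n} - \rho_{a,n}
  = \frac{\alpha_{a,n}(M_{a,n}-1) - M_{a,n}(\alpha_{a,n}-1)}{M_{a,n}(M_{a,n}-1)}
  = \frac{M_{a,n} - \alpha_{a,n}}{M_{a,n}(M_{a,n}-1)}
  = \frac{\beta_{a,n}}{M_{a,n}(M_{a,n}-1)}.
\end{equation*}
Since $0 \le \beta_{a,n} \le M_{a,n}$, cancelling one factor of $M_{a,n}$ yields
\begin{equation*}
  \bigl|\xi_{a,n} - \rho_{a,n}\bigr| \;\le\; \frac{1}{M_{a,n}-1} \;=\; \frac{1}{n + M_{a,0} - 1} \;=\; O\!\left(\frac{1}{n + M_{a,0}}\right),
\end{equation*}
which is the desired bound.

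There is essentially no obstacle here: the statement is a one-line manipulation, and the only thing worth flagging is that we need $M_{a,0} \ge 2$ (equivalently $\alpha_a,\beta_a \ge 1$ with $\alpha_a+\beta_a\ge 2$) to avoid division by zero at $n=0$, which is implicit in the paper's assumption that the priors are Beta with integer parameters $\alpha_a,\beta_a \in \mathbb{N}$ (so $M_{a,0}\ge 2$). Under this convention, $M_{a,n}-1 \ge 1$ always, and the $O(\cdot)$ constant can be taken to be $2$.
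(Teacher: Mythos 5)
Your proof is correct and is essentially identical to the paper's: both put the two fractions over the common denominator $M_{a,n}(M_{a,n}-1)$, bound the numerator by $M_{a,n}$, and use $M_{a,n}=M_{a,0}+n$ to conclude. Your added remark about $M_{a,0}\ge 2$ (the paper only invokes $M_{a,0}\ge 1$) is a harmless refinement, not a substantive difference.
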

\begin{proof}
  \begin{align*}
    \abs{
      \frac{\alpha_{a, n} - 1}{M_{a, n}-1} - \frac{\alpha_{a, n}}{M_{a, n}}
    }
    &=
    \abs{
      \frac{-M_{a, n} + \alpha_{a, n}}{M_{a, n}(M_{a, n}-1)}
    }
    \\&\le
      \frac{M_{a, n}}{M_{a, n}(M_{a, n}-1)}
    \\&= \frac{1}{M_{a, n}-1}
    \\&= O\rbr{\frac{1}{n + M_{a, 0}}}
    &\EqComment{Since $M_{a, n} = M_{a, 0} + n$ and $M_{a, 0} \ge 1$}
  \end{align*}
\end{proof}
We can now prove Theorem \ref{thm:Bayes-unbiased}.
\begin{proof}[Proof of Theorem \ref{thm:Bayes-unbiased}]
  We start with part (a).
  Set $\eta$ to be large enough such that
  \begin{align*}
    \abs{\muTape_{a, n} - \xi_{a, n}} \le \sqrt{\frac{\eta}{n}}.
  \end{align*}
  Since
  \begin{math}
    \frac{M_{a}}{n + M_{a}} \le
    \frac{M_{a}}{n},
  \end{math}
  by Lemma \ref{lm:dev_xi},
  this can be achieved with $\eta \ge O(M_a/\sqrt{N_0})$, which proves part (a).

  For part (b),
  set $\eta$ to be large enough such that
  \begin{math}
    \abs{\muTape_{a, n} - \rho_{a, n}} \le \frac{1}{2} \cdot \sqrt{\frac{\eta}{n}}.
  \end{math}
  Given, Lemmas \ref{lm:dev_xi} and \ref{lm:dev_rho_xi},
  this can be achieved with $\eta \ge  O(M_a/\sqrt{N_0})$.
  Since $M-1 \ge n$,
  we can further gaurantee $\frac{\ln(2/\zeta)}{M-1} \le \frac{\eta}{4n}$
  by setting $\eta \ge O(\ln(1/\zeta))$, which finishes
  the proof together with
  Corollary \ref{cor:concentration_beta}.
\end{proof}


\end{document}